\newif\iftr
\newif\ifproofs
\newif\ifappendix
\newcounter{pnum}
\newcounter{cnum}
\newcommand{\printnum}
    {\ifthenelse{\value{cnum}>0}
                {(\alph{cnum})\addtocounter{cnum}{1}}
                {\addtocounter{pnum}{1}(\thepnum)}}
\newcommand{\have}[1]{\printnum & #1}
\newcommand{\just}[1]{& \text{(#1)}}
\newcommand{\cjust}[1]{\quad\hfill \text{(#1)}}
\newcommand{\ljust}[1]{\\ & \hfill \text{(#1)}}
\newenvironment{plines}
               {\setcounter{pnum}{0}
               \setcounter{cnum}{0}
                 \begin{array}{l l r l}
               }
               {
                 \end{array}
               }
\newcommand{\pcase}[1]
           {\setcounter{cnum}{1}
            \\
            & \hspace{-3em}\text{Subcase:}~#1\\
           }
\newcommand{\altdiv}{\mathbin{\sim\hspace{-1.0ex}:}}
\newcommand{\langname}{\ensuremath{\mathsf{PriML}}\xspace}
\newcommand{\calcname}{\ensuremath{\lambda^{4}}\xspace}
\newcommand{\rulename}[1]{\textsc{#1}}
\newcommand{\ctx}{\Gamma}
\newtheorem{thm}{Theorem}
\newtheorem{lem}{Lemma}
\newtheorem{cor}{Corollary}
\theoremstyle{definition}
\newtheorem{defn}{Definition}
\newcommand{\ectx}{\cdot}
\newcommand{\dom}[1]{\mathit{dom}(#1)}
\newcommand{\fc}{C}
\newcommand{\defeq}{\triangleq}
\newcommand{\delay}{\delta}
\newcommand{\priowork}[1]{W_{#1}}
\newcommand{\prioworkof}[2]{\priowork{#2}(#1)}
\newcommand{\psnle}[1]{\not\prec #1}
\newcommand{\longp}[2]{S_{#2}(#1)}
\newcommand{\anc}[2]{#1 \sqsupseteq #2}
\newcommand{\nanc}[2]{#1 \not\sqsupseteq #2}
\newcommand{\neqanc}[2]{#1 \sqsupset #2}
\newcommand{\noancs}[2][\graph]{\ensuremath{\not\uparrow\hspace{-0.3em} #2}}
\newcommand{\nodecs}[2][\graph]{\ensuremath{\not\downarrow\hspace{-0.3em} #2}}
\newcommand{\compwork}[2]{\ensuremath{\not\uparrow\hspace{-0.5em}\downarrow\hspace{-0.3em} #2}}
\newcommand{\exec}[1]{\mathit{Exec}(#1)}
\newcommand{\uprio}[2]{\mathit{Prio}_{#1}(#2)}
\newcommand{\uthread}{\vec{u}}
\newcommand{\tscomp}[1]{\cdot}
\newcommand{\sthread}[1]{#1}
\newcommand{\cthread}[3]{#1 \underset{#2}{\hookrightarrow} #3}
\newcommand{\tgraph}[3]{[\cthread{#1}{#2}{#3}]}
\newcommand{\dthreads}{\mathcal{T}}
\newcommand{\spawns}{E_s}
\newcommand{\syncs}{E_j}
\newcommand{\polls}{E_p}
\newcommand{\dagq}[4]{\ensuremath{(#1, #2, #3)}}
\newcommand{\resptimeof}[1]{\ensuremath{T(#1)}}
\newcommand{\kw}[1]{\mbox{\texttt{#1}}}
\newcommand{\cdparens}[1]{({#1})}
\newcommand{\cdsqbracks}[1]{[#1]}
\newcommand{\cd}[1]{{\lstinline!#1!}}
\newcommand{\kwunit}{\kw{unit}}
\newcommand{\kwnat}{\kw{nat}}
\newcommand{\prodsym}{\ensuremath{\times}}
\newcommand{\kwprod}[2]{\ensuremath{{#1} \prodsym {#2}}}
\newcommand{\sumsym}{\ensuremath{+}}
\newcommand{\kwsum}[2]{\ensuremath{{#1} \sumsym {#2}}}
\newcommand{\arrsym}{\ensuremath{\to}}
\newcommand{\kwarr}[2]{\ensuremath{{#1} \arrsym {#2}}}
\newcommand{\cmdsym}{\ensuremath{\kw{cmd}}}
\newcommand{\kwcmdt}[2]{\ensuremath{#1~\cmdsym \cdsqbracks{#2}}}
\newcommand{\atsym}{\ensuremath{\mathop{\kw{thread}}}}
\newcommand{\kwat}[2]{\ensuremath{#1 \atsym \cdsqbracks{#2}}}
\newcommand{\fasym}{\ensuremath{\forall}}
\newcommand{\kwforall}[3]{\ensuremath{\fasym #1: #2. #3}}
\newcommand{\cons}{C}
\newcommand{\cconj}[2]{\ensuremath{#1 \land #2}}
\newcommand{\prio}{\rho}
\newcommand{\prioc}{\overline{\prio}}
\newcommand{\vprio}{\pi}
\newcommand{\ple}[2]{#1 \preceq #2}
\newcommand{\plt}[2]{#1 \prec #2}
\newcommand{\prel}{\ensuremath{\prec}}
\newcommand{\nple}[2]{#1 \not\preceq #2}
\newcommand{\isprio}{~\kw{prio}}
\newcommand{\worlds}{R}
\newcommand{\prios}{R}
\newcommand{\kwnumeral}[1]{\overline{#1}}
\newcommand{\kwn}{\kwnumeral{n}}
\newcommand{\kwtriv}{\langle \rangle}
\newcommand{\kwfun}[2]{\ensuremath{\lambda #1{.}#2}}
\newcommand{\kwfix}[3]{\ensuremath{\mathop{\kw{fix}}#1{:}#2\mathbin{\kw{is}}#3}}
\newcommand{\kwifz}[4]{\ensuremath{\kw{ifz}~#1~\{#2; #3. #4\}}}
\newcommand{\kwpair}[2]{\ensuremath{\langle{#1},{#2}}\rangle}
\newcommand{\kwepair}[2]{\ensuremath{\cdparens{{#1},{#2}}}}
\newcommand{\kweinl}[1]{\ensuremath{\kw{inl}~#1}}
\newcommand{\kwinl}[1]{\ensuremath{\kw{l}\cdot#1}}
\newcommand{\kweinr}[1]{\ensuremath{\kw{inr}~#1}}
\newcommand{\kwinr}[1]{\ensuremath{\kw{r}\cdot#1}}
\newcommand{\kwapply}[2]{\ensuremath{{#1}~{#2}}}
\newcommand{\kwfst}[1]{\ensuremath{\kw{fst}~#1}}
\newcommand{\kwsnd}[1]{\ensuremath{\kw{snd}~#1}}
\newcommand{\kwcase}[5]{\ensuremath{\kw{case}~#1~\{#2.#3; #4.#5\}}}
\newcommand{\kwbspawn}[3]{\ensuremath{\kw{spawn}\cdsqbracks{#1; #2}~\{#3\}}}
\newcommand{\kwbsync}[1]{\ensuremath{\kw{sync}~#1}}
\newcommand{\kwtid}[1]{\ensuremath{\kw{tid}\cdsqbracks{#1}}}
\newcommand{\kwwapp}[2]{\ensuremath{#1\cdsqbracks{#2}}}
\newcommand{\kwcmd}[2]{\ensuremath{\kw{cmd}\cdsqbracks{#1}~\{#2\}}}
\newcommand{\kwwlam}[3]{\ensuremath{\Lambda #1: #2. #3}}
\newcommand{\kwlet}[3]{\ensuremath{\kw{let}~#1 = #2~\kw{in}~#3}}
\newcommand{\cmd}{m}
\newcommand{\kwbind}[3]{\ensuremath{#2 \leftarrow #1; #3}}
\newcommand{\kwoutput}[1]{\ensuremath{\kw{output}~#1}}
\newcommand{\kwinput}{\ensuremath{\kw{input}}}
\newcommand{\kwret}[1]{\ensuremath{\kw{ret}~#1}}
\newcommand{\fresh}{~\kw{fresh}}
\newcommand{\dassn}{\Delta}
\newcommand{\ceval}[4][\dassn]{#2 \downarrow #3; #4}
\newcommand{\cceval}[9]{#1; #2; #5 \downarrow_{(#3, #4)} #6; #7; #8; #9}
\newcommand{\mceval}[6][\dassn]{#2; #3; #4 \downarrow #6; #5}
\newcommand{\graph}{g}
\newcommand{\ethread}{{[]}}
\newcommand{\sgraph}[1]{[#1]}
\newcommand{\scompsym}{\oplus}
\newcommand{\scomp}[1]{\scompsym_{#1}}
\newcommand{\gscomp}[1]{\overline{\scompsym}_{#1}}
\newcommand{\egraph}{\emptyset}
\newcommand{\dthread}[3]{\ensuremath{#1 \xhookrightarrow[#2]{} #3}}
\newcommand{\mthread}[4][]{\ensuremath{#2 \xhookrightarrow[#3]{} #4}}
\newcommand{\mcp}{\mathop{\uplus}}
\newcommand{\mem}{\mu}
\newcommand{\sig}{\Sigma}
\newcommand{\esig}{\cdot}
\newcommand{\tsig}{\sigma}
\newcommand{\emem}{\emptyset}
\newcommand{\meq}{\ensuremath{\equiv}}
\newcommand{\mbconfig}[2]{\ensuremath{\nu #1 \{ #2 \}}}
\newcommand{\rpconfig}[3]{\ensuremath{(#1, #2, #3)}}
\newcommand{\execstosymbol}[2]{\ensuremath{\xRightarrow[#1]{#2}}}
\newcommand{\gstep}[2]{\mathrel{\execstosymbol{#1}{#2}}}
\newcommand{\pgstep}[1]{\mathrel{\execstosymbol{P}{#1}}}
\newcommand{\estep}[1]{\ensuremath{\to}}
\newcommand{\tstep}[1][\prel]{\ensuremath{\Rightarrow}}
\newcommand{\mstep}[1][\prel]{\ensuremath{\Rrightarrow}}
\newcommand{\val}[1]{\ensuremath{~\kw{val}_{#1}}}
\newcommand{\final}[1]{\ensuremath{~\kw{final}}}
\newcommand{\atyped}[3][\worlds]{\vdash^{#1}_{#2} #3~\mathsf{action}}
\newcommand{\llblstepstosymbol}[2]{\ensuremath{\xmapsto[#1]{#2}}}
\newcommand{\lstep}[3][\prio]{\mathrel{\llblstepstosymbol{#2}{#3}}}
\newcommand{\asil}{\epsilon}
\newcommand{\act}{\alpha}
\newcommand{\acts}{\ensuremath{A}}
\newcommand{\async}[2]{#1 \mathbin{?}#2}
\newcommand{\tact}[2]{#1/#2}
\newcommand{\asend}[2]{\mathbin{!}#2}
\newcommand{\hastype}[2]{#1 \mathop{:} #2}
\newcommand{\sigtype}[3]{#1 \mathord{\sim} #2 \mathord{@} #3}
\newcommand{\sigcent}[3]{#1\hookrightarrow (#2, #3)}
\newcommand{\tsigent}[3]{#1\hookrightarrow (#2, #3)}
\newcommand{\etyped}[5][\worlds]{#3 \vdash^{#1}_{#2} #4 : #5}
\newcommand{\cmdtyped}[6][\worlds]{#3 \vdash^{#1}_{#2} #4 \altdiv #5 \mathbin{@} #6}
\newcommand{\tsigtyped}[3][\worlds]{\vdash^{#1}_{#2} #3}
\newcommand{\mtyped}[4]{\vdash_{#2}^{#1} #3 : #4}
\newcommand{\meetc}[3][\worlds]{#2 \vdash^{#1} #3}
\newcommand{\secref}[1]{Section~\ref{sec:#1}}
\newcommand{\figref}[1]{Figure~\ref{fig:#1}}
\renewcommand{\eqref}[1]{Equation~(\ref{eq:#1})}
\newcounter{remark}[section]
\newcommand{\code}[1]{\lstinline!#1!}
\keywords{Parallelism, Concurrency, Priorities, Cost Semantics}
\newtheorem{lemma}{Lemma}
\begin{document}

%\iftr
%\author{Stefan K. Muller \and Umut A. Acar \and Robert Harper}
%\date{April 2017}
%\trnumber{CMU-CS-17-107}

%\citationinfo{A version of this work appears in the proceedings of the
%ACM SIGPLAN Conference on Programming Language Design and Implementation
%(PLDI 2017).}

%\support{This research is partially supported
%grants from the National Science Foundation (CCF-1320563, CCF-1408940,
%CCF-1629444) and European Research Council (grant
%ERC-2012-StG-308246), and by a gift from Microsoft Research.}

%\else
%\toappear{}

%\special{papersize=8.5in,11in}
%\setlength{\pdfpageheight}{\paperheight}
%\setlength{\pdfpagewidth}{\paperwidth}

%\conferenceinfo{CONF 'yy}{Month d--d, 20yy, City, ST, Country}
%\copyrightyear{2018}
%\copyrightdata{978-1-nnnn-nnnn-n/yy/mm}
%\copyrightdoi{nnnnnnn.nnnnnnn}
\iftr
\author[1]{Stefan K. Muller\thanks{smuller@cs.cmu.edu}}
%\affil[1]{Carnegie Mellon University}
\author[2]{Umut A. Acar\thanks{umut@cs.cmu.edu}}
\author[3]{Robert Harper\thanks{rwh@cs.cmu.edu}}
\affil[1,2,3]{Carnegie Mellon University}
\affil[2]{Inria}
\date{July 2018}
\else
\author{Stefan K. Muller}
\affiliation{\institution{Carnegie Mellon University}\country{USA}}
\email{smuller@cs.cmu.edu}

\author{Umut A. Acar}
\affiliation{\institution{Carnegie Mellon University}\country{USA}}
\affiliation{\institution{Inria}\country{France}}
\email{umut@cs.cmu.edu}

\author{Robert Harper}
\affiliation{\institution{Carnegie Mellon University}\country{USA}}
\email{rwh@cs.cmu.edu}
\fi

%\title{Cooperative Multithreading with Priorities}
\title{Competitive Parallelism: Getting Your Priorities Right}
%\title{Responsive Parallelism with Partially Ordered Priorities}

%\ifproofs
%\subtitle{Extended Supplementary Version}
%\fi

\begin{abstract}
  Multi-threaded programs have traditionally fallen into one of two domains:
  {\em cooperative} and {\em competitive}.
  %threading, which maximizes the throughput of a large number
  %of equally important threads, and competitive threading, which focuses on
  %responsiveness of threads which may have different requirements and
  %priorities.
%
These two domains have traditionally remained mostly disjoint, with
cooperative threading used for increasing {\em throughput} in
compute-intensive applications such as scientific workloads and
cooperative threading used for increasing {\em responsiveness} in
interactive applications such as GUIs and games.
% usually using a single
%processors.
%
As multicore hardware becomes increasingly mainstream, there is a need
for bridging these two disjoint worlds, because many applications mix
interaction and computation and would benefit from both cooperative
and competitive threading.

In this paper, we present techniques for programming and reasoning
about {\em parallel interactive} applications that can use both
cooperative and competitive threading.
Our techniques enable the programmer to write rich parallel
interactive programs by creating and synchronizing with threads as
needed, and by assigning threads user-defined and partially ordered
priorities.
To ensure important responsiveness properties, we present a modal type
system analogous to S4 modal logic that precludes low-priority threads from
delaying high-priority threads, thereby statically preventing a
crucial set of {\em priority-inversion} bugs.
%, which can have serious
%and even disastrous consequences.
%
We then present a cost model that allows reasoning about
responsiveness and completion time of well-typed programs.
The cost model extends the traditional work-span model for
cooperative threading to account for competitive scheduling decisions
needed to ensure responsiveness.
Finally, we show that our proposed techniques are realistic by
implementing them as an extension to the Standard ML language.
\end{abstract}

\maketitle

%\category{D.3.3}{Language Constructs and Features}{Input/output}
%\category{F.1.2}{Modes of Computation}{Parallelism and concurrency}

%\iftr
%\renewcommand*{\thepage}{title-\arabic{page}}
%\maketitle
%\renewcommand*{\thepage}{\arabic{page}}
%\fi

\section{Introduction}
The increasing proliferation of multicore hardware
% is bringing parallel
%programming to the masses, and sparking
has sparked a renewed interest in programming-language support for
{\em cooperative threading}.
In cooperative threading, threads correspond to pieces of a job and
are scheduled with the goal of completing the job as quickly as
possible---or to maximize {\em throughput}. 
Cooperative thread scheduling algorithms are therefore usually
non-preemptive: once a thread starts executing, it is allowed to
continue executing until it completes.

Cooperatively threaded languages such as NESL~\citep{nesl-94},
Cilk~\citep{FrigoLeRa98}, parallel
Haskell~\citep{haskell-dp-2007,Keller+2010} and parallel
ML~\citep{manticore-implicit-11,JagannnathanNaSiZi10,RMAB-mm-2016},
have at least two important features:
\begin{itemize}
\item The programmers can express opportunities for parallelism at a
  high level with relatively simple programming abstractions such as
  {\em fork/join} and {\em async/finish}.
The run-time system of the language then handles the creation and
scheduling of the threads.

\item The efficiency and performance of parallel programs written at
  this high level can be analyzed by using cost models based on {\em
    work} and {\em span} (e.g.~\citep{BlellochGr95,BlellochGr96,EagerZaLa89,
    SpoonhowerBlHaGi08}), which can guide efficient implementations.%.
  %The schedulers used by these
  %languages can usually closely match these cost models.
\end{itemize}

Cooperative threading is elegant and expressive but it mostly excludes the
important class of {\em interactive applications}, which require
communication with the external world, including users and other
programs.
Such interactive applications typically require {\em responsiveness},
such as the requirement to process user input as soon as possible.
Ensuring responsiveness usually requires {\em competitive threading},
where threads are scheduled pre-emptively, usually based on priorities.
To guarantee responsiveness, most competitive threading libraries in use
today expose a fixed range of numerical priorities which may be assigned
to threads.
Regardless of the threading primitives used, this greatly complicates the
task of writing programs:
\begin{itemize}
\item Writing effective competitively threaded programs requires
  assigning priorities to threads.  While this can be simple for
  simple programs, using priorities at scale is a big challenge
  because most current approaches to priorities are inherently
  anti-modular.  Because priorities are totally ordered, writing
  responsive programs might require reasoning about whether a thread
  should be given a higher or lower priority than a thread introduced
  in another part of the program, or possibly even in a library
  function.

\item To compensate for this lack of modularity, many systems expose large
  numbers of priorities: the POSIX threads (pthreads) API exposes scheduling
  policies with as many as 100 levels. Without clean guidelines governing
  their use, however, programmers must still reason globally about how to
  assign these numbers to threads. Studies have shown that programmers
  struggle to use systems with even 7 priorities~\citep{hauserjathwewe93}.

\item Reasoning about performance is much more difficult: the clean
  work-span model of cooperative threading does not apply to
  competitive threading, because of the impact of priorities on
  run-time. Furthermore, in competitive threading, {\em priority
    inversions}, where a low-priority thread delays a high-priority
  one, can have harmful and even disastrous consequences. For example, ``Mars
    Pathfinder'', which landed on Mars on 4 July
    1997, suffered from a software bug, traced to a priority inversion,
    that caused the craft to reset
    itself periodically. The bug
    had to be patched remotely so the mission could continue.

\end{itemize}

In this paper, we develop language techniques and a cost model for
writing parallel interactive programs that use a rich set of
cooperative and competitive threading primitives.
This problem is motivated by the fact that as shared-memory hardware
becomes widely used, competitively threaded, interactive applications
will need to take advantage of the benefits of this parallel hardware,
and not just cooperatively threaded, compute-intensive
applications.

We present a programming language with features for {\em spawning} and
{\em syncing} with asynchronous threads, which may be assigned priorities
by the programmer.
Aside from priorities, these threads are equivalent to
futures, a powerful general-purpose cooperative threading mechanism.
Like futures, threads are first-class values in the language.
To enable modular programming with priorities, we allow the programmer
to declare any number of priorities and define a partial order between
them.
The resulting language is sufficiently powerful to enable both
cooperative and competitive threading.
For example, the programmer can write a purely compute intensive
program (e.g., parallel Quicksort), a purely interactive program
(e.g. a simple graphical user interface), and anything that combines
the two (e.g. an email client that sorts tens of thousands of emails in
parallel in the background while remaining responsive to user interaction
events).

To reason about the efficiency and responsiveness of the programs
written in this language, we present a cost model that bounds
both the total computation time of the program and the
response time of individual threads.
Our cost semantics extends prior cost models of cooperative parallel
programs %~\citep{BlellochGr96,SpoonhowerBlHaGi08, MullerAcHa17}
to enable reasoning about the response time of threads with
partially-ordered priorities.
The main theoretical result of the paper shows that the response time of
a thread does not depend on the amount of computation performed at lower
priorities for any program in which threads do not sync on threads
of lower priority.
Such a sync clearly allows the response time of a high-priority thread
to depend on low-priority work and is an example of the classic problem
of priority inversions described above.

Our prior work on extending cooperative threading with
priorities~\citep{MullerAcHa17} also observed that priority inversions
prevent responsiveness guarantees and presented static mechanisms for
avoiding them.
That work, however, considers only two priorities (high and low).
Research in languages such as Ada~\citep{cornhillsh87, levine88}
also discusses the importance of preventing priority inversion in a
general setting with rich priorities, but we are aware of no prior
static language mechanisms for doing so.

To guarantee appropriate bounds on responsiveness, we specify a type
system that statically identifies and prevents priority inversions that
would render such an analysis impossible.
The type system enforces a monadic separation between {\em commands}, which
are restricted to run at a certain priority, and {\em expressions}, which are
priority-invariant.
The type system then tracks the priorities of threads and
rejects programs in which a high-priority thread may synchronize with a
lower-priority one.
In developing this system, we draw inspiration from modal logics, where
the ``possible worlds'' of the modal logic correspond to priorities in our
programs.
More specifically, our type system is analogous to S4 modal logic,
where the accessibility relation between worlds is assumed to be
reflexive and transitive.
This accessibility relation reflects the fact that the ways in which
priorities are intended to interact is inherently asymmetric.
Modal logic has proved to be effective in many problems of computer
science. For example, \citet{murphy04}, and~\citet{JiaWa04}
use the modal logic S5, where the accessibility relation between
worlds is assumed to be symmetric (as well as reflexive and transitive), to
model distributed computing.

The dynamic semantics of our language is a transition system that
simulates, at an abstract level, the execution of a program
on a parallel machine.
We show that, for well-typed programs, our cost model accurately predicts
the response time of threads in such an execution.
Finally, we show that the proposed techniques can be incorporated into
a practical language by implementing a compiler which typechecks
prioritized programs and compiles them to a parallel version of
Standard ML.
We also provide a runtime system which schedules threads
according to their priorities. 

%% 
%This paper shows that a reasonably broad class of multithreaded
%programs can safely use priorities.
%
The specific contributions of this paper include the following.
\begin{itemize}
\item An extension of the Parallel ML language, called {\langname},
  with language constructs for user-defined, partially ordered priorities.

\item A core calculus {\calcname} that captures the essential ideas of
  {\langname} and a type system that guarantees inversion-free use of
  threads.

\item A cost semantics for {\calcname} which can be used to make
  predictions about both overall computation time and responsiveness, and a
  proof that these predictions are accurately reflected by the
  dynamic semantics.

\item
An implementation of the compiler and the runtime system for
{\langname} as an extension of the Parallel MLton compiler.

\item
  Example benchmarks written in our implementation that give
  preliminary qualitative evidence for the practicality of the
  proposed techniques.
\end{itemize}

%%% Local Variables:
%%% mode: latex
%%% TeX-master: "main"
%%% End:

\section{Overview}\label{sec:overview}
We present an overview of our approach to multithreaded programming
with priorities by using a language called {\langname} that extends
Standard ML with facilities for prioritized multithreaded
programming.
As a running example, we consider an email client
which interacts with a user while performing other necessary
tasks in the background.
The purpose of this section is to highlight the main
ideas.
The presentation is therefore high-level and sometimes informal.
The rest of the paper formalizes these ideas (\secref{lang}),
expands on them to place performance bounds on {\langname} programs
(\secref{cost}) and
describes how they may be realized in practice (\secref{impl}).

\paragraph{Priorities.}

{\langname} enables the programmer to define priorities as needed and
specify the relationships between them.
For example, in our mail client, we sometimes wish to alert the user
to certain situations (such as an incoming email) and
we also wish to compress old emails in the background when the system is idle.
To express this in {\langname}, we define two priorities \cd{alert}
and \cd{background} and order them accordingly as follows.
\begin{lstlisting}[numbers=none]
priority alert
priority background
order background < alert
\end{lstlisting}
The ordering constraint specifies that \cd{background} is lower
priority than \cd{alert}.
Programmers are free to specify as many, or as few, ordering constraints
between priorities as desired.
{\langname} therefore provides support for a set of partially ordered
priorities.
Partially ordered priorities suffice to capture the intuitive notion
of priorities, and to give the programmer flexibility to
express any desired priority behavior, but without the burden of having to
reason about a total order over all priorities.
Consider two priorities \cd{p} and \cd{q}.
If they are ordered, e.g., \cd{p < q}, then the system is instructed to run
threads with priority~\cd{q} over threads with priority~\cd{p}.
If no ordering is specified (i.e.~\cd{p} and~\cd{q} are incomparable in the
partial order), then the system is free to choose arbitrarily between
a thread with priority \cd{p} and another with priority \cd{q}.

\paragraph{Modal type system.}
To ensure responsive use of priorities, {\langname} provides a modal type
system that tracks priorities. The types of
{\langname} include the standard types of functional
programming languages as well as
a type of thread handles, by which computations can refer to, and
synchronize with, running threads.

To support computations that can operate at multiple priorities, the
type system supports {\em priority polymorphism} through
a polymorphic type of the form $\forall \pi: C.\tau$, where $\pi$ is
a newly bound priority variable, and $C$ is a set of constraints of the form
$\ple{\prio_1}{\prio_2}$ (where~$\prio_1$ and~$\prio_2$ are priority constants
or variables, one of which will in general be~$\pi$), which bounds the
allowable instantiations of~$\pi$.

To support the tracking of priorities, the
syntax and type system of {\langname} distinguish between commands
and expressions.
{\em Commands} provide the constructs for spawning and synchronizing
with threads.
{\em Expressions} consist of an ML-style functional language, with some
extensions. Expressions cannot directly
execute commands or interact with threads, and can thus be evaluated without
regard to priority. Expressions can, however, pass around encapsulated
commands (which have a distinguished type)
and abstract over priorities to introduce priority-polymorphic
expressions.

\paragraph{Threads}
Once declared, priorities can be used to specify the priority of
threads.
For example, in response to a request from the user, the mail client
can spawn a thread to sort emails for background compression, and spawn
another thread to alert the user about an incoming email. Spawned threads
are annotated with a priority and run asynchronously with the rest of the
program.

\begin{lstlisting}[numbers=none]
  spawn[background] { ret (sort ...) };
  spawn[alert] { ret (display ``Incoming mail!'') }
\end{lstlisting}

The~\cd{spawn} command takes a command to run in the new thread and
returns a handle to the spawned thread. In the above
code, this handle is ignored, but it can also be bound to a variable
using the notation \texttt{x <- m;} and
used later to synchronize with the thread (wait for it to complete).

\begin{lstlisting}[numbers=none]
  spawn[background] { ret (sort ...) };
  alert_thread <- spawn[alert] { ret (display ``New mail received'') };
  sync alert_thread
\end{lstlisting}

\paragraph{Example: priority-polymorphic multithreaded quicksort.}
Priority polymorphism allows prioritized code to be compositional.
For example, several parts of our email client might wish to use
a library function \cd{qsort} for sorting (e.g., the background thread
sorts emails by date to decide which ones to compress and a higher-priority
thread sorts emails by subject when the user clicks a column header.)
Quicksort is easily parallelized, and so the library code spawns threads
to perform recursive calls in parallel. The use of threads, however,
means that the code must involve priorities and cannot be purely an
expression.
Because sorting is a basic function and may be used at many priorities,
We would want the code for \cd{qsort} to be polymorphic over
priorities. %, allowing it to be executed at all suitable priorities.
This is possible in {\langname} by defining \cd{qsort} to operate
at a priority defined by an unrestricted priority variable.

\begin{figure}
\begin{lstlisting}
fun[p] qsort (compare: 'a * 'a -> bool) (s: 'a seq) : 'a seq cmd[p]  =
  if Seq.isEmpty s then
    cmd[p] {ret Seq.empty}
  else
    let val pivot = Seq.sub(s, (Seq.length s) / 2)
        val (s_l, s_e, s_g) = Seq.partition (compare pivot) s
    in
      cmd[p]
      {
        quicksort_l <- spawn[p] {do ([p]qsort compare s_l)};
        quicksort_g <- spawn[p] {do ([p]qsort compare s_g)};
        ss_l <- sync quicksort_l;
        ss_g <- sync quicksort_g;
        ret (Seq.append [ss_l, s_e, ss_g])
      }
    end
\end{lstlisting}
\caption{Code for multithreaded quicksort, which is priority
  polymorphic.}
\label{fig:quicksort-code}
\end{figure}

\figref{quicksort-code} illustrates the code for a multithreaded
implementation of Quicksort in {\langname}.
The code uses a module called \cd{Seq} which implements some basic
operations on sequences.
In addition to a comparison function on the elements of the sequence
that will be sorted and the sequence to sort, the function takes as
an argument a priority $p$, to which the body of the function may refer
(e.g. to spawn threads at that priority)\footnote{Note that, unlike type-level
parametric polymorphism in languages such as ML, which can be left implicit
and inferred during type checking, priority parameters in
{\langname} must be specified in the function declaration.}.
The implementation of \cd{qsort} follows a standard implementation
of the algorithm but is structured according to the type system of
{\langname}.
This can be seen in the return type of the function, which is an
encapsulated command
at priority~\cd{p}.

The function starts by checking if the sequence is empty. If so, it
returns a command that returns an empty sequence.
If the sequence is not empty, it partitions the sequence into sub-sequences
consisting of elements less than, equal to and greater than, a pivot,
chosen to be the middle element of the sequence.
It then returns a command that sorts the sub-sequences in parallel,
and concatenates the sorted sequences to produce the result.
To perform the two recursive calls in parallel, the function \cd{spawn}s two
threads, specifying that the threads operate at priority~\cd{p}.

This code also highlights the interplay between expressions and commands in
{\langname}. The expression \cd{cmd[p] {m}} introduces an encapsulated command,
and the command \texttt{do e} evaluates \texttt{e} to an
encapsulated command, and then runs the command.

\paragraph{Priority Inversions.}
The purpose of the modal type system is to prevent priority inversions, that is,
situations in which a thread synchronizes with a thread of a lower priority.
An illustration of such a situation appears in Figure~\ref{fig:prio1-illtyped}.
This code shows a portion of the main event loop of the email client, which
processes and responds to input from the user. The event loop runs at a high
priority. If the user sorts the emails by date, the loop spawns a new thread,
which calls the priority-polymorphic sorting function. The code instantiates
this function at a lower priority~\code{sort_p}, reflecting the programmer's
intention that the sorting, which might take a significant fraction of a second
for a large number of emails, should not delay the handling of new events.
Because syncing with that thread immediately afterward
(line~\ref{line:prio1-sync})
causes the remainder of the event loop (high-priority) to wait
on the sorting thread (lower priority), this code will be correctly rejected
by the type system.
The programmer could instead write the code as shown in
Figure~\ref{fig:prio1-welltyped}, which displays the sorted list in the new
thread, allowing the event loop to continue processing events. This code does
not have a priority inversion and is accepted by the type system.

\begin{figure}
\begin{subfigure}[t]{0.45\textwidth}
\begin{lstlisting}[escapeinside={@}{@}]
priority loop_p
priority sort_p
order sort_p < loop_p

fun loop emails : unit cmd[loop_p] =
  case next_event () of
  SORT_BY_DATE =>
    cmd[loop_p] {
      t <- spawn[sort_p] {
        do ([sort_p]qsort
                  date emails)};
      l <- sync t;@\label{line:prio1-sync}@
      ret (display_ordered l)
    }
    | ...
\end{lstlisting}
\caption{Ill-typed event loop code}
\label{fig:prio1-illtyped}
\end{subfigure}
\hfill
\begin{subfigure}[t]{0.48\textwidth}
\begin{lstlisting}[escapeinside={@}{@}]
priority loop_p
priority sort_p
order sort_p < loop_p

fun loop emails : unit cmd[loop_p] =
  case next_event () of
  SORT_BY_DATE =>
    cmd[loop_p] {
      spawn[sort_p] {
        l <- do ([sort_p]qsort
                   date emails);
        ret (display_ordered l)
      }
    }
    | ...
\end{lstlisting}
\caption{Well-typed event loop code}
\label{fig:prio1-welltyped}
\end{subfigure}
\caption{Two implementations of the event loop, one of which displays a
priority inversion.}
\label{fig:prio1}
\end{figure}

Although the priority inversion of Figure~\ref{fig:prio1-illtyped} could easily be
noticed by a programmer, the type system also rules out
more subtle priority inversions.
Consider the ill-typed code in Figure~\ref{fig:prio2}, which shows
another way in which a programmer might choose to implement the event loop.
In this implementation, the event loop spawns two threads.
The first (at priority~\code{sort\_p}) sorts the emails, and the second
(at priority~\code{display\_p}) calls a
priority-polymorphic function~\code{[p]disp}, which takes a
sorting thread at priority~\cd{p}, waits for it to complete, and displays
the result. This type of ``chaining'' is a common idiom in programming with
futures, but this attempt has gone awry because the thread at
priority~\code{display\_p} is waiting on the lower-priority sorting thread.
Because of priority polymorphism, it may not be immediately clear where
exactly the priority inversion occurs, and yet this code will still be
correctly rejected by the type system. The type error is on
line~\ref{line:prio2}:
\begin{verbatim}
constraint violated at 9.10-9.15: display_p <= p_1
\end{verbatim}
This~\code{sync} operation is passed a thread of
priority~\code{p} (note from the function signature that the types of thread
handles explicitly track their priorities), and there is no guarantee
that~\cd{p} is higher-priority
than~\cd{display_p} (and, in fact, the instantiation on
line~\ref{line:prio2-call} would violate this constraint). We may
correct the type error in the~\code{disp} function by adding this constraint
to the signature:

\begin{lstlisting}[numbers=none]
fun[p : display_p <= p] disp (t: email seq thread[p]) : unit cmd[display_p] =
\end{lstlisting}

With this change, the instantiation on line~\ref{line:prio2-call} would become
ill-typed, as it should because this way of structuring the code inherently has
a priority inversion. The event loop code should be written as in
Figure~\ref{fig:prio1-welltyped} to avoid a priority inversion. However, the
revised~\code{disp} function could still be called on a higher-priority thread
(e.g. one that checks for new mail).

\begin{figure}
\begin{lstlisting}[escapeinside={@}{@}]
priority loop_p
priority display_p
priority sort_p
order sort_p < loop_p
order sort_p < display_p

fun[p] disp (t : email seq thread[p]) : unit cmd[display_p] =
  cmd[display_p] {
    l <- sync t;@\label{line:prio2}@
    ret (display_ordered l)
  }

fun loop emails : unit cmd[loop_p] =
  case next_event () of
  SORT_BY_DATE =>
    cmd[loop_p] {
      t <- spawn[sort_p] { do ([sort_p]qsort date emails) };
      spawn[display_p] { do ([sort_p]disp t) } @\label{line:prio2-call}@
    }
    | ...
\end{lstlisting}
\caption{An ill-typed attempt at chaining threads together.
%  The high-priority
  %display thread is waiting on the lower-priority sorting thread.
}
\label{fig:prio2}
\end{figure}

Note that the programmer could also fix the type error in both
versions of the code by spawning the sorting thread at a higher priority.
This change, however, betrays the
programmer's intention (clearly stated in the priority annotations)
that the sorting should be lower priority. The purpose of
the type system, as with all such programming language mechanisms, is
not to relieve programmers entirely of the burden of thinking about
the desired behavior of their code, but rather to ensure that the code
adheres to this behavior if it is properly specified.

\section{The {\calcname} calculus}\label{sec:lang}
\begin{figure}
\[
\begin{array}{llll}

  \mathit{Types} & \tau & \bnfdef &
  \kwunit \bnfalt
  \kwnat \bnfalt
  \kwarr{\tau}{\tau} \bnfalt
  \kwprod{\tau}{\tau} \bnfalt
  \kwsum{\tau}{\tau} \bnfalt
  \kwat{\tau}{\prio} \bnfalt
  \kwcmdt{\tau}{\prio} \bnfalt
  \kwforall{\vprio}{\cons}{\tau}
  \\

  \mathit{Priorities} & \prio & \bnfdef &
  \prioc \bnfalt
  \vprio
  \\

  \mathit{Constrs.} & \cons & \bnfdef &
  \ple{\prio}{\prio} \bnfalt
  \cconj{\cons}{\cons}
  \\

  \mathit{Values} & v & \bnfdef & x \bnfalt
  \kwtriv \bnfalt
  \kwnumeral{n} \bnfalt
  \kwfun{x}{e} \bnfalt
  \kwpair{v}{v} \bnfalt
  \kwinl{v} \bnfalt
  \kwinr{v} \bnfalt
  \kwtid{a} \bnfalt
  \kwcmd{\prio}{\cmd} \bnfalt
  \kwwlam{\vprio}{\cons}{e}
  \\

  \mathit{Exprs.} & e & \bnfdef & v \bnfalt
  \kwlet{x}{e}{e} \bnfalt
  \kwifz{v}{e}{x}{e} \bnfalt
  \kwapply{v}{v} \\ & & & \bnfalt
  \kwepair{v}{v} \bnfalt
  \kwfst{v} \bnfalt
  \kwsnd{v} \bnfalt
  \kweinl{v} \bnfalt
  \kweinr{v} \bnfalt
  \kwcase{v}{x}{e}{y}{e} \\ & & & \bnfalt
  \kwoutput{v} \bnfalt
  \kwinput \bnfalt
  \kwwapp{v}{\prio} \bnfalt
  \kwfix{x}{\tau}{e}
  \\

  \mathit{Commands} & \cmd & \bnfdef &
  \kwbind{e}{x}{\cmd} \bnfalt
  \kwbspawn{\prio}{\tau}{\cmd} \bnfalt
  \kwbsync{e} \bnfalt
  \kwret{e}\\
\end{array}
\]
\caption{Syntax of {\calcname}}
\label{fig:syn}
\end{figure}

In this section, we define a core calculus {\calcname} which captures the key
ideas of a language with an ML-style expression layer and a modal layer of
prioritized asynchronous threads.
\ifproofs
\else
Some straightforward rules and proof details which are omitted from this
section for space reasons are available in the extended version~\citep{partial-prio-tr}.
\fi
Figure~\ref{fig:syn} presents the abstract
syntax of {\calcname}. In addition to the unit type, a type of natural numbers,
functions, product types and sum types, {\calcname} has three special types.
The type~$\kwat{\tau}{\prio}$ is used for a handle to an asynchronous
thread running at priority~$\prio$ and returning a value of type~$\tau$.
The type~$\kwcmdt{\tau}{\prio}$ is used for an encapsulated command.
The calculus also has a type~$\kwforall{\vprio}{\cons}{\tau}$ of
priority-polymorphic expressions. These types are annotated with
a constraint~$\cons$ which restricts the instantiation of the bound priority
variable. For example, the
abstraction~$\kwwlam{\vprio}{\ple{\vprio}{\prioc}}{e}$
can only be instantiated with priorities~$\prioc'$ for
which~$\ple{\prioc'}{\prioc}$.

A priority~$\prio$ can be either a
priority constant, written~$\prioc$, or a priority variable~$\vprio$. Priority
constants will be drawn from a pre-defined set, in much the same way that
numerals~$\kwnumeral{n}$ are drawn from the set of natural numbers. The set
of priority constants (and the partial order over them)
will be determined statically and is a parameter to the
static and dynamic semantics. This is a key difference between the calculus
{\calcname} and {\langname}, in which the program can define new priority
constants
(we discuss in Section~\ref{sec:impl} how a compiler can hoist priority
definitions out of the program).
%We will discuss in Section~\ref{sec:elab} how the priority
%definitions of {\langname} may be hoisted out of the program to produce
%{\calcname} programs.

%Expressions and commands are distinguished syntactically.
As in {\langname}, the syntax is separated into expressions, which do not
involve priorities, and commands which do.
For simplicity, the expression language is in ``2/3-cps'' form: we distinguish
between expressions and values, and expressions take only values as arguments
when this would not interfere with evaluation order. An expression with
unevaluated subexpressions, e.g.~$\kwepair{e_1}{e_2}$ can be expressed using
let bindings as~$\kwlet{x}{e_1}{\kwlet{y}{e_2}{\kwepair{x}{y}}}$.
Values consist of
the unit value~$\kwtriv$, numerals~$\kwnumeral{n}$, anonymous
functions~$\kwfun{x}{e}$, pairs of values, left- and right-injection of values,
thread identifiers,
encapsulated commands~$\kwcmd{\prio}{\cmd}$ and priority-level
abstractions~$\kwwlam{\vprio}{\cons}{e}$.

Expressions include values, let binding, the if-zero
conditional~$\kwifz{e}{e_1}{x}{e_2}$ and function application. There are
also additional expression forms for pair introduction and left- and
right-injection. These are~$\kwepair{v_1}{v_2}$,~$\kweinl{v}$ and~$\kweinr{v}$,
respectively. One may think of these forms as the source-level instructions
to allocate the pair or tag, and the corresponding value forms as the actual
runtime representation of the pair or tagged value (separating the two
will allow us to account for the cost of performing the allocation).
Finally, expressions include the case construct~$\kwcase{e}{x}{e_1}{y}{e_2}$,
output, input, priority instantiation~$\kwwapp{v}{\prio}$ and fixed points.

 Commands are combined using the binding
construct~$\kwbind{e}{x}{\cmd}$, which evaluates~$e$ to an encapsulated
command, which it executes, binding its return
value to~$x$, before continuing with command~$\cmd$. Spawning a thread
and synchronizing with a thread are also commands.
The spawn command~$\kwbspawn{\prio}{\tau}{\cmd}$ is parametrized by both
a priority~$\prio$ and the type~$\tau$ of the return value of~$\cmd$ for
convenience in defining the dynamic semantics.

\subsection{Static Semantics}\label{sec:statics}
The type system of {\calcname} carefully tracks the priorities of threads as
they wait for each other and enforces that a
program is free of priority inversions.
This static guarantee will ensure that we can derive cost guarantees from
well-typed programs.
%Even if such a strong static guarantee
%is not desired, tracking the priorities of threads statically gives programmers
%the ability to reason about priority inversions and where they might occur.

\begin{figure}
\[
\begin{array}{c}

\infer[var]
{
  \strut
}
{
  \etyped{\sig}{\ctx, \hastype{x}{\tau}}{x}{\tau}
}

\qquad

\infer[\kwunit I]
{
\strut
}
{
\etyped{\sig}{\ctx}{\kwtriv}{\kwunit}
}

\qquad

\infer[Tid]
      {
        \strut
  %\etyped{\sig}{\ctx}{v}{\tau}\\
  %\ple{\prio}{\prio'}\\
  %\tau\mobile
}
{
  \etyped{\sig, \sigtype{a}{\tau}{\prio'}}{\ctx}{\kwtid{a}}
        {\kwat{\tau}{\prio'}}
}

\\[4ex]

\infer[\kwnat I]
{
\strut
}
{
\etyped{\sig}{\ctx}{\kwn}{\kwnat}
}

\qquad

\infer[\kwnat E]
{
  \etyped{\sig}{\ctx}{v}{\kwnat}\\
  \etyped{\sig}{\ctx}{e_1}{\tau}\\
  \etyped{\sig}{\ctx,\hastype{x}{\kwnat}}{e_2}{\tau}
}
{
  \etyped{\sig}{\ctx}{\kwifz{v}{e_1}{x}{e_2}}{\tau}
}

\\[4ex]

\infer[\arrsym I]
{
  \etyped{\sig}
        {\ctx, \hastype{x}{\tau_1}}{e}{\tau_2}
}
{
  \etyped{\sig}{\ctx}{\kwfun{x}{e}}{\kwarr{\tau_1}{\tau_2}}
}

\qquad

\infer[\arrsym E]
{
\etyped{\sig}{\ctx}{v_1}{\kwarr{\tau_1}{\tau_2}}\\
\etyped{\sig}{\ctx}{v_2}{\tau_1}
}
{
\etyped{\sig}{\ctx}{\kwapply{v_1}{v_2}}{\tau_2}
}

\\[4ex]

\infer[\prodsym $I_1$]
{
  \etyped{\sig}{\ctx}{v_1}{\tau_1}\\
  \etyped{\sig}{\ctx}{v_2}{\tau_2}
}
{
  \etyped{\sig}{\ctx}{\kwepair{v_1}{v_2}}{\kwprod{\tau_1}{\tau_2}}
}

\qquad

\ifproofs
\infer[\prodsym $I_2$]
{
  \etyped{\sig}{\ctx}{v_1}{\tau_1}\\
  \etyped{\sig}{\ctx}{v_2}{\tau_2}
}
{
  \etyped{\sig}{\ctx}{\kwpair{v_1}{v_2}}{\kwprod{\tau_1}{\tau_2}}
}

\\[4ex]
\fi

\infer[\prodsym $E_1$]
{
  \etyped{\sig}{\ctx}{v}{\kwprod{\tau_1}{\tau_2}}
}
{
  \etyped{\sig}{\ctx}{\kwfst{v}}{\tau_1}
}

\qquad

\ifproofs
\infer[\prodsym $E_2$]
{
  \etyped{\sig}{\ctx}{v}{\kwprod{\tau_1}{\tau_2}}
}
{
  \etyped{\sig}{\ctx}{\kwsnd{v}}{\tau_2}
}

\qquad
\fi

\infer[\sumsym $I_1$]
{
  \etyped{\sig}{\ctx}{v}{\tau_1}
}
{
  \etyped{\sig}{\ctx}{\kweinl{v}}{\kwsum{\tau_1}{\tau_2}}
}

\\[4ex]

\ifproofs
\infer[\sumsym $I_2$]
{
  \etyped{\sig}{\ctx}{v}{\tau_2}
}
{
  \etyped{\sig}{\ctx}{\kweinr{v}}{\kwsum{\tau_1}{\tau_2}}
}

\qquad

\infer[\sumsym $I_3$]
{
  \etyped{\sig}{\ctx}{v}{\tau_1}
}
{
  \etyped{\sig}{\ctx}{\kwinl{v}}{\kwsum{\tau_1}{\tau_2}}
}

\qquad

\infer[\sumsym $I_4$]
{
  \etyped{\sig}{\ctx}{v}{\tau_2}
}
{
  \etyped{\sig}{\ctx}{\kwinr{v}}{\kwsum{\tau_1}{\tau_2}}
}

\\[4ex]
\fi

\infer[\sumsym E]
{
  \etyped{\sig}{\ctx}{v}{\kwsum{\tau_1}{\tau_2}}\\
  \etyped{\sig}{\ctx,\hastype{x}{\tau_1}}{e_1}{\tau'}\\
  \etyped{\sig}{\ctx,\hastype{y}{\tau_2}}{e_2}{\tau'}
}
{
  \etyped{\sig}{\ctx}{\kwcase{v}{x}{e_1}{y}{e_2}}{\tau'}
}

\\[4ex]

\infer[Output]
{
  \etyped{\sig}{\ctx}{v}{\kwnat}
}
{
  \etyped{\sig}{\ctx}{\kwoutput{v}}{\kwunit}
}

\qquad

\infer[Input]
{
  \strut
}
{
  \etyped{\sig}{\ctx}{\kwinput}{\kwnat}
}

\qquad

\infer[\cmdsym I]
{
  \cmdtyped{\sig}{\ctx}{\cmd}{\tau}{\prio}
}
{
  \etyped{\sig}{\ctx}{\kwcmd{\prio}{\cmd}}{\kwcmdt{\tau}{\prio}}
}

\\[4ex]

\infer[\fasym I]
{
  \etyped{\sig}{\ctx, \vprio \isprio, \cons}{e}{\tau}
}
{
  \etyped{\sig}{\ctx}{\kwwlam{\vprio}{\cons}{e}}{\kwforall{\vprio}{\cons}{\tau}}
}

\quad

\infer[\fasym E]
{
  \etyped{\sig}{\ctx}{v}{\kwforall{\vprio}{\cons}{\tau}}\\
  \meetc{\ctx}{[\prio'/\vprio]\cons}
}
{
  \etyped{\sig}{\ctx}{\kwwapp{v}{\prio'}}{[\prio'/\vprio]\tau}
}

\\[4ex]

\infer[fix]
{
  \etyped{\sig}{\ctx, \hastype{x}{\tau}}{e}{\tau}
}
{
  \etyped{\sig}{\ctx}{\kwfix{x}{\tau}{e}}{\tau}
}

\qquad

\infer[let]
{
  \etyped{\sig}{\ctx}{e_1}{\tau_1}\\
  \etyped{\sig}{\ctx, \hastype{x}{\tau_1}}{e_2}{\tau_2}
}
{
  \etyped{\sig}{\ctx}{\kwlet{x}{e_1}{e_2}}{\tau_2}
}

%% \infer[\boxsym I]
%% {
%%   \etyped{\sig}{\ctx,\vprio \isprio}{e}{\kwcmdt{\tau}{\vprio}}
%% }
%% {
%%   \etyped{\sig}{\ctx}{\kwbox{\vprio}{e}}{\kwboxt{\tau}}
%% }

%% \qquad

%% \infer[\diasym I]
%% {
%%   \etyped{\sig}{\ctx}{e}{\kwcmdt{\tau}{\prio}}
%% }
%% {
%%   \etyped{\sig}{\ctx}{\kwthere{\prio}{e}}{\kwdia{\tau}}
%% }
\end{array}
\]
\caption{\ifproofs Expression typing rules.\else Selected expression typing rules.\fi}
\label{fig:statics}
\end{figure}

\begin{figure}
\[
\begin{array}{c}
\infer[Bind]
{
  \etyped{\sig}{\ctx}{e}{\kwcmdt{\tau}{\prio}}\\
  \cmdtyped{\sig}{\ctx, \hastype{x}{\tau}}{\cmd}{\tau'}{\prio}
}
{
  \cmdtyped{\sig}{\ctx}{\kwbind{e}{x}{\cmd}}{\tau'}{\prio}
}

\quad

\infer[Spawn]
{
  \cmdtyped{\sig}{\ctx}{\cmd}{\tau}{\prio'}
}
{
  \cmdtyped{\sig}{\ctx}{\kwbspawn{\prio'}{\tau}{\cmd}}{\kwat{\tau}{\prio'}}{\prio}
}

\\[4ex]

\infer[Sync]
{
  \etyped{\sig}{\ctx}{e}{\kwat{\tau}{\prio'}}\\
  \meetc[\worlds]{\ctx}{\ple{\prio}{\prio'}}
}
{
  \cmdtyped{\sig}{\ctx}{\kwbsync{e}}{\tau}{\prio}
}

\quad

%% \infer[Poll]
%% {
%%   \etyped{\sig}{\ctx}{e}{\kwat{\tau}{\prio'}}
%% }
%% {
%%   \cmdtyped{\sig}{\ctx}{\kwpoll{e}}{\tau}{\prio}
%% }

%% \quad

%% \infer[\boxsym E]
%% {
%%   \etyped{\sig}{\ctx}{e}{\kwboxt{\tau}}\\
%%   \cmdtyped{\sig}{\ctx, \hastype{x}{\kwcmdt{\tau}{\prio}}}{\cmd}{\tau'}{\prio}
%% }
%% {
%%   \cmdtyped{\sig}{\ctx}{\kwletbox{x}{e}{\cmd}}{\tau'}{\prio}
%% }

%% \qquad

%% \infer[\diasym E]
%% {
%%   \etyped{\sig}{\ctx}{e}{\kwdia{\tau}}\\
%%   \cmdtyped{\sig}{\ctx, \vprio \isprio, \hastype{x}{\kwcmdt{\tau}{\vprio}}}
%%            {\cmd}{\tau'}{\prio}
%% }
%% {
%%   \cmdtyped{\sig}{\ctx}{\kwletdia{\vprio}{x}{e}{\cmd}}{\tau'}{\prio}
%% }

\qquad

\infer[Ret]
{
  \etyped{\sig}{\ctx}{e}{\tau}
}
{
  \cmdtyped{\sig}{\ctx}{\kwret{e}}{\tau}{\prio}
}

%% \qquad

%% \infer[\cmdsym E]
%% {
%%   \etyped{\sig}{\ctx}{e}{\kwcmdt{\tau}{\prio}}
%% }
%% {
%%   \cmdtyped{\sig}{\ctx}{\kwdo{e}}{\tau}{\prio}
%% }

%% \qquad

%% \infer[Post]
%% {
%%   \etyped{\sig}{\ctx}{e}{\tau}\\
%%   \cmdtyped{\sig}{\ctx}{\cmd}{\tau}{\prio}
%% }
%% {
%%   \cmdtyped{\sig}{\ctx}{\kwpost{e}{\cmd}}{\tau}{\prio}
%% }

%\qquad
%
%\infer[letm]
%{
%  \etyped{\sig}{\ctx}{e_1}{\tau_1}{\prio}\\
%  \tau_1 \mobile\\
%  \etyped{\sig,\hastype{x}{\tau_1}{\prio}}{\ctx}{e_2}{\tau_2}{\prio}
%}
%{
%  \etyped{\sig}{\ctx}{\kwletm{x}{e_1}{e_2}}{\tau_2}{\prio}
%}

%% \\[4ex]

%% \infer
%% {
%% }
%% {
%%   \mtyped{(\worlds, \prel)}{\esig}{\emem}
%% }

%% \qquad

%% \infer
%% {
%%   \cmdtyped{\sig}{\ctxify{(\worlds, \prel)}}{\cmd}{\tau}{\prio}\\
%%   \mtyped{(\worlds, \prel)}{\sig}{\mem}
%% }
%% {
%%   \mtyped{(\worlds, \prel)}{\sig, \sigtype{a}{\tau}{\prio}}
%%          {\dthread{a}{\cmd} \mcp \mem}
%% }

\end{array}
\]

\caption{Command typing rules.}
\label{fig:thread-statics}
\end{figure}

\begin{figure}
\[
\begin{array}{c}
\infer[hyp]
{
  \strut
}
{
  \meetc{\ctx, \ple{\prio_1}{\prio_2}}{\ple{\prio_1}{\prio_2}}
}

\quad

\infer[assume]
{
  \plt{\prioc_1}{\prioc_2} \in \worlds
}
{
  \meetc[\worlds]{\ctx}{\ple{\prioc_1}{\prioc_2}}
}

\quad

\infer[refl]
{
  \strut
}
{
  \meetc{\ctx}{\ple{\prio}{\prio}}
}

\quad

\infer[trans]
{
  \meetc{\ctx}{\ple{\prio_1}{\prio_2}}\\\\
  \meetc{\ctx}{\ple{\prio_2}{\prio_3}}
}
{
  \meetc{\ctx}{\ple{\prio_1}{\prio_3}}
}

\quad

\infer[conj]
{
  \meetc{\ctx}{\cons_1}\\\\
  \meetc{\ctx}{\cons_2}
}
{
  \meetc{\ctx}{\cconj{\cons_1}{\cons_2}}
}

\end{array}
\]

\caption{Constraint entailment}
\label{fig:constraint-statics}
\end{figure}

As with the syntax, the static semantics are separated into the expression layer
and the command layer.
Because expressions do not depend on priorities, the static semantics for
expressions is fairly standard. The main unusual feature is that the typing
judgment is parametrized by a signature~$\sig$ containing the types and
priorities of running threads. A signature has entries of the
form~$\sigtype{a}{\tau}{\prio}$ indicating that thread~$a$ is running at
priority~$\prio$ and will return a value of type~$\tau$. The signature is
needed to check the types of thread handles.

The expression typing judgment is~$\etyped{\sig}{\ctx}{e}{\tau}$, indicating
that under signature~$\sig$, a partial order~$\worlds$ of priority constants
and context~$\ctx$, expression~$e$ has type~$\tau$.
As usual, the variable context~$\ctx$ maps variables to their types.
\ifproofs The rules
\else Most rules
\fi
for this judgment are shown in Figure~\ref{fig:statics}
\ifproofs\else (omitted rules are similar to others% and are available
%in the technical report~\citep{partial-prio-tr}
)\fi. The
variable rule~\rulename{var}, the rule for fixed points
and the introduction and elimination rules for
unit, natural numbers, functions, products and sums, are straightforward.
The rule for thread handles~$\kwtid{a}$ looks up the thread~$a$ in the
signature. The rule for encapsulated commands~$\kwcmd{\prio}{\cmd}$ requires that
the command~$\cmd$ be well-typed and runnable at priority~$\prio$, using the
typing judgment for commands, which will be defined below.
Rule~\rulename{\fasym I} extends the
context with both the priority variable~$\vprio$ and the constraint~$\cons$.
Rule~\rulename{\fasym E} handles priority instantiation.
When instantiating the variable~$\vprio$ with priority~$\prio'$, the
rule requires that the constraints hold with~$\prio'$ substituted for~$\vprio$
(the constraint entailment judgment~$\meetc{\ctx}{\cons}$ will be discussed
below).
The rule also performs the corresponding substitution in the return type.

The command typing judgment is~$\cmdtyped{\sig}{\ctx}{\cmd}{\tau}{\prio}$ and
includes both the return type~$\tau$ and the priority~$\prio$ at which~$\cmd$
is runnable. The rules are shown in Figure~\ref{fig:thread-statics}.
The rule for bind requires that~$e$ return a command of the
current priority and return type~$\tau$, and then extends the context with a
variable~$x$ of type~$\tau$ in order to type the remaining command.
The rule for~$\kwbspawn{\prio'}{\tau}{\cmd}$ requires that~$\cmd$ be runnable
at priority~$\prio'$ and return a value of type~$\tau$. The spawn command
returns a thread handle of type~$\kwat{\tau}{\prio'}$, and may do so at any
priority. The~$\kwbsync{e}$ command requires that~$e$ have the type of a thread
handle of type~$\tau$, and returns a value of type~$\tau$. The rule also
checks the priority annotation on the thread's type and requires that this
priority be at least the current priority. This is the condition that
rules out~\cd{sync} commands that would cause priority inversions.
Finally, if~$e$ has type~$\tau$, then the command~$\kwret{e}$ returns a value
of type~$\tau$, at any priority.

The constraint checking judgment is defined in
Figure~\ref{fig:constraint-statics}.
We can conclude that a constraint
holds if it appears directly in the context (rule~\rulename{hyp}) or the
partial order (rule~\rulename{assume})
or if it can be concluded from reflexivity or transitivity
(rules~\rulename{refl} and \rulename{trans}, respectively).
Finally, the conjunction~$\cconj{\cons_1}{\cons_2}$ requires that both conjuncts
hold.

We use several forms of substitution in both the static and dynamic semantics.
All use the standard definition of capture-avoiding substitution. We can
substitute expressions for variables in expressions ($[e_2/x]e_1$) or in
commands ($[e/x]\cmd$), and we can substitute priorities for priority variables
in expressions ($[\prio/\vprio]e$), commands ($[\prio/\vprio]\cmd$),
constraints ($[\prio/\vprio]\cons$), contexts ($[\prio/\vprio]\ctx$), types
and priorities. For each of these substitutions, we prove
the principle that substitution preserves typing. These substitution principles
are collected in Lemma~\ref{lem:subst}.

\begin{lem}[Substitution]\label{lem:subst}\strut\\[-1em]
  \begin{enumerate}
\item If $\etyped{\sig}{\ctx, \hastype{x}{\tau}}{e_1}{\tau'}$ and
  $\etyped{\sig}{\ctx}{e_2}{\tau}$, then
  $\etyped{\sig}{\ctx}{[e_2/x]e_1}{\tau'}$.
\item If $\cmdtyped{\sig}{\ctx, \hastype{x}{\tau}}{\cmd}{\tau'}{\prio}$ and
  $\etyped{\sig}{\ctx}{e}{\tau}$, then
  $\cmdtyped{\sig}{\ctx}{[e/x]\cmd}{\tau'}{\prio}$.
\item If $\etyped{\sig}{\ctx, \vprio \isprio}{e}{\tau}$,
  then $\etyped{\sig}{[\prio/\vprio]\ctx}{[\prio/\vprio]e}{[\prio/\vprio]\tau}$.
\item If $\cmdtyped{\sig}{\ctx, \vprio \isprio}{\cmd}{\tau}{\prio}$,
  then $\cmdtyped{\sig}{[\prio'/\vprio]\ctx}{[\prio'/\vprio]\cmd}
  {[\prio'/\vprio]\tau}{[\prio'/\vprio]\prio}$.
\item If $\meetc{\ctx, \vprio \isprio}{\cons}$, then
  $\meetc{[\prio/\vprio]\ctx}{[\prio/\vprio]\cons}$.
\end{enumerate}
\end{lem}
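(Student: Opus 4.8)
The plan is to prove the five statements by induction on the typing (resp.\ constraint-entailment) derivations, grouping them according to their mutual dependencies. Statements (1) and (2) are proved by a single simultaneous induction on the derivations of $\etyped{\sig}{\ctx, \hastype{x}{\tau}}{e_1}{\tau'}$ and $\cmdtyped{\sig}{\ctx, \hastype{x}{\tau}}{\cmd}{\tau'}{\prio}$; they must be mutual because rule~\rulename{\cmdsym I} embeds a command inside an expression while rules \rulename{Bind}, \rulename{Sync}, and \rulename{Ret} embed expressions inside commands. Statement (5) is then proved on its own by a short induction on the constraint-entailment derivation, and finally (3) and (4) are proved by a second simultaneous induction on typing derivations, appealing to (5) for the premises that mention constraint entailment. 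Throughout, I would also rely on routine auxiliary facts --- weakening and exchange for the variable context, and a substitution-commutation identity for priorities --- which I would state and discharge first.

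For (1) and (2) the interesting cases are few. In rule~\rulename{var}, either the variable is $x$, in which case $[e_2/x]x = e_2$ and the goal is exactly the second hypothesis, or it is some other variable that remains in $\ctx$. For the rules that bind a fresh term variable --- \rulename{let}, \rulename{fix}, \rulename{$\arrsym$I}, \rulename{$\kwnat$E}, \rulename{$\sumsym$E}, and \rulename{Bind} --- I would apply exchange so the inductive hypothesis can be used with the bound variable moved to the end of the context, then reassemble with the same rule. Rule~\rulename{$\cmdsym$I} invokes the inductive hypothesis for (2) on its command premise, and \rulename{Sync}, \rulename{Ret}, and \rulename{Bind} invoke the hypothesis for (1) on their expression premises; crucially, the priority annotations and any constraint-entailment premises (such as $\meetc[\worlds]{\ctx}{\ple{\prio}{\prio'}}$ in \rulename{Sync}) involve neither $x$ nor $e_2$ and carry over verbatim, so (1) and (2) do not need (5).

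For (5), the only case requiring thought is \rulename{assume}: its premise $\plt{\prioc_1}{\prioc_2}\in\worlds$ mentions only priority constants, which $[\prio/\vprio](\cdot)$ leaves untouched, so the rule re-applies directly. In \rulename{hyp} the constraint lies in $\ctx, \vprio\isprio$, so its image lies in $[\prio/\vprio]\ctx$; \rulename{refl}, \rulename{trans}, and \rulename{conj} follow immediately from the inductive hypotheses. For (3) and (4), the term-variable binding rules are handled as in (1)--(2), and the priority-binding rule \rulename{$\fasym$I} is handled by first renaming its bound priority variable to be distinct from the one being substituted (capture avoidance). Rule~\rulename{$\fasym$E} is the key case: its entailment premise is transported using (5), and lining up the substituted return type in the conclusion uses the commutation identity $[\prio/\vprio][\prio'/\vprio']\tau = [[\prio/\vprio]\prio'/\vprio']([\prio/\vprio]\tau)$, valid because $\vprio'$ is fresh for $\prio$.

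The main obstacle is bookkeeping rather than mathematical depth: organizing the mutual-induction skeleton so that (1)--(2), then (5), then (3)--(4) are discharged in a consistent order, and establishing the weakening/exchange and substitution-commutation lemmas that the binding and instantiation cases silently use. One simplification worth noting up front is that the signature $\sig$ records only concrete priority constants for running threads, so it is invariant under priority substitution and under term substitution; hence no case analysis on $\sig$ is ever required, and it can be treated as a fixed parameter throughout.
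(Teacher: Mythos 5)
Your proposal is correct and follows essentially the same route as the paper: a case analysis on the typing and entailment derivations, with the variable case, the binder cases (via exchange), and the \rulename{$\fasym$E} case with the substitution-commutation identity being the points of interest. Your explicit mutual-induction skeleton and your choice to discharge part (5) before parts (3)--(4) merely make precise a dependency the paper's proof leaves implicit (its \rulename{$\fasym$E} case for part (3) silently appeals to constraint substitution), so this is a tidier write-up of the same argument rather than a different one.
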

\ifproofs
\begin{proof}
\begin{enumerate}
\item By induction on the derivation of
  $\etyped{\sig}{\ctx, \hastype{x}{\tau}}{e_1}{\tau'}$. Consider one
  representative case.
  \begin{itemize}
  \item \rulename{\fasym E} Then $e = \kwwapp{e_0}{\prio'}$.
    By inversion,
    $\etyped{\sig}{\ctx, \hastype{x}{\tau}}{e_0}{\kwforall{\vprio}{\cons}{\tau_0}}$
    and $\tau' = [\prio'/\vprio]\tau_0$.
    By induction, $\etyped{\sig}{\ctx}{[e/x]e_0}{\kwforall{\vprio}{\cons}{\tau_0}}$.
    Apply \rulename{\fasym E}.
  \end{itemize}
\item By induction on the derivation of
  $\cmdtyped{\sig}{\ctx, \hastype{x}{\tau}}{\cmd}{\tau'}{\prio}$.
  \begin{itemize}
  \item \rulename{Bind}. Then $\cmd = \kwbind{e_0}{y}{\cmd_0}$. By inversion,
    $\etyped{\sig}{\ctx, \hastype{x}{\tau}}{e_0}{\kwcmdt{\tau''}{\prio}}$
    and\\
    $\cmdtyped{\sig}{\ctx, \hastype{x}{\tau}, \hastype{y}{\tau''}}{\cmd_0}
    {\tau'}{\prio}$.
    By weakening, $\etyped{\sig}{\ctx, \hastype{y}{\tau''}}{e}{\tau}$.
    By induction, $\etyped{\sig}{\ctx}{[e/x]e_0}{\kwcmdt{\tau''}{\prio}}$ and
    $\cmdtyped{\sig}{\ctx, \hastype{y}{\tau''}}{[e/x]\cmd_0}{\tau'}{\prio}$.
    Apply \rulename{Bind}.
  \item \rulename{Spawn}. Then $\cmd = \kwbspawn{\prio'}{\tau''}{\cmd_0}$.
    By inversion,
    $\cmdtyped{\sig}{\ctx, \hastype{x}{\tau}}{\cmd_0}{\tau''}{\prio'}$.
    By induction, $\cmdtyped{\sig}{\ctx}{[e/x]\cmd_0}{\tau''}{\prio'}$.
    Apply \rulename{Spawn}.
  \item \rulename{Sync}. Then $\cmd = \kwbsync{e_0}$. By inversion,
    $\etyped{\sig}{\ctx, \hastype{x}{\tau}}{e_0}{\kwat{\tau'}{\prio'}}$.
    By induction, $\etyped{\sig}{\ctx}{[e/x]e_0}{\kwat{\tau'}{\prio'}}$.
    Apply \rulename{Sync}.

  %% \item \rulename{\diasym E}. Then $\cmd = \kwletdia{\vprio}{y}{e_0}{\cmd_0}$.
  %%   By inversion, $\typed{\sig}{\ctx, \hastype{x}{\tau}}{e_0}{\kwdia{\tau''}}$ and
  %%   $\cmdtyped{\sig}{\ctx, \hastype{x}{\tau}, \vprio \isprio,
  %%     \hastype{y}{\kwcmdt{\tau''}{\vprio}}}{\cmd_0}{\tau'}{\prio}$.
  %%   By weakening,
  %%   $\typed{\sig}{\ctx, \vprio \isprio, \hastype{y}{\kwcmdt{\tau''}{\vprio}}}
  %%   {e}{\tau}$.
  %%   By induction, $\typed{\sig}{\ctx}{[e/x]e_0}{\kwdia{\tau''}}$ and
  %%   $\cmdtyped{\sig}{\ctx, \vprio \isprio, \hastype{y}{\kwcmdt{\tau''}{\vprio}}}
  %%   {[e/x]\cmd_0}{\tau'}{\prio}$. Apply \rulename{\diasym E}.
  \item \rulename{Ret}. Then $\cmd = \kwret{e_0}$. By inversion,
    $\etyped{\sig}{\ctx, \hastype{x}{\tau}}{e_0}{\tau'}$. By induction,
    $\etyped{\sig}{\ctx}{[e/x]e_0}{\tau'}$. Apply \rulename{Ret}.
  \end{itemize}
\item By induction on the derivation of
  $\etyped{\sig}{\ctx, \vprio \isprio}{e}{\tau}$.
  \begin{itemize}
  \item \rulename{\fasym E} Then $e = \kwwapp{e_0}{\prio''}$
    and $[\prio'/\vprio]e = \kwwapp{[\prio'/\vprio]e_0}{[\prio'/\vprio]\prio''}$.
    By inversion,\\
    $\etyped{\sig}{\ctx, \vprio \isprio}{e_0}{\kwforall{\vprio'}{\cons}{\tau'}}$
    and $\tau = [\prio''/\vprio']\tau'$ and
    $\meetc{\ctx, \vprio \isprio}{[\prio''/\vprio']\cons}$.\\
    By induction, $\etyped{\sig}{[\prio'/\vprio]\ctx}{[\prio'/\vprio]e_0}
    {[\prio'/\vprio](\kwforall{\vprio'}{\cons}{\tau'}) =
    \kwforall{\vprio'}{[\prio'/\vprio]\cons}{[\prio'/\vprio]\tau'}}$ and
    $\meetc{\ctx}{[\prio'/\vprio][\prio''/\vprio']\cons
      = [[\prio'/\vprio]\prio''/\vprio'][\prio'/\vprio]\cons}$.\\
    By \rulename{\fasym E},
    $\cmdtyped{\sig}{\ctx}{\kwwapp{e_0}{\prio''}}
    {[[\prio'/\vprio]\prio''/\vprio'][\prio'/\vprio]\tau'}{\prio}$.\\
    Because $[[\prio'/\vprio]\prio''/\vprio'][\prio'/\vprio]\tau' =
    [\prio'/\vprio][\prio''/\vprio]\tau' = [\prio'/\vprio]\tau$,
    this completes the case.
  \end{itemize}
\item By induction on the derivation of
  $\cmdtyped{\sig}{\ctx, \vprio \isprio}{\cmd}{\tau}{\prio}.$
  \begin{itemize}
  \item \rulename{Bind}. Then $\cmd = \kwbind{e}{x}{\cmd_0}$. By inversion,
    $\etyped{\sig}{\ctx, \vprio \isprio}{e}{\kwcmdt{\tau''}{\prio}}$
    and
    $\cmdtyped{\sig}{\ctx, \vprio \isprio, \hastype{x}{\tau''}}{\cmd_0}
    {\tau'}{\prio}$.
    By induction,
    $\etyped{\sig}{[\prio'/\vprio]\ctx}{[\prio'/\vprio]e}
    {[\prio'/\vprio](\kwcmdt{\tau''}{\prio})}$
    and
    $\cmdtyped{\sig}{[\prio'/\vprio]\ctx, \hastype{x}{[\prio'/\vprio]\tau''}}
    {[\prio'/\vprio]\cmd_0}
    {[\prio'/\vprio]\tau'}{[\prio'/\vprio]\prio}$.
    Apply \rulename{Bind}.
  \item \rulename{Spawn}. Then $\cmd = \kwbspawn{\prio''}{\tau'}{\cmd_0}$\\
    and $[\prio'/\vprio]\cmd = \kwbspawn{[\prio'/\vprio]\prio''}
    {[\prio'/\vprio]\tau'}{[\prio'/\vprio]\cmd_0}$.
    By inversion,\\
    $\cmdtyped{\sig}{\ctx, \vprio \isprio}{\cmd_0}{\tau'}{\prio''}$.
    By induction,
    $\cmdtyped{\sig}{[\prio'/\vprio]\ctx}{[\prio'/\vprio]\cmd_0}
    {[\prio'/\vprio]\tau'}
    {[\prio'/\vprio]\prio''}$.
    By \rulename{Spawn},
    $\cmdtyped{\sig}{[\prio'/\vprio]\ctx}{[\prio'/\vprio]\cmd}
    {[\prio'/\vprio](\kwat{\tau'}{\prio''})}{[\prio'/\vprio]\prio}.$
  \item \rulename{Sync}. Then $\cmd = \kwbsync{e}$. By inversion,
    $\etyped{\sig}{\ctx, \vprio \isprio}{e}{\kwat{\tau}{\prio''}}$.
    By induction,
    $\etyped{\sig}{[\prio'/\vprio]\ctx}{[\prio'/\vprio]e}
    {[\prio'/\vprio](\kwat{\tau'}{\prio''})}$.
    Apply \rulename{Sync}.

  %% \item \rulename{\diasym E}. Then $\cmd = \kwletdia{\vprio'}{x}{e_0}{\cmd_0}$.
  %%   By inversion, $\typed{\sig}{\ctx, \vprio \isprio}{e_0}{\kwdia{\tau'}}$ and
  %%   $\cmdtyped{\sig}{\ctx, \hastype{x}{\tau}, \vprio \isprio, \vprio' \isprio,
  %%     \hastype{x}{\kwcmdt{\tau'}{\vprio'}}}{\cmd_0}{\tau}{\prio}$.
  %%   By induction, $\typed{\sig}{\ctx}{[\prio'/\vprio]e_0}{\kwdia{\tau''}}$ and
  %%   $\cmdtyped{\sig}{\ctx, \vprio' \isprio, \hastype{y}{\kwcmdt{\tau''}{\vprio}}}
  %%   {[\prio'/\vprio]\cmd_0}{\tau'}{\prio}$. Apply \rulename{\diasym E}.
  \item \rulename{Ret}. Then $\cmd = \kwret{e}$. By inversion,
    $\etyped{\sig}{\ctx, \vprio \isprio}{e}{\tau'}$.\\ By induction,
    $\etyped{\sig}{[\prio'/\vprio]\ctx}{[\prio'/\vprio]e}{[\prio'/\vprio]\tau'}$.
    Apply \rulename{Ret}.
  \end{itemize}
\item By induction on the derivation of $\meetc{\ctx, \vprio\isprio}{\cons}$.
  We consider the non-trivial cases.
  \begin{itemize}
%  \item \rulename{hyp}. Apply rule \rulename{hyp}.
%  \item \rulename{refl}. Apply rule \rulename{refl}.
  \item \rulename{trans}.
    By inversion, $\meetc{\ctx, \vprio \isprio}{\ple{\prio_1}{\prio_2}}$ and
    $\meetc{\ctx, \vprio \isprio}{\ple{\prio_2}{\prio_3}}$.
    By induction,
    $\meetc{[\prio/\vprio]\ctx}{\ple{[\prio/\vprio]\prio_1}{[\prio/\vprio]\prio_2}}$
    and
    $\meetc{[\prio/\vprio]\ctx}{\ple{[\prio/\vprio]\prio_2}{[\prio/\vprio]\prio_3}}$
    Apply rule \rulename{trans}.
  \item \rulename{conj}
    By inversion, $\meetc{\ctx, \vprio \isprio}{\cons_1}$ and
    $\meetc{\ctx, \vprio \isprio}{\cons_2}$. By induction,
    $\meetc{[\prio/\vprio]\ctx}{[\prio/\vprio]\cons_1}$ and
    $\meetc{[\prio/\vprio]\ctx}{[\prio/\vprio]\cons_2}$.
    Apply rule \rulename{conj}.
  \end{itemize}
\end{enumerate}
\end{proof}
\else
%See the technical report~\citep{partial-prio-tr} for the straightforward proof.
The proof is straightforward.
\fi

\subsection{Dynamic Semantics}\label{sec:dyn}
We define a transition semantics for {\calcname}.
%As before, we will separate the semantics into two components.
Because the operational behavior (as distinct from run-time or responsiveness,
which will be the focus of Section~\ref{sec:cost}) of expressions does not
depend on the
priority at which they run or what other threads are running,
their semantics can be defined without
regard to other running threads. The semantics for commands will be
more complex, because it must include other threads. We will also define a
syntax and dynamic semantics for {\em thread pools}, which are collections of all
of the currently running threads.

The dynamic semantics for expressions
consists of two judgments. The judgment~$v \val{\sig}$ states that~$v$
is a well-formed value and refers only to thread names in the signature~$\sig$.
The rules for this judgment are omitted.
%For clarity, we will often use metavariables~$v$ and variants
%to range over values, even though they are in the same syntactic class
%as expressions. The values are the unit value, thread handles
%contained in~$\sig$, numerals, functions, pairs of values, injections of
%values, encapsulated commands (because the command is encapsulated and not run,
%these are values regardless of the contained command) and
%priority abstractions.
The transition relation for expressions~$e \estep{\sig} e'$
is fairly straightforward for a left-to-right, call-by-value
lambda calculus and is shown in Figure~\ref{fig:exp-dyn}.
The signature~$\sig$ does not
change during expression evaluation and is
used solely to determine whether thread IDs are well-formed values.
%The~$\kw{ifz}$ and~$\kw{case}$ constructs evaluate the argument~$e$ to a value
%in order to examine it.
The~$\kw{ifz}$ construct conditions on the value of the
numeral~$\kwnumeral{n}$.
If~$n = 0$, it steps to~$e_1$. If not, it steps to~$e_2$,
substituting~$n-1$ for~$x$.
The~$\kw{case}$ construct conditions on whether~$e$
is a left or right injection, and steps to~$e_1$ (resp.~$e_2$),
substituting the injected value for~$x$ (resp.~$y$).
Function applications and priority instantiations simply perform the
appropriate substitution.
%Because the semantics is
%call-by-name, function applications~$\kwapply{e_1}{e_2}$ are not performed
%until~$e_1$ is evaluated to a lambda abstraction and~$e_2$ is evaluated to a
%value, which is then substituted for the bound variable.
%Pairs are evaluated left-to-right, with the
%first component being fully evaluated before the second, and both are evaluated
%eagerly before a projection is performed.
%Priority instantiation~$\kwwapp{e}{\prio'}$ evaluates~$e$ to
%an abstraction, and then performs the substitution.

\begin{figure}
%% \[
%% \begin{array}{c}

%% \infer
%% {
%%   \strut
%% }
%% {
%%   \kwtriv \val{\sig}
%% }

%% \qquad

%% \infer
%% {
%%   \strut
%% }
%% {
%%   \kwtid{a} \val{\sig, \sigtype{a}{\tau}{\prio}}
%% }

%% \qquad

%% \infer
%% {
%%   \strut
%% }
%% {
%%   \kwn \val{\sig}
%% }

%% \qquad

%% \infer
%% {
%%   \strut
%% }
%% {
%%   \kwfun{x}{e} \val{\sig}
%% }

%% \qquad

%% \infer
%% {
%%   v_1 \val{\sig}\\
%%   v_2 \val{\sig}
%% }
%% {
%%   \kwpair{v_1}{v_2} \val{\sig}
%% }

%% \\[4ex]

%% \infer
%% {
%%   v \val{\sig}
%% }
%% {
%%   \kwinl{v} \val{\sig}
%% }

%% \qquad

%% \infer
%% {
%%   v \val{\sig}
%% }
%% {
%%   \kwinr{v} \val{\sig}
%% }

%% \qquad

%% \infer
%% {
%%   \strut
%% }
%% {
%%   \kwcmd{\prio}{\cmd} \val{\sig}
%% }

%% \qquad

%% \infer
%% {
%%   \strut
%% }
%% {
%%   \kwwlam{\prio}{\cons}{e} \val{\sig}
%% }

%% %% \\[4ex]

%% %% \infer
%% %% {
%% %%   \strut
%% %% }
%% %% {
%% %%   \kwbox{\vprio}{\kwcmd{\vprio}{\cmd}} \val{\sig}
%% %% }

%% %% \qquad

%% %% \infer
%% %% {
%% %%   \strut
%% %% }
%% %% {
%% %%   \kwthere{\prio}{\kwcmd{\prio}{\cmd}} \val{\sig}
%% %% }

%% \end{array}
%% \]

\[
\begin{array}{c}
  \infer[D-Let-Step]
      {
        e_1 \estep{\sig} e_1'
      }
      {
        \kwlet{x}{e_1}{e_2} \estep{\sig} \kwlet{x}{e_1'}{e_2}
      }

\qquad

\infer[D-Let]
      {
        v \val{\sig}
      }
      {
        \kwlet{x}{v}{e} \estep{\sig} [v/x]e
      }

      \\[4ex]

\infer[D-Ifz-Z]
{
  \strut
}
{
  \kwifz{0}{e_1}{x}{e_2} \estep{\sig} e_1
}

\qquad

\infer[D-Ifz-NZ]
{
  \strut
}
{
  \kwifz{\kwnumeral{n+1}}{e_1}{x}{e_2} \estep{\sig} [\kwnumeral{n}/x]e_2
}

\qquad

\infer[D-App]
{
  v \val{\sig}
}
{
  \kwapply{(\kwfun{x}{e})}{v} \estep{\sig} [v/x]e
}

\\[4ex]

\infer[D-Pair]
{
  v_1 \val{\sig}\\
  v_2 \val{\sig}\\
}
{
  \kwepair{v_1}{v_2} \estep{\sig} \kwpair{v_1}{v_2}
}

\qquad

\infer[D-Fst]
{
  v_1 \val{\sig}\\
  v_2 \val{\sig}
}
{
  \kwfst{\kwpair{v_1}{v_2}} \estep{\sig} v_1
}

\quad

\infer[D-Snd]
{
  v_1 \val{\sig}\\
  v_2 \val{\sig}
}
{
  \kwsnd{\kwpair{v_1}{v_2}} \estep{\sig} v_2
}

\\[4ex]

\infer[D-InL]
{
  v \val{\sig}
}
{
  \kweinl{v} \estep{\sig} \kwinl{v}
}

\qquad

\infer[D-InR]
{
  v \val{\sig}
}
{
  \kweinr{v} \estep{\sig} \kwinr{v}
}

\qquad

\infer[D-Case-L]
{
  v \val{\sig}
}
{
  \kwcase{\kwinl{v}}{x}{e_1}{y}{e_2} \estep{\sig} [v/x]e_1
}

\\[4ex]

\infer[D-Case-R]
{
  v \val{\sig}
}
{
  \kwcase{\kwinr{v}}{x}{e_1}{y}{e_2} \estep{\sig} [v/y]e_2
}

\qquad

\infer[D-Output]
{
  \strut
}
{
  \kwoutput{\kwn} \estep{\sig} \kwtriv
}

\qquad

\infer[D-Input]
{
 n \in \mathbb{N}
}
{
  \kwinput \estep{\sig} \kwnumeral{n}
}

\\[4ex]

\infer[D-PrApp]
{
  \strut
}
{
  \kwwapp{(\kwwlam{\vprio}{\cons}{e})}{\prio'} \estep{\sig}
  [\prio'/\vprio]e
}

\qquad

\infer[D-Fix]
{
  \strut
}
{
  \kwfix{x}{\tau}{e} \estep{\sig} [\kwfix{x}{\tau}{e}/x] e
}

%% \qquad

%% \infer
%% {
%%   e \estep{\sig} e'
%% }
%% {
%%   \kwwlam{\vprio}{\cons}{e} \estep{\sig} \kwwlam{\vprio}{\cons}{e'}
%% }

%% \qquad

%% \infer
%% {
%%   e \estep{\sig} e'
%% }
%% {
%%   \kwthere{\prio}{e} \estep{\sig} \kwthere{\prio}{e'}
%% }

\end{array}
\]
\caption{Dynamic semantics for expressions.}
\label{fig:exp-dyn}
\end{figure}

\begin{figure}
\[
\begin{array}{c}

\infer[Empty]
{
}
{
  \mtyped{\worlds}{\sig}{\emem}{\esig}
}

\qquad

\infer[OneThread]
{
  \cmdtyped[\worlds]{\sig}{\cdot}{\cmd}
           {\tau}{\prio}
}
{
  \mtyped{\worlds}{\sig}{\dthread{a}{\prio}{\cmd}}{\sigtype{a}{\tau}{\prio}}
}

\qquad

\infer[Concat]
{
  \mtyped{\worlds}{\sig,\sig_2}{\mem_1}{\sig_1}\\
  \mtyped{\worlds}{\sig,\sig_1}{\mem_2}{\sig_2}
}
{
  \mtyped{\worlds}{\sig}{\mem_1 \mcp \mem_2}{\sig_1,\sig_2}
}

\qquad

\infer[Extend]
{
  \mtyped{\worlds}{\sig}{\mem}{\sig',\sig''}
}
{
  \mtyped{\worlds}{\sig}{\mbconfig{\sig'}{\mem}}{\sig''}
}

\end{array}
\]
\caption{Typing rules for thread pools}
\label{fig:statics-tp}
\end{figure}

\begin{figure}
\[
\begin{array}{c}

\infer
    {
}
{
  \mbconfig{\sig}{\mem_1} \mcp \mem_2 \meq
  \mbconfig{\sig}{\mem_1 \mcp \mem_2}
}

\qquad

\infer
    {
}
{
  \mbconfig{\sig}{\mbconfig{\sig'}{\mem}} \meq \mbconfig{\sig,\sig'}{\mem}
}

\qquad

\infer
    {
}
{
  \mbconfig{\esig}{\mem} \meq \mem
}

\end{array}
\]
\caption{Congruence rules for thread pools.}
\label{fig:tp-cong}
\end{figure}

Define a thread pool~$\mem$ to be a mapping of thread symbols to
threads:~$\dthread{a}{\prio}{\cmd}$ indicates a thread~$a$ at priority~$\prio$
running~$\cmd$. The concatenation of two thread pools is
written~$\mem_1 \mcp \mem_2$. Thread pools can also introduce new thread names:
the thread pool~$\mbconfig{\sig}{\mem}$ allows the thread pool~$\mem$ to use
thread names bound in the signature~$\sig$. Thread pools are not ordered;
we identify thread pools up to commutativity and associativity
of~$\mcp$\footnote{Because threads cannot refer to threads that
  (transitively) spawned them, we could order the thread pool, which would allow
  us to prove that deadlock is not possible in {\calcname}. This is outside the
  scope of this paper.}.
We also introduce the additional congruence rules of
Figure~\ref{fig:tp-cong}, which allow for thread name bindings to freely change
scope within a thread pool.
%We will use this congruence relation to establish
%a normal form for thread pools, in which all thread names have been moved
%to the outside: for every thread pool~$\mem$, there exists
%a thread pool~$\mbconfig{\sig}{\mem'} \meq \mem$, where~$\mem'$ has no nested
%thread name bindings.

Figure~\ref{fig:statics-tp} gives the typing rules for thread pools.
The typing judgment~$\mtyped{\worlds}{\sig}{\mem}{\sig'}$
indicates that all threads
of~$\mem$ are well-typed assuming an ambient environment that includes the
threads mentioned in~$\sig$, and that~$\sig'$ includes the threads introduced
in~$\mem$, minus any bound in a~$\mbconfig{\sig''}{\mem''}$ form.
The rules are straightforward:
the empty thread pool~$\emem$ is always well-typed and introduces no threads,
individual threads are well-typed if their commands are,
and concatenations are well-typed if their components are.
In a concatenation~$\mem_1 \mcp \mem_2$, if~$\mem_1$ introduces the
threads~$\sig_1$ and~$\mem_2$ introduces the threads~$\sig_2$, then~$\mem_1$
may refer to threads in~$\sig_2$ and vice versa.
If a thread pool~$\mem$ is well-typed and introduces the threads
in~$\sig',\sig''$, then~$\mbconfig{\sig'}{\mem}$ introduces the threads
in~$\sig''$ (subtracting off the threads explicitly introduced by the binding).

\begin{figure}
\[
\begin{array}{c}

\infer[D-Bind1]
{
  e \estep{\sig} e'
}
{
  \kwbind{e}{x}{\cmd}
  \lstep{\sig}{\asil}
  \rpconfig{\esig}{\kwbind{e'}{x}{\cmd}}{\emem}
}

\qquad

\infer[D-Bind2]
{
  \cmd_1 \lstep{\sig}{\act} \rpconfig{\sig'}{\cmd_1'}{\mem'}
}
{
  \kwbind{\kwcmd{\prio}{\cmd_1}}{x}{\cmd_2}
  \lstep{\sig}{\act}
  \rpconfig{\sig'}{\kwbind{\kwcmd{\prio}{\cmd_1'}}{x}{\cmd_2}}{\mem'}
}

\\[4ex]

\infer[D-Bind3]
{
  e \val{\sig}
}
{
  \kwbind{\kwcmd{\prio}{\kwret{e}}}{x}{\cmd}
  \lstep{\sig}{\asil}
  \rpconfig{\esig}{[e/x]\cmd}{\emem}
}

\qquad

\infer[D-Spawn]
{
  b \fresh
}
{
  \kwbspawn{\prio}{\tau}{\cmd}
  \lstep{\sig}{\asil}
  \rpconfig{\sigtype{b}{\tau}{\prio}}{\kwret{\kwtid{b}}}{\dthread{b}{\prio}{\cmd}}
}

\\[4ex]

\infer[D-Sync1]
{
  e \estep{\sig} e'
}
{
  \kwbsync{e}
  \lstep{\sig}{\asil}
  \rpconfig{\esig}{\kwbsync{e'}}{\emem}
}

\qquad

\infer[D-Sync2]
{
  v \val{\sig}%\\
  %\ple{\prio}{\prio'}
}
{
  \kwbsync{(\kwtid{b})}
  \lstep[\prio]{\sig}{\async{b}{v}}
  \rpconfig{\esig}{\kwret{v}}{\emem}
}

%% \\[4ex]

%% \infer
%% {
%%   e \estep{\sig} e'
%% }
%% {
%%   \lconfig{\sig}{\dthread{a}{\kwletdia{x}{\vprio}{e}{\cmd}}}{\mem} \tstep
%%   \rconfig{\sig}{\dthread{a}{\kwletdia{x}{\vprio}{e'}{\cmd}}}
%% }

%% \\[4ex]

%% \infer
%% {
%%   \strut
%% }
%% {
%%   \lconfig{\sig}
%%           {\dthread{a}{\kwletdia{x}{\vprio}
%%               {\kwthere{\prio}{\kwcmd{\prio}{\cmd_1}}}{\cmd_2}}}
%%           {\mem}
%%   \tstep
%%   \rconfig{\sig}{\dthread{a}{[\kwcmd{\prio}{\cmd_1}/x][\prio/\vprio]\cmd_2}}
%% }

\qquad

\infer[D-Ret]
{
  e \estep{\sig} e'
}
{
  \kwret{e}
  \lstep{\sig}{\asil}
  \rpconfig{\esig}{\kwret{e'}}{\emem}
}

\end{array}
\]
\caption{Dynamic rules for commands.}
\label{fig:thread-dyn}

\end{figure}

\begin{figure}
\[
\begin{array}{c}

\infer[DT-Thread]
{
  \cmd \lstep[\prio]{\sig}{\act}
  \rpconfig{\sig'}{\cmd'}{\mem'}
}
{
  \dthread{a}{\prio}{\cmd}
  \gstep{\sigtype{a}{\tau}{\prio}, \sig}{\tact{a}{\act}}
  \mbconfig{\sig'}{\dthread{a}{\prio}{\cmd'} \mcp \mem'}
}

\qquad

\infer[DT-Ret]
{
  v \val{\sigtype{a}{\tau}{\prio}, \sig}
}
{
  \dthread{a}{\prio}{\kwret{v}}
  \gstep{\sigtype{a}{\tau}{\prio}, \sig}{\tact{a}{\asend{a}{v}}}
  \dthread{a}{\prio}{\kwret{v}}
}

\\[4ex]

\infer[DT-Sync]
{
  \sig = \sig', \sigtype{a}{\tau_a}{\prio_a}, \sigtype{b}{\tau_b}{\prio_b}\\
  %\ple{\prio_a}{\prio_b}\\
  \mem_1 \gstep{\sig}{\tact{a}{\async{b}{v}}} \mem_1'\\
  \mem_2 \gstep{\sig}{\tact{b}{\asend{b}{v}}} \mem_2
}
{
  \mem_1 \mcp \mem_2
  \gstep{\sig}{\tact{a}{\asil}}
  \mem_1' \mcp \mem_2
}

\\[4ex]

\infer[DT-Concat]
{
  \mem_1 \gstep{\sig}{\tact{a}{\act}} \mem_1'
}
{
  \mem_1 \mcp \mem_2
  \gstep{\sig}{\tact{a}{\act}}
  \mem_1' \mcp \mem_2
}

\qquad

%% \infer[DT-Concat-Both]
%% {
%%   \mem_1 \uplus \mem_2 \gstep{\sig}{\acts_1} \mem_1' \uplus \mem\\
%%   \mem_2 \uplus \mem \gstep{\sig}{\acts_2} \mem_2' \uplus \mem
%% }
%% {
%%   \mem_1 \mcp \mem_2' \mcp \mem
%%   \gstep{\sig}{\acts_1 \cup \acts_2}
%%   \mem_1' \mcp \mem_2' \mcp \mem
%% }

%%\qquad

%% \infer
%% {
%%   \mem_2 \gstep{\sig}{\async{b}{v}} \mem_2'\\
%%   \mem_4 \gstep{\sig}{\asend{b}{v}} \mem_4'
%% }
%% {
%%   \mem_1 \mcp \mem_2 \mcp \mem_3 \mcp \mem_4 \mcp \mem_5
%%   \gstep{\sig}{\asil}
%%   \mem_1 \mcp \mem_2' \mcp \mem_3 \mcp \mem_4' \mcp \mem_5
%% }

%%\quad

\infer[DT-Extend]
{
  \mem \gstep{\sig, \sigtype{a}{\tau}{\prio}}{\tact{b}{\act}} \mem'
}
{
  \mbconfig{\sigtype{a}{\tau}{\prio}}{\mem}
  \gstep{\sig}{\tact{b}{\act}}
  \mbconfig{\sigtype{a}{\tau}{\prio}}{\mem'}
}

\\[4ex]

\infer[DT-Par]
      { (\forall 1 \leq i \leq n)
        \mbconfig{\sig}{\mem \mcp \mem_1 \mcp \dots \mcp \mem_n}
        \gstep{\esig}{\tact{a_i}{\asil}}
        \mbconfig{\sig}{\mem \mcp \mem_1 \mcp \dots \mcp \mem_i' \mcp \dots \mcp \mem_n}
      }
      {
        \mbconfig{\sig}{\mem \mcp \mem_1 \mcp \dots \mcp \mem_n}
        \pgstep{\{a_1, \dots, a_n\}}
        \mbconfig{\sig}{\mem \mcp \mem_1' \mcp \dots \mcp \mem_i' \mcp \dots \mcp \mem_n'}
      }

\end{array}
\]
\caption{Dynamic rules for thread pools.}
\label{fig:mem-dyn}
\end{figure}

The transition judgment for commands is
$\cmd \lstep{\sig}{\act} \rpconfig{\sig'}{\cmd'}{\mem'}$,
indicating that under signature~$\sig$, command~$\cmd$ steps
to~$\cmd'$. The transition relation carries a label~$\act$, indicating
the ``action'' taken by this step. At this point, actions can be the silent
action~$\asil$ or the sync action~$\async{b}{v}$, indicating that the transition
receives a value~$v$ by synchronizing on thread~$b$.
This step may also spawn new threads, and so the judgment
includes extensions to the thread pool~($\mem'$) and
the signature~($\sig'$). Both extensions may be empty.
%We do not define a judgment corresponding to the~$e \val{\sig}$ judgment for
%expressions, because the only well-formed irreducible command is~$\kwret{v}$
%where~$v$ is a value.

The rules for the transition judgment are shown in Figure~\ref{fig:thread-dyn}.
The rules for the bind construct~$\kwbind{e}{x}{\cmd_2}$ evaluate~$e$ to an
encapsulated command~$\kwcmd{\prio}{\cmd_1}$, then evaluate this command to
a return value~$\kwret{v}$ before substituting~$v$ for~$x$ in~$\cmd_2$.
The spawn command~$\kwbspawn{\prio}{\tau}{\cmd}$ does {\em not} evaluate~$\cmd$,
but simply spawns a fresh thread~$b$ to execute it, and returns a
thread handle~$\kwtid{b}$. The sync command~$\kwbsync{e}$ evaluates~$e$ to a
thread handle~$\kwtid{b}$, and then takes a step to~$\kwret{v}$ labeled
with the action~$\async{b}{v}$. Note that, because the thread~$b$ is not
available to the rule, the return value~$v$ is ``guessed''. It will be the
job of the thread pool semantics to connect this thread to the thread~$b$
and provide the appropriate return value.
Finally, $\kwret{e}$
evaluates~$e$ to a value.

We define an additional transition judgment for thread pools, which
nondeterministically allows a thread to step.
The judgment~$\mem \gstep{\sig}{\tact{a}{\act}} \mem'$ is again annotated with
an action.
In this judgment, because it is not clear what thread is taking
the step, the action is labeled with the thread~$a$.
Actions now also include the ``return'' action~$\asend{a}{v}$, indicating
that the thread returns the value~$v$.
Rule~\rulename{DT-Sync} matches this with a corresponding sync action and
performs the synchronization.
If a thread in~$\mem_1$ wishes to
sync with~$b$ and a thread~$b$ in~$\mem_2$ wishes to return its
value, then the thread pool~$\mem_1\mcp \mem_2$ can step
silently, performing the synchronization.
Without loss of generality,~$\mem_1$ can come first because thread pools are
identified up to ordering.
The last two rules allow threads to
step when concatenated with other threads and under bindings.

We will show as part of the type safety theorem that any thread pool may
be, through the congruence rules, placed in a normal
form~$\mbconfig{\sig}{\dthread{a_1}{\prio_1}{\cmd_1} \mcp \dots \mcp
  \dthread{a_n}{\prio_n}{\cmd_n}}$ and that
stepping one of these threads does not affect the rest of the thread pool
other than by spawning new threads.
This property, that transitions of separate threads do not impact each other,
is key to parallel functional programs and allows us to cleanly talk about
taking multiple steps of separate threads in parallel.
This is expressed by the judgment~$\mem \pgstep{\acts} \mem'$,
which allows all of the threads in the set~$\acts$ to step silently in
parallel.
The only rule for this judgment is~\rulename{DT-Par}, which
steps any number of threads in a nondeterministic fashion.
We do not impose any sort of scheduling algorithm in the semantics, nor even
a maximum number of threads.
When discussing cost bounds, we will quantify over executions which choose
threads in certain ways.

%We are now ready to state and prove the progress theorem for our semantics.
We prove a version of the standard progress theorem for each syntactic class.
Progress for expressions is standard: a well-typed
expression is either a value or can take a step. The progress statement for
commands is similar, because commands can step (with a sync action) even if
they are waiting for other threads. The statement for
thread pools is somewhat counter-intuitive. One might expect it
to state that if a thread pool is well-typed, then either all threads are
complete or the thread pool can take a step. This statement is true but too
weak to be useful; because of the non-determinism in our semantics, such
a theorem would allow
for one thread to enter a ``stuck'' state as
long as any other thread is still able to make progress (for example, if it is
in an infinite loop).
Instead, we state that, in a well-typed thread pool, {\em every}
thread is either complete or is {\em active}, that is, able to take a step.

\begin{figure}
%\[
%\begin{array}{l l l l}
%  Actions & \alpha & \bnfdef & \asil \bnfalt
%  \async{b}{v} \bnfalt
%  \asend{b}{v}
%\end{array}
%\]

\[
\begin{array}{c}

\infer
{
}
{
  \atyped{\sig}{\asil}
}

\qquad

\infer
{
  \etyped{\sig, \sigtype{b}{\tau}{\prio}}{\cdot}{v}{\tau}
}
{
  \atyped{\sig, \sigtype{b}{\tau}{\prio}}{\async{b}{v}}
}

\qquad

\infer
{
  \etyped{\sig, \sigtype{b}{\tau}{\prio}}{\cdot}{v}{\tau}
}
{
  \atyped{\sig, \sigtype{b}{\tau}{\prio}}{\asend{b}{v}}
}
\end{array}
\]

\caption{Static semantics for actions.}
\label{fig:act}
\end{figure}

The progress theorems for commands and thread pools also state that, if the
command or thread pool can take a step, the action performed by that step is
well-typed. The typing rules for actions are shown in
Figure~\ref{fig:act} and require that the value returned or received
match the type of the thread.

\begin{thm}[Progress]\label{thm:prog}
\begin{enumerate}
\item If $\etyped{\sig}{\cdot}{e}{\tau}$, then
  either~$e \val{\sig}$ or $e \estep{\sig} e'$.
\item If $\cmdtyped[\worlds]{\sig}{\cdot}{\cmd}{\tau}{\prio}$,
  then either~$\cmd = \kwret{e}$ where $e \val{\sig}$
  or $\cmd \lstep[\prio]{\sig}{\act} \rpconfig{\sig'}{\cmd'}{\mem}$ where
  $\atyped{\sig}{\act}$.
%  $\waitingon{\cmd}{b}$ and $\dthread{b}{\cmd_b} \in \mem$ or
%  $\lconfig{\sig}{\dthread{a}{\cmd}}{\mem} \tstep{\prel}
%  \rconfig{\sig'}{\dthread{a}{\cmd'}}$.
\item If $\mtyped{\worlds}{\sig}{\mem}{\sig'}$ and
  $\sig',\sig'' = \sigtype{a_1}{\tau_1}{\prio_1}, \dots,
  \sigtype{a_n}{\tau_n}{\prio_n}$,
  then $\mem \meq \mbconfig{\sig''}{\dthread{a_1}{\prio_1}{\cmd_1} \mcp \dots \mcp
  \dthread{a_n}{\prio_n}{\cmd_n}}$ and for all
  $i \in [1, n]$, we
  have $\mem \gstep{\sig,\sig'}{\tact{a_i}{\act}} \mem'$
  and~$\atyped{\sig, \sig'}{\act}$.
  %\mbconfig{\sig'''}{\dthread{a_1}{\prio_1}{\cmd}\mcp \dots \mcp
  %  \dthread{a_i}{\prio_i}{\cmd_i'}\mcp \mem' \mcp \dots \mcp
  %  \dthread{a_n}{\prio_n}{\cmd_n}}$ where
  %and if~$\act=\asend{a_i}{v}$ then~\cmd_i'=\cmd_i$.
\end{enumerate}
\end{thm}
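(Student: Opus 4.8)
The three parts are established in sequence: part~1 is self-contained, part~2 invokes part~1, and part~3 invokes part~2 (hence, transitively, part~1). Each part is proved by induction on the typing derivation named in its hypothesis. The only auxiliary result needed is a \emph{canonical forms} lemma: a closed value $v$ with $\etyped{\sig}{\cdot}{v}{\tau}$ has the shape dictated by $\tau$ --- a $\lambda$-abstraction if $\tau = \kwarr{\tau_1}{\tau_2}$, a numeral $\kwn$ if $\tau = \kwnat$, a pair $\kwpair{v_1}{v_2}$ if $\tau = \kwprod{\tau_1}{\tau_2}$, a tagged value if $\tau = \kwsum{\tau_1}{\tau_2}$, an encapsulated command $\kwcmd{\prio}{\cmd}$ if $\tau = \kwcmdt{\tau'}{\prio}$, a priority abstraction $\kwwlam{\vprio}{\cons}{e}$ if $\tau = \kwforall{\vprio}{\cons}{\tau'}$, and $\kwtid{a}$ with $\sigtype{a}{\tau'}{\prio} \in \sig$ if $\tau = \kwat{\tau'}{\prio}$. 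This follows by inspecting the value typing rules. For part~1 I would then induct on $\etyped{\sig}{\cdot}{e}{\tau}$: value cases are immediate; the \rulename{var} case is vacuous since the context is empty; for each elimination form ($\kwapply{v_1}{v_2}$, $\kwfst{v}$, $\kwifz{v}{e_1}{x}{e_2}$, $\kwcase{v}{x}{e_1}{y}{e_2}$, $\kwwapp{v}{\prio'}$, and their relatives) the principal subterm is already a value, so canonical forms fixes its shape and the matching reduction rule of \figref{exp-dyn} fires; $\kwepair{v_1}{v_2}$, $\kweinl{v}$, $\kweinr{v}$ step to their value forms; $\kwoutput{v}$, $\kwinput$, $\kwfix{x}{\tau}{e}$ always step; and for $\kwlet{x}{e_1}{e_2}$ the induction hypothesis on $e_1$ gives either a value (apply \rulename{D-Let}) or a step of $e_1$ (apply \rulename{D-Let-Step}).

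For part~2 I would induct on $\cmdtyped[\worlds]{\sig}{\cdot}{\cmd}{\tau}{\prio}$. In \rulename{Ret}, part~1 makes the body either a value (done) or a term that steps (\rulename{D-Ret}). In \rulename{Spawn}, \rulename{D-Spawn} always applies with the silent action, which is well-typed by the first rule of \figref{act}. In \rulename{Sync}, $\cmd = \kwbsync{e}$ with $e : \kwat{\tau}{\prio'}$; by part~1 either $e$ steps (\rulename{D-Sync1}) or, by canonical forms, $e = \kwtid{b}$ with $\sigtype{b}{\tau}{\prio'} \in \sig$ and \rulename{D-Sync2} fires, emitting $\async{b}{v}$; to discharge $\atyped{\sig}{\async{b}{v}}$ we supply a closed value of type $\tau$ under $\sig$, obtained from a type-inhabitation lemma. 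In \rulename{Bind}, $\cmd = \kwbind{e}{x}{\cmd_2}$ with $e : \kwcmdt{\tau''}{\prio}$; if $e$ steps use \rulename{D-Bind1}, otherwise canonical forms gives $e = \kwcmd{\prio}{\cmd_1}$ and inversion on its typing (a sub-derivation of the current one) types $\cmd_1$ at priority $\prio$, so the induction hypothesis on $\cmd_1$ yields either $\cmd_1 = \kwret{v}$ with $v$ a value (apply \rulename{D-Bind3}) or a step of $\cmd_1$ (apply \rulename{D-Bind2}, carrying over well-typedness of the inner action).

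For part~3 I would first prove the \emph{normal-form} claim $\mem \meq \mbconfig{\sig''}{\dthread{a_1}{\prio_1}{\cmd_1} \mcp \dots \mcp \dthread{a_n}{\prio_n}{\cmd_n}}$ by induction on $\mtyped{\worlds}{\sig}{\mem}{\sig'}$, using the congruence rules of \figref{tp-cong} together with $\alpha$-renaming to keep the $\nu$-bound thread names disjoint: \rulename{Empty} gives $n = 0$; \rulename{OneThread} is already in normal form; \rulename{Concat} merges two normal forms by floating both binders outward and coalescing the two $\mcp$-lists; \rulename{Extend} absorbs an additional binder. Now fix $i$. Inverting the thread-pool typing along this normal form recovers $\cmdtyped[\worlds]{\sig,\sig'\setminus\{\sigtype{a_i}{\tau_i}{\prio_i}\}}{\cdot}{\cmd_i}{\tau_i}{\prio_i}$ (each thread is typed against the ambient signature and the other threads, but not itself). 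Apply part~2 to $\cmd_i$: if $\cmd_i = \kwret{v}$ then \rulename{DT-Ret} lets $\dthread{a_i}{\prio_i}{\kwret{v}}$ emit the well-typed action $\asend{a_i}{v}$; otherwise part~2 gives a step $\cmd_i \lstep[\prio_i]{\sig}{\act} \rpconfig{\sig_i'}{\cmd_i'}{\mem_i'}$ with $\atyped{\sig}{\act}$, and \rulename{DT-Thread} lifts it to a step of $\dthread{a_i}{\prio_i}{\cmd_i}$ labeled $\tact{a_i}{\act}$. In either case, repeated use of \rulename{DT-Concat} re-inserts the remaining threads and \rulename{DT-Extend} re-inserts the outer binder $\sig''$, producing $\mem \gstep{\sig,\sig'}{\tact{a_i}{\act}} \mem'$ with a well-typed action, as required. (If $\cmd_i$ is blocked on $\kwbsync{\kwtid{b}}$ one could instead synchronize via \rulename{DT-Sync} once $b$ has returned, but this is not needed for progress.)

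\textbf{Main obstacle.} Parts~1 and~2 are routine given canonical forms. The real work is in part~3: proving the normal-form lemma cleanly --- tracking the scope of the $\nu$-bound thread names under the congruence rules and the $\alpha$-renaming --- and then the signature bookkeeping, namely inverting the thread-pool typing to expose each $\cmd_i$ under exactly the ambient signature expected by \rulename{DT-Thread} (which omits $a_i$ itself but includes every other running thread) and reassembling the per-thread transition into a transition of the whole pool.
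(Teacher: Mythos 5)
Your proposal follows essentially the same route as the paper's proof: part~1 by induction on the expression typing derivation with a canonical forms lemma, part~2 by induction on the command typing derivation invoking part~1 (and, in the \rulename{Bind} case, the inner command typing nested inside the $\cmdsym I$ subderivation, exactly as the paper does), and part~3 by induction on the thread-pool typing, establishing the normal form via the congruence rules and lifting the per-thread step with \rulename{DT-Thread}, \rulename{DT-Concat}, and \rulename{DT-Extend}. The signature bookkeeping you identify as the main obstacle is precisely what the paper's \rulename{Concat} and \rulename{Extend} cases spend their effort on.

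One caveat on a point where you go beyond the paper: in the \rulename{Sync} case of part~2 you discharge $\atyped{\sig}{\async{b}{v}}$ by appeal to a ``type-inhabitation lemma.'' Such a lemma is false in general for this calculus: the only closed values of type $\kwat{\tau'}{\prio''}$ are thread handles $\kwtid{c}$ with $\sigtype{c}{\tau'}{\prio''} \in \sig$, and an arbitrary signature need not contain one, so if the synced thread's return type is itself a thread-handle type the required value may not exist. The paper's own proof silently elides this (it just applies \rulename{D-Sync2}, whose value is ``guessed'' and only pinned down later by \rulename{DT-Sync}), so your treatment is no weaker than the paper's in substance; but you should not present inhabitation as a lemma you can prove. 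This does not affect the overall structure of your argument.
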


\ifproofs
\begin{proof}
\begin{enumerate}
\item By induction on the derivation of
  $\etyped{\sig}{\cdot}{e}{\tau}$. Consider three representative cases.
  \begin{itemize}
  \item \rulename{\kwnat E}. Then $e = \kwifz{v}{e_1}{x}{e_2}$. By inversion,
    $\etyped{\sig}{\cdot}{v}{\kwnat}$. By canonical forms,
    $v = \kwnumeral{n}$ and either
    $e \estep{\sig} e_1$ or $e \estep{\sig} [\kwnumeral{n-1}/x]e_2$.
  \item \rulename{\arrsym E}. Then $e = \kwapply{v_1}{v_2}$. By inversion,
    $\etyped{\sig}{\cdot}{v_1}{\kwarr{\tau_1}{\tau}}$ and
    $\etyped{\sig}{\cdot}{v_2}{\tau_1}$.
    By canonical forms, $e_1 = \kwfun{x}{e_0}$ and
    $e \estep{\sig} [e_2/x]e_0$.
  \item \rulename{\fasym E}. Then $e = \kwwapp{v}{\prio'}$. By inversion,
    $\etyped{\sig}{\cdot}{v}{\kwforall{\vprio}{\cons}{\tau'}}$,
    where $\tau = [\prio/\vprio]\tau'$.
    By canonical forms, $v = \kwwlam{\vprio}{\cons}{e_0}$
    and~$e$ steps by the transition rules.
  \end{itemize}

\item By induction on the derivation of
  $\cmdtyped[\worlds]{\sig}{\cdot}{\cmd}{\tau}{\prio}$.
  \begin{itemize}
  \item \rulename{Bind}. Then $\cmd = \kwbind{e}{x}{\cmd_2}$. By inversion,
    $\etyped{\sig}{\cdot}{e}{\kwcmdt{\tau'}{\prio}}$
    and $\cmdtyped[\worlds]{\sig}{\hastype{x}{\tau}}
    {\cmd_2}{\tau}{\prio}$.
    By induction, either $e \val{\sig}$ or $e \estep{\sig} e'$.
    In the second case,~$\cmd$ steps by rule~\rulename{Bind1}.
    In the first case, by canonical forms, $e = \kwcmd{\prio}{\cmd_1}$ and,
    by inversion on the expression typing rules,
    $\cmdtyped[\worlds]{\sig}{\cdot}{\cmd_1}{\tau'}{\prio}$.
    By induction, either $\cmd_1 = \kwret{e}$ where $e \val{\sig}$ or
    $\cmd_1$ takes a step. In both cases,
    $\cmd$ takes a step~(\rulename{Bind3} or~\rulename{Bind2}).
  \item \rulename{Spawn}. Apply rule~\rulename{Spawn}.
  \item \rulename{Sync}. Then $\cmd = \kwbsync{e}$. By inversion,
    $\etyped{\sig}{\cdot}{e}{\kwat{\tau}{\prio'}}$.
    %and $\ple{\prio}{\prio'}$.
    By induction, either $e \val{\sig}$ or $e \estep{\sig} e'$. In the
    second case,~$\cmd$ steps by rule~\rulename{Sync1}.
    In the first case, by canonical forms, $e = \kwtid{b}$.
    Apply rule~\rulename{Sync2}.
%    By inversion on the typing rules, $\sigtype{b}{\tau'}{\prio'} \in \sig$,
%    so $\cmd$ steps by the transition rules and~$\atyped{\sig}{\async{b}{v}}$.
%    By inversion on the derivation of $\mtyped{\worlds}{\sig}{\mem}$,
%    we have $\dthread{b}{\cmd_b} \in \mem$,
    %% \item \rulename{\boxsym E}. Then $\cmd = \kwletbox{x}{e}{\cmd_0}$. By
  %%   inversion, $\etyped{\sig}{\ctxify{\worlds}}{e}{\kwboxt{\tau'}}$ and
  %%   $\cmdtyped{\sig}{\ctxify{\worlds}, \hastype{x}{\kwcmdt{\tau'}{\prio}}}
  %%   {\cmd_0}{\tau}{\prio}$.
  %%   By induction, either $e \val{\sig}$ or $e \estep{\sig} e'$. In the second
  %%   case, $\cmd$ steps to $\kwletbox{x}{e'}{\cmd_0}$. In the first case,
  %%   by canonical forms, $e = \kwbox{\vprio}{\kwcmd{\vprio}{\cmd_1}}$ and
  %%   $\cmd$ steps to $[\prio/\vprio][\kwcmd{\vprio}{\cmd_1}/x]\cmd_0$.
  %% \item \rulename{\diasym E}. Then $\cmd = \kwletdia{\vprio}{x}{e}{\cmd_0}$. By
  %%   inversion, $\etyped{\sig}{\ctxify{\worlds}}{e}{\kwdia{\tau'}}$ and
  %%   $\cmdtyped{\sig}
  %%   {\ctxify{\worlds}, \vprio \isprio, \hastype{x}{\kwcmdt{\tau'}{\prio}}}
  %%   {\cmd_0}{\tau}{\prio}$.
  %%   By induction, either $e \val{\sig}$ or $e \estep{\sig} e'$. In the second
  %%   case, $\cmd$ steps to $\kwletdia{\vprio}{x}{e'}{\cmd_0}$. In the first case,
  %%   by canonical forms, $e = \kwthere{\prio'}{\kwcmd{\prio'}{\cmd_1}}$ and
  %%   $\cmd$ steps to $[\kwcmd{\prio'}{\cmd_1}/x][\prio'/\vprio]\cmd_0$.
  \item \rulename{Ret}. Then $\cmd = \kwret{e}$. By inversion,
    $\etyped{\sig}{\cdot}{e}{\tau}$.
    By induction, either $e \val{\sig}$ or $e \estep{\sig} e'$. In the
    second case,~$\cmd$ steps by rule~\rulename{Ret}. In the first case,
    the conclusions are trivially satisfied.
  %% \item \rulename{ThreadE}. Then $\cmd = \kwdo{e}$. By inversion,
  %%   $\etyped{\sig}{\ctxify{\worlds}}{e}{\kwcmd{\tau}{\prio}}$.
  %%   By induction, either $e \val{\sig}$ or $e \estep{\sig} e'$. In the second
  %%   case,~$\cmd$ steps to~$\kwdo{e'}$. In the first case, by canonical forms,
  %%   $e = \kwcmd{\prio}{\cmd_0}$ and~$\cmd$ steps to~$\cmd_0$.
  \end{itemize}
\item
  By induction on the derivation of $\mtyped{\worlds}{\sig}{\mem}{\sig'}$.
  We consider the interesting cases.
  \begin{itemize}
  \item \rulename{Concat}.
    By inversion, $\mtyped{\worlds}{\sig,\sig_2}{\mem_1}{\sig_1}$ and
    $\mtyped{\worlds}{\sig,\sig_1}{\mem_2}{\sig_2}$.
    By induction, $\mem_1 \meq \mbconfig{\sig_1'}{\dthread{a_1}{\prio_1}{\cmd_1}
      \mcp \dots \mcp \dthread{a_n}{\prio_n}{\cmd_n}}$ and
    $\mem_2 \meq \mbconfig{\sig_2'}{\dthread{a_{n+1}}{\prio_{n+1}}{\cmd_{n+1}}
      \mcp \dots \mcp \dthread{a_m}{\prio_m}{\cmd_m}}$,
    where
    \iftr\[\sig_1,\sig_1' = \sigtype{a_1}{\tau_1}{\prio_1}, \dots,
    \sigtype{a_n}{\tau_n}{\prio_n}\] and
    \[\sig_2,\sig_2' = \sigtype{a_{n+1}}{\tau_{n+1}}{\prio_{n+1}}, \dots,
    \sigtype{a_m}{\tau_m}{\prio_m}\]%
    \else $\sig_1,\sig_1' = \sigtype{a_1}{\tau_1}{\prio_1}, \dots,
    \sigtype{a_n}{\tau_n}{\prio_n}$
    and~$\sig_2,\sig_2' = \sigtype{a_{n+1}}{\tau_{n+1}}{\prio_{n+1}}, \dots,
    \sigtype{a_m}{\tau_m}{\prio_m}$.
    \fi
    We also have that for all~$i \in [1, n]$,
    $\mem_1 \gstep{\sig,\sig_2,\sig_1}{\tact{a_i}{\act_i}} \mem_1'$ and
    for all~$i \in [n+1,m]$,
    $\mem_2 \gstep{\sig,\sig_1,\sig_2}{\tact{a_i}{\act_i}} \mem_2'$. We have
    $\mem_1 \mcp \mem_2 \meq \mbconfig{\sig_1', \sig_2'}{\mem_1' \mcp \mem_2'}$,
    so the conclusion holds by weakening and~\rulename{DT-Concat-One}.
  \item \rulename{Extend}. Then
    $\mem = \mbconfig{\sig'}{\mem'}$ and~$\mtyped{\sig}{\mem}{\sig''}$.
    By inversion,
    $\mtyped{\worlds}{\sig}{\mem'}{\sig',\sig''}$. By induction,
    $\mem' \meq \mbconfig{\sig'''}{\dthread{a_1}{\prio_1}{\cmd_1}
      \mcp \dots \mcp \dthread{a_n}{\prio_n}{\cmd_n}}$,
    \iftr
    where
    \[\sig',\sig'',\sig'' = \sigtype{a_1}{\tau_1}{\prio_1}, \dots,
    \sigtype{a_n}{\tau_n}{\prio_n}\]
    \else
    where~$\sig',\sig'',\sig'' = \sigtype{a_1}{\tau_1}{\prio_1}, \dots,
    \sigtype{a_n}{\tau_n}{\prio_n}$.
    \fi
    We also have that for all~$i \in [1, n]$,
    $\mem' \gstep{\sig,\sig''}{\act} \mem''$. By the congruence rules,
    $\mem \meq \mbconfig{\sig', \sig'''}{\dthread{a_1}{\prio_1}{\cmd_1}
      \mcp \dots \mcp \dthread{a_n}{\prio_n}{\cmd_n}}$ and the conclusion
    holds by weakening and~\rulename{DT-Extend}.
  %If $\mem = \emem$, the conclusion is vacuously true.
  %If $\sig = \sig, \sigtype{a}{\tau}{\prio}$ and
  %$\mem = \dthread{a}{\cmd} \mcp \mem'$, then by inversion,
  %$\cmdtyped{\sig}{\ctxify{\worlds}}{\cmd}{\tau}{\prio}$ and
  %$\mtyped{\worlds}{\sig'}{\mem'}$. By part (2), the conclusion is true
  %for thread~$a$. By induction, it is true of the remainder of the threads
    %in~$\sig'$.
  \end{itemize}
\end{enumerate}
\end{proof}
\else
%See the technical report~\citep{partial-prio-tr} for the proof.
\fi

The preservation theorem is also split into components for expressions, commands
and thread pools. The theorem for commands requires that any new threads spawned
($\mem'$) meet the extension of the signature ($\sig'$).

\begin{thm}[Preservation]\label{thm:pres}
\begin{enumerate}
\item If $\etyped{\sig}{\cdot}{e}{\tau}$ and
  $e \estep{\sig} e'$, then $\etyped{\sig}{\cdot}{e'}{\tau}$.
\item If $\cmdtyped[\worlds]{\sig}{\cdot}{\cmd}{\tau}{\prio}$
  %and $\mtyped{\sig}{\mem}$ and $\lconfig{\sig, \sigtype{a}{\cmd}{\prio}}
%  {\dthread{a}{\cmd}}{\mem} \tstep{\prel}
  %  \rconfig{\sig, \sig', \sigtype{a}{\tau}{\prio}}{\dthread{a}{\cmd'} \mcp \mem'}$,
  and $\cmd \lstep{\sig}{\act} \rpconfig{\sig'}{\cmd'}{\mem'}$ and
  $\atyped{\sig}{\act}$
  then
  $\cmdtyped[\worlds]{\sig, \sig'}{\cdot}{\cmd'}{\tau}{\prio}$
  and $\mtyped{\worlds}{\sig}{\mem'}{\sig'}$.
\item If $\mtyped{\worlds}{\sig}{\mem}{\sig'}$
  % and~$\wedges$ is compatible
  %with~$\sig$ and $\wedges$ is compatible with $\mem$ and
  %$\mconfig{\sig}{\mem}; \graph \mstep{\prel} \mconfig{\sig, \sig'}{\mem'}; \wedges'$,
  and $\mem \gstep{\sig}{\act} \mem'$
  then $\mtyped{\worlds}{\sig}{\mem'}{\sig'}$% and~$\wedges'$
  %  is compatible with $\sig, \sig'$ and $\graph$ is compatible with $\mem'$.
\item If $\mtyped{\worlds}{\esig}{\mem}{\sig}$ and
  $\mem \pgstep{\acts} \mem'$ then
  $\mtyped{\worlds}{\esig}{\mem'}{\sig}$.
\end{enumerate}
\end{thm}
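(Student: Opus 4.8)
The plan is to prove the four parts in the order stated, each by a routine induction on the relevant transition derivation, leaning on the substitution principles of \lemref{subst} and on the (equally routine) signature- and context-weakening principles used throughout the paper, together with, where needed, the easy fact that thread-pool typing is invariant under the congruence $\meq$ of \figref{tp-cong} (the commutativity/associativity of $\mcp$ in particular). Part~1 is standard: induct on the step $e \estep{\sig} e'$, dispatch the congruence case \rulename{D-Let-Step} via inversion on \rulename{let} and the induction hypothesis, and dispatch the reduction cases \rulename{D-Let}, \rulename{D-App}, \rulename{D-Ifz-NZ}, \rulename{D-Case-L}, \rulename{D-Case-R}, \rulename{D-PrApp}, \rulename{D-Fix} by inverting the corresponding typing rule and applying the appropriate clause of \lemref{subst} (clause~1 for ordinary substitution, clause~3 for the priority substitution in \rulename{D-PrApp}, and the self-substitution instance for \rulename{D-Fix}); the allocation, output, and input cases are immediate from the value-typing rules. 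No priority reasoning is needed, since expression typing does not constrain priorities.

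For Part~2 I would induct on the derivation of $\cmd \lstep{\sig}{\act} \rpconfig{\sig'}{\cmd'}{\mem'}$. The evaluation cases \rulename{D-Bind1}, \rulename{D-Sync1}, \rulename{D-Ret} have $\act = \asil$, $\sig' = \esig$ and $\mem' = \emem$: invert \rulename{Bind}, \rulename{Sync}, or \rulename{Ret}, apply Part~1 to the stepping subexpression, re-apply the same typing rule, and type $\emem$ by \rulename{Empty}. In \rulename{D-Bind2}, invert \rulename{Bind} to get $\cmd_1$ well-typed at the ambient priority, apply the induction hypothesis (which may return an extension $\sig'$ and a pool $\mem'$ typed at $\sig'$), weaken the typing of the continuation to the larger signature, and re-apply \rulename{Bind}. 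In \rulename{D-Bind3}, invert twice to obtain $\etyped{\sig}{\cdot}{v}{\tau_1}$ and $\cmdtyped{\sig}{\hastype{x}{\tau_1}}{\cmd}{\tau}{\prio}$, then conclude by \lemref{subst}, clause~2. In \rulename{D-Spawn}, inversion on \rulename{Spawn} gives $\cmdtyped{\sig}{\cdot}{\cmd}{\tau}{\prio'}$; take $\sig' = \sigtype{b}{\tau}{\prio'}$, so $\kwtid{b}$ has type $\kwat{\tau}{\prio'}$ by \rulename{Tid} and $\kwret{\kwtid{b}}$ is well-typed at the ambient priority by \rulename{Ret}, while $\mem' = \dthread{b}{\prio'}{\cmd}$ is typed at $\sig'$ by \rulename{OneThread} after weakening $\cmd$'s typing to include $b$. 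Finally, in \rulename{D-Sync2}, invert \rulename{Sync} to learn the handle $\kwtid{b}$ has type $\kwat{\tau}{\prio'}$, hence $\sigtype{b}{\tau}{\prio'} \in \sig$; the hypothesis $\atyped{\sig}{\async{b}{v}}$ then forces $\etyped{\sig}{\cdot}{v}{\tau}$ (signatures are functional), so $\kwret{v}$ is well-typed by \rulename{Ret}, with $\sig' = \esig$ and $\mem' = \emem$. This last case is the sole place where the well-typed-action hypothesis does real work, since the rule ``guesses'' the returned value $v$.

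Part~3 is an induction on $\mem \gstep{\sig}{\tact{a}{\act}} \mem'$. \rulename{DT-Ret} leaves the pool unchanged. For \rulename{DT-Thread}, the underlying command step is handled by Part~2 — its side condition $\atyped{\sig}{\act}$ holds because the only non-silent actions a single thread can emit are sync and return actions, typed exactly as above — yielding $\cmd'$ at the extended signature and $\mem'$ typed at $\sig'$; then reassemble $\mbconfig{\sig'}{\dthread{a}{\prio}{\cmd'} \mcp \mem'}$ using \rulename{OneThread}, \rulename{Concat} (whose ``each half may refer to the other's introduced names'' discipline is precisely what is wanted), signature weakening, and \rulename{Extend}. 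For \rulename{DT-Concat} and \rulename{DT-Extend}, invert \rulename{Concat} or \rulename{Extend}, apply the induction hypothesis to the stepping component, and re-apply the rule. For \rulename{DT-Sync}, invert \rulename{Concat} to type $\mem_1$ and $\mem_2$ separately; the second premise says $\mem_2$ is unchanged, and the first premise $\mem_1 \gstep{\sig}{\tact{a}{\async{b}{v}}} \mem_1'$, matched with $\mem_2 \gstep{\sig}{\tact{b}{\asend{b}{v}}} \mem_2$, pins $v$ down to be the actual (well-typed) return value of thread $b$, so the induction hypothesis applies and we reconcatenate. Part~4 follows at once: by \rulename{DT-Par}, $\mem \pgstep{\acts} \mem'$ is witnessed by a family of \emph{silent} thread-pool steps, each carrying the trivially well-typed action $\asil$, so iterating Part~3 over them gives the result.

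The main obstacle is bookkeeping rather than anything conceptual: keeping the signatures straight across Parts~2 and~3 — tracking exactly which fresh thread names a command step introduces (the $\sig'$ in $\rpconfig{\sig'}{\cmd'}{\mem'}$), feeding the right signature into the action-typing side condition, and verifying that \rulename{Concat}'s mutual-reference discipline and signature weakening line up when a spawning thread is folded back into the pool. The \rulename{D-Sync2}/\rulename{DT-Sync} interaction — where the return value is chosen locally by one rule and only constrained once it is matched against the thread that produces it — is the subtle point, and is the reason Part~2 carries an explicit well-typed-action hypothesis; stating that hypothesis and threading it correctly through the thread-pool induction is what needs care.
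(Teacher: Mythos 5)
Your proposal is correct and follows essentially the same route as the paper's proof: induction on the relevant transition derivation for each part, inversion on the typing rules, the substitution principles of \lemref{subst} for the reduction cases, weakening when folding spawned threads back into the signature, and the action-typing hypothesis doing its only real work in the \rulename{D-Sync2}/\rulename{DT-Sync} interaction. Your explicit discussion of how \rulename{DT-Sync} pins down the ``guessed'' value is, if anything, slightly more careful than the paper's own treatment of that point.
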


\ifproofs
\begin{proof}
\begin{enumerate}
\item By induction on the derivation of $e \estep{\sig} e'$.
  %% Consider some representative cases.
  %% \ifproofs
  %% \begin{itemize}
  %% \item $\kwifz{e}{e_1}{x}{e_2} \estep{\sig} \kwifz{e'}{e_1}{x}{e_2}$.
  %%   By inversion on the typing rules, $\etyped{\sig}{\cdot}{e}{\kwnat}$
  %%   and $\etyped{\sig}{\cdot}{e_1}{\tau}$ and
  %%   $\etyped{\sig}{\hastype{x}{\kwnat}}{e_2}{\tau}$.
  %%   By induction, $\etyped{\sig}{\cdot}{e'}{\kwnat}$. Apply rule
  %%   \rulename{\kwnat E}.
  %% \item $\kwifz{e}{e_1}{x}{e_2} \estep{\sig} e_1$.
  %%   By inversion on the typing rules, $\etyped{\sig}{\cdot}{e_1}{\tau}$.
  %% \item $\kwifz{e}{e_1}{x}{e_2} \estep{\sig} [\kwn/x]e_2$.
  %%   By inversion on the typing rules,
  %%   By induction, $\etyped{\sig}{\cdot}{e'}{\kwnat}$. Apply substitution.
  %% \item $\kwwapp{(\kwwlam{\vprio}{\cons}{e})}{\prio'} \estep{\sig}
  %%   [\prio'/\vprio]e$. By inversion on the typing rules,
  %%   $\etyped{\sig}{\cdot}{\kwwlam{\vprio}{\cons}{e}}
  %%   {\kwforall{\vprio}{\cons}{\tau'}}$, where $\tau = [\prio'/\vprio]\tau'$, and
  %%   $\meetc[\worlds]{\cdot}{[\prio'/\vprio]\cons}$.
  %%   By continued inversion,
  %%   $\etyped[\worlds]{\sig}{\vprio \isprio, \cons}{e}{\tau'}$.
  %%   By substitution,
  %%   $\etyped[\worlds]{\sig}{\cdot}%[\prio'/\vprio]\cons}
  %%   {[\prio'/\vprio]e}{[\prio'/\vprio]\tau'}$.
  %% \end{itemize}
  %% \fi
\item By induction on the derivation of $\cmd \lstep{\sig}{\act} \rpconfig{\sig'}{\cmd'}{\mem'}$.
  \begin{itemize}
  \item \rulename{Bind1}. By inversion on the typing rules,
    $\etyped[\worlds]{\sig}{\cdot}{e}{\kwcmdt{\tau'}{\prio'}}$ and
    $\cmdtyped[\worlds]{\sig}{\hastype{x}{\tau'}}
    {\cmd}{\tau}{\prio}$. By induction,
    $\etyped[\worlds]{\sig}{\cdot}{e'}{\kwat{\tau'}{\prio'}}$.
    Apply rule \rulename{Bind}.
  \item \rulename{Bind2}. By inversion on the typing rules,
    $\etyped[\worlds]{\sig}{\cdot}{\kwcmd{\prio}{\cmd_1}}
    {\kwcmdt{\tau'}{\prio}}$ and
    $\cmdtyped[\worlds]{\sig}{\cdot}{\cmd_1}{\tau'}{\prio}$ and
    $\cmdtyped[\worlds]{\sig}{\hastype{x}{\tau'}}
    {\cmd_2}{\tau}{\prio}$. By induction,
    $\cmdtyped[\worlds]{\sig, \sig'}{\cdot}{\cmd_1'}{\tau'}{\prio}$
    and $\mtyped{\worlds}{\sig, \sig'}{\mem'}$. By \rulename{\cmdsym I},
    $\etyped[\worlds]{\sig, \sig'}{\cdot}{\kwcmd{\prio}{\cmd_1'}}
    {\kwcmdt{\tau'}{\prio}}$
    By weakening,
    $\cmdtyped[\worlds]{\sig, \sig'}{\hastype{x}{\tau'}}
    {\cmd_2}{\tau}{\prio}$.
    Apply rule \rulename{Bind}.
  \item \rulename{Bind3}. By inversion on the typing rules,
    $\etyped[\worlds]{\sig}{\cdot}{e}{\tau'}$ and
    $\cmdtyped[\worlds]{\sig}{\hastype{x}{\tau'}}
    {\cmd}{\tau}{\prio}$.
    By substitution, $\cmdtyped[\worlds]{\sig}
    {\cdot}{[e/x]\cmd}{\tau}{\prio}$.
  \item \rulename{Spawn}. By inversion on the typing rules,
    $\cmdtyped[\worlds]{\sig}{\cdot}{\cmd}{\tau'}{\prio'}$.
    By rule~\rulename{OneThread},
    $\mtyped{\worlds}{\sig}{\dthread{b}{\cmd}}{\sigtype{b}{\tau'}{\prio'}}$.
    Apply rules \rulename{TID} and \rulename{Ret}.
  \item \rulename{Sync1}. By inversion on the typing rules,
    $\etyped[\worlds]{\sig}{\cdot}{e}{\kwat{\tau}{\prio'}}$ and
    $\ple{\prio}{\prio'}$.
    By induction,
    $\etyped[\worlds]{\sig}{\cdot}{e'}{\kwat{\tau}{\prio'}}$.
    Apply \rulename{Sync}.
  \item \rulename{Sync2}. By inversion on the typing rules,
    $\etyped[\worlds]{\sig}{\cdot}{\kwtid{b}}{\kwat{\tau}{\prio'}}$
    and $\ple{\prio}{\prio'}$.
    By inversion on the action typing rules,
    $\sigtype{b}{\tau}{\prio'} \in \sig$ and
    $\etyped[\worlds]{\sig}{\cdot}{v}{\tau}$.
    Apply \rulename{Ret}.
    %By inversion on the expression typing rules, 
    %By inversion on the derivation of
    %$\mtyped{\sig}{\mem \mcp \dthread{b}{\kwret{v}} \mcp \mem'}$,
    %we have .
    %
 % \item
  %% \item By inversion on the typing rules,
  %%   $\etyped{\sig}{\ctxify{\worlds}}{\kwbox{\vprio}{\kwcmd{\vprio}{\cmd_1}}}{\kwboxt{\tau'}}$
  %%   and
  %%   $\cmdtyped{\sig}{\ctxify{\worlds}, \hastype{x}{\kwcmdt{\tau'}
  %%       {\prio}}}{\cmd_2}{\tau}{\prio}$.
  %%   By inversion on the expression typing rules,
  %%   $\etyped{\sig}{\ctxify{\worlds}, \vprio \isprio}
  %%   {\kwcmd{\vprio}{\cmd_1}}{\kwcmdt{\tau'}{\vprio}}$.
  %%   By substitution,
  %%   $\cmdtyped{\sig}{\ctxify{\worlds}}{[\prio/\vprio][\kwcmd{\vprio}{\cmd_1}/x]\cmd_2}{\tau}{\prio}$.
  %% \item
  %% \item By inversion on the typing rules,
  %%   $\etyped{\sig}{\ctxify{\worlds}}{\kwthere{\prio'}{\kwcmd{\prio'}{\cmd_1}}}{\kwdia{\tau'}}$
  %%   and $\cmdtyped{\sig}{\ctxify{\worlds}, \vprio \isprio, \hastype{x}{\kwcmdt{\tau'}{\vprio}}}
  %%   {\cmd_2}{\tau}{\prio}$.
  %%   By inversion on the expression typing rules,
  %%   $\etyped{\sig}{\ctxify{\worlds}}{\kwcmd{\prio'}{\cmd_1}}{\kwcmdt{\tau'}{\prio'}}$
  %%   By substitution (XXX TODO),
  %%   $\cmdtyped{\sig}{\ctxify{\worlds}, \hastype{x}{\kwcmdt{\tau'}{\prio'}}}
  %%   {[\prio'/\vprio]\cmd_2}{\tau}{\prio}$. By substitution,
  %%   $\cmdtyped{\sig}{\ctxify{\worlds}}{[\kwcmd{\prio'}{\cmd_1}/x][\prio'/\vprio]\cmd_2}
  %%   {\tau}{\prio}$.
  \end{itemize}
\item By induction on the derivation of $\mem \gstep{\sig}{\act} \mem'$.
  We show the non-trivial cases.
  \begin{itemize}
  \item \rulename{DT-Thread}. By inversion on the typing rules,
    $\cmdtyped[\worlds]{\sig}{\cdot}{\cmd}{\tau}{\prio}$. By induction,
    $\cmdtyped[\worlds]{\sig, \sig'}{\cdot}{\cmd'}{\tau}{\prio}$
    and $\mtyped{\worlds}{\sig}{\mem'}{\sig'}$.
    Apply rules \rulename{OneThread}, \rulename{Concat} and \rulename{Extend}.
  \item \rulename{DT-Sync}. By inversion on the typing rules,
    $\mtyped{\worlds}{\sig,\sig_2}{\mem_1}{\sig_1}$ and
    $\mtyped{\worlds}{\sig,\sig_1}{\mem_2}{\sig_2}$.
    By induction, $\mtyped{\worlds}{\sig,\sig_2}{\mem_1'}{\sig_1}$ and
    $\mtyped{\worlds}{\sig,\sig_1}{\mem_2'}{\sig_2}$.
    Apply rule \rulename{Concat}.
  \item \rulename{DT-Concat}. By inversion on the typing rules,
    $\mtyped{\worlds}{\sig}{\mem_1}{\sig'}$.
    By induction, $\mtyped{\worlds}{\sig}{\mem_1'}{\sig'}$.
    Apply rule \rulename{Concat}.

  \item \rulename{DT-Extend}. By inversion on the typing rules,
    $\mtyped{\worlds}{\sig, \sigtype{a}{\tau}{\prio}}{\mem}{\sig'}$.
    By induction,
    $\mtyped{\worlds}{\sig, \sigtype{a}{\tau}{\prio}}{\mem'}{\sig'}$.
    Apply rule \rulename{Extend}.

  \end{itemize}
  \item By part 3 and~\rulename{Concat}.
\end{enumerate}
\end{proof}
\else
%See the technical report~\citep{partial-prio-tr} for the proof.
The proofs of both theorems can be found in the technical report~\citep{partial-prio-tr}.
\fi

%% In order to show the absence of priority inversions, we also show that a
%% well-typed command can only take a step labeled with a sync action if the
%% sync would not cause a priority inversion.

%% \begin{lem}\label{lem:prio-inv}
%%   If $\cmdtyped{\sig, \sigtype{a}{\tau}{\prio}}{\cdot}{\cmd}{\tau}{\prio}$
%%   and $\dthread{a}{\cmd} \gstep{\sig, \sigtype{a}{\tau}{\prio}}{\async{b}{v}} \mem'$,
%%   then $\sigtype{b}{\tau'}{\prio'} \in \sig$ and $\ple{\prio}{\prio'}$.
%% \end{lem}
%% \begin{proof}
%%   By induction on the derivation of
%%   $\dthread{a}{\cmd} \gstep{\sig, \sigtype{a}{\tau}{\prio}}{\async{b}{v}} \mem'$.
%%   The only thread pool rule that applies is~\rulename{ThreadT}, so continue
%%   by induction on the sub-derivation
%%   $\cmd \lstep[\prio]{\sig}{\act} \rpconfig{\sig'}{\cmd'}{\mem'}$.
%%   The only applicable non-inductive case is~\rulename{Sync2}. In this case,
%%   $\cmd = \kwbsync{\kwtid{b}}$. By inversion on the typing rules,
%%   we have~$\sigtype{b}{\tau'}{\prio'} \in \sig$ and $\ple{\prio}{\prio'}$.
%% \end{proof}

%% We can now combine the results of this section into a theorem that shows
%% both standard type safety and the additional result that no sync step causes
%% a priority inversion.

\begin{thm}[Type Safety]\label{thm:safety}
  If $\mtyped{\worlds}{\esig}{\dthread{a_0}{\prio_0}{\cmd_0}}
  {\sigtype{a_0}{\tau_0}{\prio_0}}$
  and $\dthread{a_0}{\prio_0}{\cmd_0} \pgstep{}^* \mem'$, then
  $\mem' \meq \mbconfig{\sigtype{a_1}{\tau_1}{\prio_1},\dots,
    \sigtype{a_n}{\tau_n}{\prio_n}}{\dthread{a_0}{\prio_0}{\cmd_0'}
      \mcp \dots \mcp
      \dthread{a_n}{\prio_n}{\cmd_n'}}$
  and for all~$i \in [1,n]$, we have
  $\mem' \gstep{\esig}{\tact{a_i}{\act}} \mem''$.
\end{thm}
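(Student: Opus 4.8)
The plan is to derive the statement from the Preservation and Progress theorems (\thmreftwo{pres}{prog}): first propagate typability along the many-step relation $\pgstep{}^{*}$ using Preservation, and then apply Progress once to the resulting thread pool to read off the normal form and the activeness of its threads.

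The first step I would carry out is a many-step preservation lemma: if $\mtyped{\worlds}{\esig}{\mem}{\sig}$ and $\mem \pgstep{}^{*} \mem'$, then $\mtyped{\worlds}{\esig}{\mem'}{\sig}$. This is a routine induction on the length of the reduction; the empty reduction is immediate, and a nonempty one is handled by applying the induction hypothesis to the prefix and then part~(4) of \thmref{pres} to the last step. Instantiating this lemma with $\mem = \dthread{a_0}{\prio_0}{\cmd_0}$ and $\sig = \sigtype{a_0}{\tau_0}{\prio_0}$ (well typed by hypothesis), together with the given reduction $\dthread{a_0}{\prio_0}{\cmd_0} \pgstep{}^{*} \mem'$, yields $\mtyped{\worlds}{\esig}{\mem'}{\sigtype{a_0}{\tau_0}{\prio_0}}$.

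Next I would apply part~(3) of \thmref{prog} to $\mem'$, taking the ambient signature to be $\esig$ and the introduced signature to be $\sigtype{a_0}{\tau_0}{\prio_0}$; the bound signature is then exactly the entries $\sigtype{a_1}{\tau_1}{\prio_1}, \dots, \sigtype{a_n}{\tau_n}{\prio_n}$ of the threads spawned during the reduction. Progress hands us precisely the normal form claimed in the statement --- the floating and merging of the $\nu$-bindings accumulated during the reduction into the single outer binding is done using the congruence rules of \figref{tp-cong} and is already built into Progress's conclusion --- together with, for each thread of the normal form, a step witnessing that it is active. It then remains only to observe that for the spawned threads $a_i$ with $i \geq 1$ this step can be taken under the empty ambient $\esig$ rather than under $\sigtype{a_0}{\tau_0}{\prio_0}$: rule \rulename{DT-Extend} exposes the entry $\sigtype{a_i}{\tau_i}{\prio_i}$ from the binding, and, by the fact recorded when the thread-pool semantics was introduced that a thread never refers to a thread that transitively spawned it, $\cmd_i'$ does not mention $a_0$, so $\sigtype{a_0}{\tau_0}{\prio_0}$ need not be in scope to step it. The root thread $a_0$ is correctly omitted from the activeness claim, since stepping it by \rulename{DT-Thread} would require $a_0$'s own signature entry, which is not available under $\esig$.

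The only part I expect to be more than bookkeeping is this last step: matching the rigid signature decomposition that the statement of Progress prescribes against the shape of $\mem'$ coming out of Preservation, and then arguing that the ambient may be weakened to $\esig$ for the spawned threads. The rest is a direct appeal to \thmreftwo{pres}{prog}.
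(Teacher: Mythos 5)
Your proposal is correct and takes essentially the same route as the paper, whose entire proof is the one-liner ``by inductive application of Theorem~\ref{thm:prog} and Theorem~\ref{thm:pres}'' --- i.e.\ many-step preservation via part~(4) of \thmref{pres} followed by a single application of part~(3) of \thmref{prog}. If anything you are more careful than the paper: the mismatch between the ambient signature $\sigtype{a_0}{\tau_0}{\prio_0}$ under which Progress delivers the steps and the $\esig$ appearing in the theorem's conclusion is silently elided there, and your strengthening argument (via the footnoted fact that a thread never refers to its transitive spawners, so $\cmd_i'$ cannot mention $a_0$) is exactly the bookkeeping needed to close that gap.
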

\begin{proof}
  By inductive application of Theorem~\ref{thm:prog} and Theorem~\ref{thm:pres}.
  %Part (3) follows from Lemma~\ref{lem:prio-inv}.
\end{proof}

\section{Cost Semantics}\label{sec:cost}
So far, we have presented a core calculus for writing parallel programs and
expressing responsiveness requirements using priority annotations.
We have not yet discussed how these requirements are met and what
guarantees can be made.
Doing so is the main theoretical contribution of the remainder of the paper.
We will show how to derive cost bounds (both for computation time and response
time) for {\calcname} programs and then show that, under reasonable
assumptions about scheduling, these bounds hold for the dynamic semantics
of Section~\ref{sec:dyn}.
We first (Section~\ref{sec:dag}) develop a cost model for parallel programs
with partially ordered thread priorities.
The model comes equipped with bounds on computation times and response times.
We then use this model (Sections~\ref{sec:cost-sem} and~\ref{sec:resp-time})
to reason about {\calcname} programs.

\subsection{A Cost Model for Prioritized Parallel Code}\label{sec:dag}
Parallel programs admit an elegant technique for reasoning about their
execution time, in the form of Directed Acyclic Graph, or
DAG models~\citep{BlellochGr95,BlellochGr96}.
Such a model captures the dependences between threads in a program
and, conversely, what portions may be parallelized.
%
%Results going back to~\citet{Brent74} produce from such a DAG an
%execution time bound for the program it represents.
%
%Later work~\citep{MullerAcHa17} extended these models to reason about
%the response time of interactive programs.
%
%In this section, we introduce a further extension of this model where
%threads are assigned priorities from a partial order, as in
%{\calcname}.
%
In DAG models of parallel programs, vertices
represent units of sequential computation and edges represent sequential
dependences.
For example, an edge~$(u_1, u_2)$ indicates that the
computation~$u_1$ must run before~$u_2$.
If there is no directed path
between~$u_1$ and~$u_2$, the two computations may run in parallel.
Without loss of generality, it is typically assumed that each vertex
represents a computation taking a single indivisible unit of time
(perhaps a single processor clock cycle).
These are the units in which we will measure execution time and response time.

%We now discuss how to extend existing models to {\em prioritized DAGs},
%which we will use to model {\calcname} programs.
%
Because threads play such an important role in the design of {\langname}
(and {\calcname}) programs, it will be helpful for us to distinguish in the
DAG model between edges that represent continuations of threads and edges
that represent synchronizations between threads.
In our model, a thread is a sequence of vertices
$\uthread = u_1\tscomp{} u_2 \tscomp{} u_3 \tscomp{} \dots \tscomp{}
u_n$, written~$\ethread$ when $n=0$,
representing a sequence of unit-time operations that are connected by a series of edges
$(u_1, u_2), (u_2, u_3), \dots, (u_{n-1}, u_n)$ representing
sequential dependences.
These are referred to as {\em thread edges} and ensure that the
operations of a thread are performed in the proper sequence.

We then combine threads into a DAG,
$\graph=\dagq{\dthreads}{\spawns}{\syncs}{\polls}$,
in which $\dthreads$ is a mapping from thread names
to a pair consisting of that thread's priority and its sequence of
operations.
We write an element of the mapping as~$\cthread{a}{\prio}{\uthread}$,
and we define~$\uprio{\graph}{u}$
as the priority of the thread to which~$u$ belongs.
The other two components of a DAG are the sets of {\em spawn edges},
$\spawns$, and {\em join edges}, $\syncs$.
A spawn edge~$(u, a)$ indicates that a vertex~$u$ spawned a thread~$a$.
It may be considered an edge from~$u$ to the first vertex of~$a$.
A join edge~$(a, u)$ indicates that vertex~$u$ syncs
(joins) with thread~$a$.
It may be considered an edge from the last vertex of~$a$ to
vertex~$u$.

If there is a path from~$u$ to~$u'$ (using any combination of thread, spawn
and join edges), we say that~$u$ is an {\em ancestor} of~$u'$ (and~$u'$ is
a {\em descendant} of~$u$), and write~$\anc{u}{u'}$.
We will define shorthands for a graph with the (proper) ancestors and
descendants of a vertex~$u$ removed:
\[
\begin{array}{c}
\noancs[\graph]{u} \defeq \graph \setminus \{u' \neq u \mid \anc{u'}{u} \}\\
\nodecs[\graph]{u} \defeq \graph \setminus \{u' \neq u \mid \anc{u}{u'} \}
\end{array}
\]
The {\em competitor work}, $\compwork{\graph}{a}$, of thread~$a$ is
the subgraph formed by the vertices that may be executed in a valid
schedule while~$a$ is active.  More precisely,
if~$\graph = \dagq{\cthread{a}{\prio}{s\tscomp{} \dots \tscomp{}
    t}}{\spawns} {\syncs}{\polls}$, then
  \[\compwork{\graph}{a} \defeq \graph \setminus \{u \neq s \mid \anc{u}{s}\}
  \setminus \{u \neq t \mid \anc{t}{u}\}\]
In these notations the underlying graph, $g$, is left implicit because
it will generally be clear from context.
%When the graph is clear from context (as will usually be the case), we omit it.

\paragraph{The Prompt Scheduling Principle.}
A {\em schedule} of a DAG simulates the execution of a parallel program on
a given number of processors.
The execution proceeds in time steps, each one time unit in length.
At each time step, the schedule designates some number of vertices to
be executed, bounded by the available number of processors, $P$.
A schedule may only execute a vertex if it is {\em ready}, that is, if
all of its ancestors in the DAG have been executed.
%
%One may think of a schedule as a form of pebbling: if~$P$ processors are
%available, at each time step, place up to~$P$ pebbles on ready vertices until
%all vertices have been pebbled.

A {\em greedy} schedule is one in which as many vertices as possible are executed in each time
step, bounded by~$P$ and the number of ready vertices.
%
%The length of, or number of steps in, greedy schedules was bounded by
%\citet{eagerzala89}.
%
Greedy schedules obey provable bounds on computation
time~\citep{EagerZaLa89}, but greediness is insufficient to place
bounds on response time.
To provide such bounds, a schedule must take into account the
thread priorities.
A {\em prompt} schedule~\citep{MullerAcHa17} is a greedy schedule that
prioritizes vertices according to their priority, with high priorities
preferred over low.
Prompt schedules have previously only been used in languages with two
priorities, so more care is required to apply them to an arbitrary
partial order.
%\citet{MullerAcHa17} introduced {\em prompt schedules} for DAGs with two
%priorities (foreground and background), and bounded the response time of
%such schedules. A prompt schedule is greedy, but prioritizes foreground
%vertices.
%
%The extension of the prompt scheduling principle to partially ordered
%priorities is straightforward.
%
At each step, we will assign at most~$P$ vertices to processors and then
execute all of the assigned vertices in parallel.
To begin, assign any ready vertex such that no unassigned vertex has a
higher priority,\footnote{Simply saying ``pick a vertex of the highest
  available priority'' would be correct in a totally ordered setting, but
  might be ambiguous in our partially ordered setting.}
and continue until~$P$ vertices are assigned or no ready vertices remain.
According to this definition, a prompt schedule is necessarily greedy.

\paragraph{Response Time}
Our goal is to show a bound on the response time of threads in
prompt schedules.
In a given schedule, the response time of a thread~$a$,
written~$\resptimeof{a}$, is defined as the number of steps
from when~$s$ is ready (exclusive) to when~$t$ is executed (inclusive).
If our definitions of priority are set up correctly, the response time
of a thread~$a$ at priority~$\prio$ should depend only on the
amount of work at priorities greater than, equal to,
or unrelated to~$\prio$ in the partial order.
%
%Because this set of priorities will come up in many places while analyzing
%response time, we use the notation~$\psnle{\prio}$ to refer to the set of
%priorities not less than~$\prio$.
%
Were the response time of a high-priority thread to depend on the
amount of low-priority work in the computation, there would be a
priority inversion in the schedule, a condition to be avoided.

\paragraph{Well-formed DAGs}
To prove a bound on response time that depends only on work at
priorities not less than~$\prio$, we will need to place an additional
restriction on DAGs.
Consider a DAG with two threads,~$\cthread{a}{\prio_a}{u_1 \tscomp{} \dots
  \tscomp{} u \tscomp{} \dots \tscomp{} u_n}$
and~$\cthread{b}{\prio_b}{\uthread_b}$, where~$\plt{\prio_b}{\prio_a}$.
Suppose there is a join edge~$(b, u)$ from~$b$ to~$a$.
Thread~$a$ will need to wait for~$b$ to complete, so the response
time of~$a$ depends on the length of thread~$b$.
The type system given earlier is designed to rule out such inversions
in programs; we must impose a similar restriction on computation DAGs.
%to rule them out as well.
%

A DAG is {\em well-formed} if no thread depends on lower-priority
work along its critical path.
More precisely, if a thread~$a$
consists of operations~$u_1 \tscomp{} \dots \tscomp{} u_n$, no vertex that
may be executed after~$u_1$ and must be executed before~$u_n$ may have a
priority less than that of~$a$.

\begin{defn}\label{def:wellformed}
A DAG~$\graph = \dagq{\dthreads}{\spawns}{\syncs}{\polls}$ is well-formed if
for all
threads~$\cthread{a}{\prio}{u_1 \tscomp{} \dots \tscomp{} u_n} \in \dthreads$,
if~$\anc{u}{u_n}$ and $\nanc{u}{u_1}$
then~$\ple{\prio}{\uprio{\graph}{u}}$.
\end{defn}

We will show that the well-formedness restriction on DAGs and the type
restrictions imposed on {\calcname} programs coincide in that
well-typed programs give rise only to well-formed DAGs.
In fact, the type system guarantees an even stronger property which
will also be more convenient to prove.
Intuitively, a DAG is {\em strongly well-formed} if 1) all join edges go from
higher-priority threads to lower-priority threads and 2) if a path from~$u$
to~$u'$ starts with a spawn edge and ends with a join edge, there exists
another path from~$u$ to~$u'$ that doesn't go through the spawn edge.
In terms of programs, the second condition means that thread~$a$ can't
sync on thread~$b$ if it doesn't ``know about'' thread~$b$. Because {\calcname}
is purely functional,~$a$ can only know about~$b$ by being descended from
the thread that spawned~$b$.
%
% We now formally define this property and show that it implies
% well-formedness.

\begin{defn}
  A DAG~$\graph= \dagq{\dthreads}{\spawns}{\syncs}{\polls}$ is
  {\em strongly well-formed} if
  for all~$(a, u) \in \syncs$, if~$\cthread{a}{\prio_a}{\uthread},
  \cthread{b}{\prio_b}{\uthread_1 \tscomp{} u \tscomp{} \uthread_2} \in \dthreads$,
  we have that
  \begin{enumerate}
  \item $\ple{\prio_b}{\prio_a}$ and
  \item If~$(u', a) \in \spawns$, then there exists a path from~$u'$ to~$u$
    where the first edge is a thread edge.
  \end{enumerate}
\end{defn}

\begin{lemma}\label{lem:wf-alt}
  If~$\graph$ is strongly well-formed, then~$\graph$ is well-formed.
\end{lemma}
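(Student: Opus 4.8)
The plan is to unfold the definition of well-formedness and prove the following by well-founded induction over threads: for every thread $\cthread{a}{\prio_a}{u_1 \tscomp{} \dots \tscomp{} u_n} \in \dthreads$ and every vertex $u$ with $\anc{u}{u_n}$ and $\nanc{u}{u_1}$, we have $\ple{\prio_a}{\uprio{\graph}{u}}$. I would order threads by declaring $d \prec_{\mathrm{thr}} a$ iff the last vertex of $d$ is a \emph{proper} ancestor of the last vertex of $a$; since $\graph$ is a finite acyclic graph, any topological ordering of its vertices refines this, so $\prec_{\mathrm{thr}}$ is well-founded and the induction is legitimate.

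For a fixed thread $a$, if $u$ already lies in $a$ then $\uprio{\graph}{u} = \prio_a$ and we are done by reflexivity (and the hypothesis $\nanc{u}{u_1}$ rules out $u = u_1$). Otherwise, pick any path $P$ from $u$ to $u_n$ and let $u_i$ be the first vertex of $a$ that $P$ visits. Then $i \geq 2$ (else $\anc{u}{u_1}$), and since $P$ reaches $u_i$ without having visited $u_{i-1}$, the edge of $P$ entering $u_i$ is neither the thread edge out of $u_{i-1}$ nor (as $i \geq 2$) a spawn edge, hence it is a join edge $(d, u_i) \in \syncs$ with $d \ne a$ (acyclicity: $a$ cannot join to a vertex $u_i$ reachable from $u_n$). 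Condition~(1) of strong well-formedness then gives $\ple{\prio_a}{\prio_d}$; the vertex of $P$ just before $u_i$ is the last vertex $t$ of $d$, so $\anc{u}{t}$; and $t$ is a proper ancestor of $u_n$ (it reaches $u_i$ hence $u_n$, it differs from $u_n$, and by acyclicity $u_n$ does not reach it), so $d \prec_{\mathrm{thr}} a$. By transitivity of $\preceq$ it now suffices to prove $\ple{\prio_d}{\uprio{\graph}{u}}$.

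If $\nanc{u}{s}$, where $s$ is the first vertex of $d$, this is exactly the inductive hypothesis applied to $d$. If $\anc{u}{s}$: when $u \in d$ we are done as above ($\uprio{\graph}{u} = \prio_d \succeq \prio_a$); otherwise $u \notin d$ and the path from $u$ into $d$ enters at $s$, whose incoming edges are the spawn edge $(u', d) \in \spawns$ and at most one join edge. In the spawn case we get $\anc{u}{u'}$, and Condition~(2) of strong well-formedness supplies a path $Q$ from $u'$ to $u_i$ whose first edge is a thread edge; since $\anc{u}{u'}$ and $\nanc{u}{u_1}$, the path $Q$ cannot visit $u_1$, so $Q$ re-enters $a$ at some $u_p$ with $p \geq 2$, and as before that entry is a join edge $(d', u_p) \in \syncs$ with $\ple{\prio_a}{\prio_{d'}}$, $\anc{u}{\text{(last vertex of }d'\text{)}}$, and $d' \prec_{\mathrm{thr}} a$; moreover an acyclicity argument (the tail of $Q$ shows $u_p$ reaches $u_i$ while thread edges inside $a$ only raise the index) forces $p \leq i$. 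This returns a situation of exactly the same shape, so I would iterate, taking as termination measure the lexicographic pair (entry index into $a$, $\prec_{\mathrm{thr}}$-rank of the current joined-into thread): each pass either closes via the inductive hypothesis, or because $u$ ends up inside the current thread, or strictly drops the measure — including the degenerate case $p = i$, where uniqueness of the incoming join edge at $u_i$ identifies $d' = d$ but a closer look at $Q$ exposes a further join into $d$ from a thread of strictly smaller $\prec_{\mathrm{thr}}$-rank. The remaining sub-case, in which $s$ is itself a sync vertex, is analogous: it hands back a thread $\prec_{\mathrm{thr}} d$ whose priority dominates $\prio_a$ and of whose last vertex $u$ is an ancestor.

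I expect this last case — $u$ being an ancestor of the first vertex of the thread $d$ that joins back into $a$ — to be the real obstacle. The reason it is delicate is that priorities are weakly monotone along thread edges and, by Condition~(1), along join edges, but \emph{not} along spawn edges; Condition~(2) exists precisely to let us replace a ``spawn a thread and later join it back'' detour by a detour through the spawner's own continuation, and the genuine work is in setting up the lexicographic measure so that this re-routing always makes progress and in dispatching the degenerate re-entry at the same vertex $u_i$. Everything outside that case is routine reasoning about paths in a finite acyclic graph.
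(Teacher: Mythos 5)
Your local reasoning is largely sound: you invoke Condition~(1) exactly where a priority comparison must cross a join edge, and Condition~(2) exactly to re-route around a spawn edge, which are the same two levers the paper's proof pulls. But the difficulty you flag at the end is a genuine gap, not a routine loose end. Your termination measure is the lexicographic pair (entry index into $a$, $\prec_{\mathrm{thr}}$-rank of the current joined-into thread), and the iteration does not stay in a shape where that pair is meaningful. In the degenerate case $p=i$ you hand off to a thread $c \prec_{\mathrm{thr}} d$ that joins into $d$ at some vertex $v$ of $d$ --- not into $a$. If the recursion at level $c$ again lands in the spawn sub-case, Condition~(2) must now be applied to the spawn of $c$ together with the join edge $(c,v)\in\syncs$, and the re-routing it yields is governed by an entry vertex into $d$ (or deeper), for which ``entry index into $a$'' is undefined; moreover the thread $c'$ it produces joins into $d$ at an ancestor of $v$, and nothing forces $c' \prec_{\mathrm{thr}} c$, so the second component need not drop either. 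The true state of your iteration is a descending chain of threads, each joining into the previous one, with an index iteration at every level; making that terminate requires a different well-founded order (a nested induction, or a multiset/path ordering over the chain), which is precisely the work left undone.

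The paper dissolves the whole issue by choosing a different induction: for a fixed thread $a$, a single well-founded induction on the vertex $u$ with respect to the ancestor relation, proving ``$\anc{u}{u_1}$ or $\ple{\prio}{\uprio{\graph}{u}}$'' for every ancestor $u$ of $u_n$. The case split is then on the \emph{outgoing} edges of $u$ that lead toward $u_n$: across a thread edge the priority is unchanged and across a join edge it can only drop (Condition~(1)), so the inductive hypothesis applied to the successor finishes immediately; and if the only such outgoing edge is a spawn edge, Condition~(2) either manufactures an alternative outgoing thread edge toward $u_n$ (contradicting uniqueness) or forces $u_n$ to lie in the spawned thread, giving $\anc{u}{u_1}$. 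Inducting on vertices rather than on threads means the chain of joined threads never has to be chased at all, which is where all of your bookkeeping lives; I would recommend restructuring your argument along those lines rather than trying to repair the measure.
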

\begin{proof}
  Let $\cthread{a}{\prio}{u_1 \tscomp{} \dots \tscomp{} u_n} \in \dthreads$ and
  let~$\anc{u}{u_n}$.
  We need to show that either~$\anc{u}{u_1}$ or
  $\ple{\prio}{\uprio{\graph}{u}}$.
  Since the graph is finite and acyclic, we can proceed by well-founded
  induction on~$\anc{}{}$.
  If~$u = u_n$, the result is clear.
  Otherwise, assume that
  for all~$u'$ such that~$\neqanc{u}{\anc{u'}{u_n}}$,
  we have~$\anc{u'}{u_1}$ or $\ple{\prio}{\uprio{\graph}{u'}}$.
  If~$\anc{u'}{u_1}$ for any such~$u'$, then~$\anc{u}{u_1}$, so consider the
  case where $\ple{\prio}{\uprio{\graph}{u'}}$ for all such~$u'$.
  Consider the outgoing edges of~$u$ which lead to~$u'$ such
  that~$\anc{u'}{u_n}$. If any is a thread or join edge, then we have
  $\ple{\prio}{\ple{\uprio{\graph}{u'}}{\uprio{\graph}{u}}}$.
  Suppose the only such edge is a spawn edge~$(u, b)$, where~$u'$ is the first
  vertex of thread~$b$.
  If there exists a corresponding join edge~$(b, u'')$ in
  the path, then by assumption there exists a path from~$u$ to~$u''$ where the
  first edge is a thread edge, but this is a contradiction because the
  spawn edge~$(u, b)$ was assumed to be the only outgoing edge from~$u$ to
  an ancestor of~$u_n$.
  If no corresponding join edge~$(b, u'')$ is in the path,
  then~$u_n$ must be in~$b$, so~$u' = u_1$ and~$\anc{u}{u_1}$, also a
  contradiction.
\end{proof}

\paragraph{Bounding Response Time}
We are now ready to bound the response time of threads in prompt schedules using cost metrics that we now
define.

The {\em priority work}~$\prioworkof{\graph}{\psnle{\prio}}$
of a graph~$\graph$ at
a priority~$\prio$ is defined as the number of vertices in the graph at
priorities not less than~$\prio$:
%\[\prioworkof{\graph}{\prio} =
%|\{u \in \graph \mid \uprio{\graph}{u} = \prio \}|\]
%
%We use a convenient shorthand for summing priority work over the set
%$\psnle{\prio}$.
\[
\prioworkof{\graph}{\psnle{\prio}} \defeq
|\{u \in \graph \mid \nple{\uprio{\graph}{u}}{\prio} \}|
%  \fc(\psnle{\prio}) & \defeq &
%  \sum_{\nple{\prio'}{\prio}} \fc(\prio')
%\end{array}
\]
The {\em $a$-span}~$\longp{\graph}{a}$ of a
graph~$\graph \ni \cthread{a}{\prio}{s\tscomp{} \dots \tscomp{} t}$,
is the length of the longest path in~$\graph$ ending at~$t$.

Theorem~\ref{thm:gen-brent} bounds the response time
%(because fairly
%prompt schedules are defined probabilistically, the respones time can
%only be bounded in expectation)
of a thread based on these quantities
which depend only on the work and span of high-priority threads.
Because they deal with scheduling DAGs which are known ahead of time,
results of this form are often known as {\em offline scheduling bounds}.
Later in the section, we will apply this result to executions of the
{\calcname} dynamic semantics as well.

\begin{thm}\label{thm:gen-brent}
  Let~$\graph$ be a well-formed DAG with a thread
  $\cthread{a}{\prio}{\uthread} \in \graph$.
  For any prompt schedule of $\graph$ on~$P$ processors,
  %and any~$\ple{\prio'}{\prio}$,
  \[
  \resptimeof{a} \leq %\frac{1}{\fc(\psnle{\prio'})}
  \frac{\prioworkof{\compwork{}{a}}{\psnle{\prio}}}{P} +
      \longp{\compwork{}{a}}{a}
      \]
\end{thm}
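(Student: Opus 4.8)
The plan is to run a Brent-style (greedy-scheduling) argument inside the response interval of thread~$a$, but charging each step either to the length of~$a$'s longest dependency path inside~$\compwork{}{a}$ or to work whose priority is not strictly below~$\prio$; the one new ingredient over the two-priority case is that well-formedness must be used to extract a \emph{comparability} (not merely an incomparability) between~$\prio$ and the priorities of the vertices on which~$a$ is currently waiting. Fix an arbitrary prompt schedule on~$P$ processors, let~$\cthread{a}{\prio}{s \tscomp{} \dots \tscomp{} t}$ be the thread, let step~$i$ be the first step at whose start~$s$ is ready, and let step~$j$ be the step at which~$t$ executes, so that~$\resptimeof{a} = j - i + 1$. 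The first thing I would establish is a \emph{localization} fact: every vertex executed during steps~$i,\dots,j$ lies in~$\compwork{}{a}$, because a proper ancestor of~$s$ must already have executed before step~$i$, and a proper descendant of~$t$ cannot execute before step~$j{+}1$.

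Next I would track, for each step~$k \in [i,j]$, the sub-DAG~$\graph_{R_k}$ induced on the vertices not yet executed at the start of step~$k$ (this vertex set is closed under descendants in~$\graph$, since the executed set is closed under ancestors), and set~$d_k$ to be the number of vertices on a longest path of~$\graph_{R_k}$ ending at~$t$, with the convention~$d_{j+1} = 0$. Since~$t$ is unexecuted at the start of each such step,~$d_k \ge 1$; since any path of~$\graph_{R_k}$ ending at~$t$ avoids proper ancestors of~$s$ (already executed) and proper descendants of~$t$ (it ends at~$t$), such a path lies entirely in~$\compwork{}{a}$, so~$d_i \le \longp{\compwork{}{a}}{a}$; and~$d$ is non-increasing, since deleting vertices cannot lengthen a path to a surviving target. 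The key observation I would prove here, using \defref{wellformed} applied to thread~$a$, is that any \emph{source} of~$\graph_{R_k}$ (a vertex with no incoming edge in~$\graph_{R_k}$, equivalently a ready vertex) that is an ancestor of~$t$ has priority satisfying~$\ple{\prio}{\uprio{\graph}{w}}$: such a~$w$ is unexecuted, hence not a proper ancestor of~$s$, and it is an ancestor of~$t$, so either~$w = s$ (and the bound is reflexivity) or well-formedness applies directly. Note also that the first vertex of \emph{any} longest path to~$t$ in~$\graph_{R_k}$ is such a source, since otherwise the path could be extended backwards.

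Now I would argue a dichotomy for each step~$k \in [i,j]$: either (A) all~$P$ processors execute vertices of priority~$\psnle{\prio}$, or (B)~$d_{k+1} < d_k$. To see this, suppose (A) fails. If some processor is idle, then, prompt schedules being greedy, every source of~$\graph_{R_k}$ is executed at step~$k$; since every longest path to~$t$ begins at such a source, the longest-path length to~$t$ drops, giving (B). Otherwise all~$P$ processors are busy and some processor executes a vertex~$v$ with~$\plt{\uprio{\graph}{v}}{\prio}$; I claim every source~$w$ of~$\graph_{R_k}$ that is an ancestor of~$t$ is executed at step~$k$. Indeed, if not, then~$w$ is ready and unassigned throughout the assignment phase of step~$k$, so when~$v$ was assigned the prompt rule forbids~$w$ from having strictly higher priority than~$v$, i.e.~$\uprio{\graph}{v}\not\prec\uprio{\graph}{w}$; but~$\plt{\uprio{\graph}{v}}{\prio}$ together with~$\ple{\prio}{\uprio{\graph}{w}}$ (the well-formedness consequence above) and transitivity force~$\plt{\uprio{\graph}{v}}{\uprio{\graph}{w}}$, a contradiction. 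Hence every first vertex of a longest path to~$t$ is removed at step~$k$, and again~$d_{k+1} < d_k$, giving (B).

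Finally I would count. There are at most~$d_i - d_{j+1} = d_i \le \longp{\compwork{}{a}}{a}$ steps of type (B), since~$d$ is non-increasing over~$[i,j]$ and strictly decreases at each such step. Each step of type (A) executes~$P$ distinct vertices, all of priority~$\psnle{\prio}$ and all lying in~$\compwork{}{a}$ by the localization fact, so summing over all (A)-steps there are at most~$\prioworkof{\compwork{}{a}}{\psnle{\prio}}$ such vertices and hence at most~$\prioworkof{\compwork{}{a}}{\psnle{\prio}}/P$ type-(A) steps. Adding the two counts gives~$\resptimeof{a} \le \prioworkof{\compwork{}{a}}{\psnle{\prio}}/P + \longp{\compwork{}{a}}{a}$, as required. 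I expect the main obstacle to be exactly the case-(B) analysis when (A) fails with no idle processor: in a partial order the prompt rule only suppresses ready vertices of \emph{strictly lower} priority than the chosen one, so it is not a priori clear that running low-priority work delays~$a$; the argument only goes through because well-formedness yields the strict comparability~$\ple{\prio}{\uprio{\graph}{w}}$ for the relevant ready vertices on~$a$'s frontier, which can then be chained with the prompt rule via transitivity. Everything else — localizing executed vertices to~$\compwork{}{a}$, readiness of path sources, and the Brent-style bookkeeping — is routine.
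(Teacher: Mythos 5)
Your proposal is correct and follows essentially the same argument as the paper's proof: a Brent-style accounting over the response interval in which every step is charged either to work at priority $\psnle{\prio}$ or to a strict decrease in the longest remaining path to $t$, with well-formedness guaranteeing that the ready vertices on that path have priority $\succeq \prio$ and promptness forcing them to be executed whenever some processor is idle or runs strictly lower-priority work. The paper phrases the count via per-processor tokens in "high'' and "low'' buckets rather than your per-step dichotomy, but this is only a reorganization; if anything, your explicit treatment of the comparability needed to invoke the prompt rule (chaining $\plt{\uprio{\graph}{v}}{\prio}$ with $\ple{\prio}{\uprio{\graph}{w}}$ by transitivity) is more careful than the paper's one-line appeal to "the prompt principle.''
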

\begin{proof}
  Let~$s$ and~$t$ be the first and last vertices of~$a$, respectively.
  Consider the portion of the schedule from the step in which~$s$ is ready
  (exclusive) to the step in which~$t$ is executed (inclusive).
  %
  %At each step, let~$P_\fc$ be the number of processors attempting to work on
  %vertices of priority in~$\psnle{\prio'}$.
  %
  For each processor at each step, place a token in one of two buckets.
  %
  %If the processor is attempting to work at a priority not less than~$\prio'$,
  %but is unable to, place a token in the ``low'' bucket~$B_l$.
  %
  If the processor is working on a vertex of a priority not less than~$\prio$,
  place a token in the ``high'' bucket~$B_h$;
  otherwise, place a token in the ``low'' bucket~$B_l$.
  %
  %If it is attempting to work at a priority less than~$\prio'$, place a
  %token in the ``fair'' bucket~$B_f$.
  %
  Because~$P$ tokens are placed per step,
  we have $\resptimeof{a} = \frac{1}{P}(B_l + B_h)$,
  where~$B_l$ and $B_h$
  are the number of tokens in the buckets after~$t$ is executed.

  %Let~$\Sigma = \fc(\psnle{\prio'})$.
  %
  %By fairness, we have $B_l + B_h = \Sigma(B_l + B_h + B_f)$.
  %Thus,
  %\[\resptimeof{a} = \frac{1}{P\Sigma}(B_l + B_h)\]
  %
  Each token in~$B_h$ corresponds to work done at priority not less
  than~$\prio$, and thus~$B_h \leq \prioworkof{\graph}{\psnle{\prio}}$, so
  \[\resptimeof{a} \leq %\frac{1}{\Sigma}\left(
  \frac{\prioworkof{\graph}{\psnle{\prio}}}{P}
  + \frac{B_l}{P}\]
  We now need only bound~$B_l$ by~$P\cdot \longp{\compwork{}{a}}{a}$.

  Let step 0 be the step after~$s$ is ready, and let~$\exec{j}$ be the
  set of vertices that have been executed at the start of step~$j$.
  Consider a step~$j$ in which a token is added to~$B_l$.
  For any path ending at~$t$ consisting of vertices
  of~$\graph \setminus \exec{j}$, the path starts at a vertex that is ready
  at the beginning of step~$j$.
  By the definition of well-formedness, this
  vertex must have priority greater than~$\prio$ and is therefore executed
  in step~$j$ by the prompt principle.
  Thus, the length of the path decreases by 1 and so
  $\longp{\graph \setminus \exec{j+1}}{a} =
  \longp{\graph \setminus \exec{j}}{a} - 1$.
  The maximum number of such steps is
  thus~$\longp{\graph \setminus \exec{0}}{a}$, and
  so~$B_l \leq P\cdot \longp{\graph \setminus \exec{0}}{a}$.
  Because~$\noancs{s} \supset \graph \setminus \exec{0}$,
  any path excluding vertices
  in~$\exec{0}$ is contained in~$\noancs{s}$, and
  $\longp{\graph \setminus \exec{0}}{a} \leq
  \longp{\noancs{s}}{a}$, so
  $B_l \leq P\cdot \longp{\noancs{s}}{a} = P\cdot \longp{\compwork{}{a}}{a}$.
\end{proof}

The above theorem not only bounds response time, but computation time as well.
%The above theorem is quite general and it is worthwhile to note several
%special cases.
%
Let~$a$ be the main thread, which is always at the bottommost priority.
The response time of the main thread is equal to the
computation time of the entire program.
Because prompt schedules are greedy, we expect to be able to bound this
time by~$\frac{W}{P} + S$, where~$W$ is the total number of operations in the
program and~$S$ is the length of the longest path in the
DAG~\citep{EagerZaLa89}.
Indeed,
%we have~$\frac{1}{\fc(\psnle{\bot})} = 1$, and
the priority work
and~$a-$span reduce to the overall work and span, respectively, so the
bound given by Theorem~\ref{thm:gen-brent} coincides with the expected
bound on computation time.

%% One might wonder about the purpose of the univerally quantified
%% priority~$\prio'$.
%% %
%% We illustrate its use by considering two extremal cases.
%% %
%% First, let~$\prio' = \prio$.
%% %
%% Doing so yields the bound that might intuitively be expected: the response time
%% of a thread at priority~$\prio$ depends on the work and span at priorities
%% not less than~$\prio$, inflated somewhat by the fairness criterion because
%% only some of the cycles are devoted to work at priorites higher than~$\prio$.
%% %
%% This bound is correct and frequently useful, but will diverge in
%% the case that~$\fc(\psnle{\prio}) = 0$.
%% %
%% In such cases, it may be worthwhile to look at a bound which considers cycles
%% donated from priorities lower than~$\prio$.
%% %
%% This increases the factor~$\frac{1}{\fc(\psnle{\prio})}$ at the cost of
%% having to consider more vertices as competitors.

%% In the extreme case, we set~$\prio' = \bot$.
%% %
%% As above, the multiplicative factor for the fairness criterion goes to 1,
%% but~$\prioworkof{\compwork{}{a}}{\psnle{\prio'}}$ becomes all of the work,
%% at any priority, that may happen in parallel with thread~$a$.
%% %
%% Intuitively, this says that if we ignore priority and simply run a greedy
%% schedule, thread~$a$ will complete eventually but may, in the worst case,
%% have to wait for all of the other work in the system to complete.
 
%%% Local Variables:
%%% mode: latex
%%% TeX-master: "main"
%%% End:

\subsection{Cost Semantics for {\calcname}}\label{sec:cost-sem}
\begin{figure}
\[
\begin{array}{c}

\infer[C-Val]
      {\strut}
      {\ceval[\dassn]{v}{v}{\ethread}}

\qquad

\infer[C-Let]
      {\ceval[\dassn]{e_1}{v_1}{\uthread_1}\\
        \ceval[\dassn]{[v_1/x]e_2}{v}{\uthread_2}\\
        u\fresh}
      {\ceval[\dassn]{\kwlet{x}{e_1}{e_2}}{v}
        {\uthread_1 \tscomp{} u \tscomp{} \uthread_2}}

\qquad

\infer[C-Ifz-NZ]
      {\ceval[\dassn]{[\kwnumeral{n}/x]e_2}{v}{\uthread}\\
        u \fresh}
      {\ceval[\dassn]{\kwifz{\kwnumeral{n+1}}{e_1}{x}{e_2}}{v}
        {u \tscomp{} \uthread}}

\\[4ex]

\infer[C-Ifz-Z]
      {\ceval[\dassn]{e_1}{v}{\uthread}\\
        u \fresh}
      {\ceval[\dassn]{\kwifz{\kwnumeral{0}}{e_1}{x}{e_2}}{v}
        {u \tscomp{} \uthread}}

\qquad

\infer[C-App]
      {\ceval[\dassn]{[v/x]e}{v'}{\uthread}\\
        u \fresh}
      {\ceval[\dassn]{\kwapply{(\kwfun{x}{e})}{v}}{v'}
        {u \tscomp{} \uthread}}

\qquad

\infer[C-Pair]
      {u \fresh}
      {\ceval[\dassn]{\kwepair{v_1}{v_2}}{\kwpair{v_1}{v_2}}
        {u}}

\\[4ex]

\infer[C-Fst]
      {u \fresh}
      {\ceval[\dassn]{\kwfst{\kwpair{v_1}{v_2}}}{v_1}{u}}

\qquad

\infer[C-Snd]
      {u \fresh}
      {\ceval[\dassn]{\kwsnd{\kwpair{v_1}{v_2}}}{v_2}{u}}

\qquad

\infer[C-InL]
      {u \fresh}
      {\ceval[\dassn]{\kweinl{v}}{\kwinl{v}}{u}}

\qquad

\infer[C-InR]
      {u \fresh}
      {\ceval[\dassn]{\kweinr{v}}{\kwinr{v}}{u}}

\\[4ex]

\infer[C-Case-L]
      {\ceval[\dassn]{[v/x]e_1}{v'}{\uthread}\\
        u \fresh}
      {\ceval[\dassn]{\kwcase{\kwinl{v}}{x}{e_1}{y}{e_2}}{v'}
        {u \tscomp{} \uthread}}

\qquad

\infer[C-Case-R]
      {\ceval[\dassn]{[v/y]e_2}{v'}{\uthread}\\
        u \fresh}
      {\ceval[\dassn]{\kwcase{\kwinr{v}}{x}{e_1}{y}{e_2}}{v'}
        {u \tscomp{} \uthread}}

\qquad

\infer[C-Output]
      {u \fresh}
      {\ceval[\dassn]{\kwoutput{v}}{\kwtriv}{u}}

\\[4ex]

\infer[C-Input]
      {%u_1, u_2 \fresh\\
        u\fresh
        %\dassn(\dvar) = \delay
      }
      {\ceval[\dassn]{\kwinput}{\kwn}
        %{u_1 \tscomp{\delay} u_2}}
        {u}}

\qquad

\infer[C-PrApp]
      {\ceval[\dassn]{[\prio/\vprio]e}{v}{\uthread}\\
        u \fresh}
      {\ceval[\dassn]{\kwapply{(\kwwlam{\vprio}{\cons}{e})}{\prio}}{v}
        {u \tscomp{} \uthread}}

\qquad

\infer[C-Fix]
      {\ceval[\dassn]{[v/x]e}{v'}{\uthread}\\
        u \fresh}
      {\ceval[\dassn]{\kwfix{x}{\tau}{e}}{v'}{u \tscomp{} \uthread}}

\\[8ex]

\infer[C-Bind]
      {\ceval[\dassn]{e}{\kwcmd{\prio}{\cmd_1}}{\uthread_1}\\
        \cceval{\tsig}{\sig}{a}{\prio}{\cmd_1}{v}{\graph_1}{\tsig_1}{\sig_1}\\
        u \fresh\\
        \cceval{\tsig_1}{\sig_1}{a}{\prio}{[v/x]\cmd_2}{v'}{\graph_2}{\tsig_2}{\sig_2}\\
        }
      {\cceval{\tsig}{\sig}{a}{\prio}{\kwbind{e}{x}{\cmd_2}}{v'}
        {\sgraph{\uthread_1} \scomp{a}
          \graph_1 \scomp{a} \sgraph{u} \scomp{a} \graph_2}{\tsig_2}{\sig_2}
      }

\\[4ex]

\infer[C-Spawn]
      {b \fresh\\
        \cceval{\tsig}{\sig}{b}{\prio'}{\cmd}{v}
               {\dagq{\dthreads}{\spawns}{\syncs}{\polls}}
               {\tsig,\tsig'}{\sig,\sig'}\\
        u \fresh
        }
      {\cceval{\tsig}{\sig}{a}{\prio}
        {\kwbspawn{\prio'}{\tau}{\cmd}}{\kwtid{b}}
        {\dagq{\cthread{a}{\prio}{u} \uplus \dthreads}
          {\spawns \cup \{(u, b)\}}{\syncs}{\polls}}
        {\tsig,\tsig',\sigcent{b}{v}{\sig'}}
        {\sig,\sigtype{b}{\tau}{\prio'}}
      }

\\[4ex]

\infer[C-Sync]
      {\ceval[\dassn]{e}{\kwtid{b}}{\uthread}\\
        u \fresh}
      {\cceval{\tsig,\sigcent{b}{v}{\sig'}}
        {\sig,\sigtype{b}{\tau}{\prio'}}{a}{\prio}
        {\kwbsync{e}}
        {v}
        {\dagq{\cthread{a}{\prio}{\uthread \tscomp{} u}}{\emptyset}
          {\{(b, u)\}}{\emptyset}}{\tsig,\sigcent{b}{v}{\sig'}}{\sig,\sigtype{b}{\tau}{\prio'}, \sig'}
      }

%% \qquad

%% \infer[C-Poll-Some]
%%       {u \fresh}
%%       {\cceval{a}{\prio}
%%         {\kwpoll{\kwtd{b}{v}}}
%%         {\kwinl{v}}
%%         {\dagq{\cthread{a}{u}}{\emptyset}{\emptyset}{\{(b, u)\}}}}

%% \qquad

%% \infer[C-Poll-None]
%%       {u \fresh}
%%       {\cceval{a}{\prio}
%%         {\kwpoll{\kwtd{b}{v}}}
%%         {\kwinr{\kwtriv}}
%%         {\dagq{\cthread{a}{u}}{\emptyset}{\emptyset}{\emptyset}}}
%%         {\dagq{\dthreads}{\spawns}{\syncs}{\polls}}}

\\[4ex]

\infer[C-Ret]
      {\ceval[\dassn]{e}{v}{\uthread}}
      {\cceval{\tsig}{\sig}{a}{\prio}{\kwret{e}}{v}
        {\dagq{\cthread{a}{\prio}{\uthread}}{\emptyset}{\emptyset}{\emptyset}}
        {\tsig}{\sig}
      }

\\[8ex]

\infer[CT-Thread]
      {\cceval{\tsig}{\sig}{a}{\prio}{\cmd}{v}{\graph}{\tsig'}{\sig'}}
      {\mceval[\dassn]{\tsig, \tsigent{a}{v}{\sig'}}{\sig}
        {\dthread{a}{\prio}{\cmd}}{\tsig'}{\graph}}

\qquad

\infer[CT-Extend]
      {
        \mceval[\dassn]{\tsig}{\sig, \sig'}{\mem}{\tsig'}{\graph}
      }
      {
        \mceval[\dassn]{\tsig}{\sig}{\mbconfig{\sig'}{\mem}}{\tsig'}{\graph}
      }

\\[4ex]

\infer[CT-Concat]
      {
        \mceval[\dassn]{\tsig}{\sig}{\mem}{\tsig_1}
               {\dagq{\dthreads}{\spawns}{\syncs}{\polls}}\\
        \mceval[\dassn]{\tsig}{\sig}{\mem'}{\tsig_2}
               {\dagq{\dthreads'}{\spawns'}{\syncs'}{\polls'}}
      }
      {
        \mceval[\dassn]{\tsig}{\sig}{\mem \mcp \mem'}{\tsig_1, \tsig_2}
               {\dagq{\dthreads \uplus \dthreads'}{\spawns \cup \spawns'}
                 {\syncs \cup \syncs'}{\polls \cup \polls'}}
      }
\end{array}
\]
\caption{Cost semantics of {\calcname}}
\label{fig:cost}
\end{figure}
%%       \infer[C-EmptyTP]
%%       {\strut}
%%       {\mceval[\dassn]{\emem}{\emptyset}}
%%
%% \qquad
%%
%% \infer[C-ConcatTP]
%%       {\cceval{\esig}{\esig}{a}{\prio}{\cmd}{v}
%%         {\dagq{\dthreads}{\spawns}{\syncs}{\polls}}{\tsig}{\sig}\\
%%         \mceval[\dassn]{\mem}{\dagq{\dthreads'}{\spawns'}{\syncs'}{\polls'}}
%%       }
%%       {\mceval[\dassn]{\dthread{a}{\prio}{\cmd} \uplus \mem}
%%         {\dagq{\dthreads \cup \dthreads'}{\spawns \cup \spawns'}
%%           {\syncs \cup \syncs'}{\polls \cup \polls'}}
%%       }
%%
%% 

We develop a cost semantics that evaluates a program, producing a value
and a DAG of the form described in Section~\ref{sec:dag}.
Unlike the
operational semantics of Section~\ref{sec:dyn}, this is an evaluation semantics
that does not fully specify the order in which threads are evaluated.
Figure~\ref{fig:cost} shows the cost semantics for {\calcname} using
three judgments.
The judgment for expressions is~$\ceval[\dassn]{e}{v}{\uthread}$, indicating
that expression~$e$ evaluates to value~$v$ and produces thread~$\uthread$.
The two-level syntax of {\calcname}  ensures that expressions
cannot produce spawn or join edges in the cost graph, and so the rules for
this judgment are quite straightforward: subexpressions are evaluated
to produce sequences of operations, which are then composed sequentially.
The judgment~$\cceval{\tsig}{\sig}{a}{\prio}{\cmd}{v}{\graph}{\tsig'}{\sig'}$
indicates that~$\cmd$ evaluates to~$\kwret{v}$ and produces the graph~$\graph$.
Because threads in our cost graphs are named and annotated with priorities,
the current thread's name and priority are included in the judgment.
The judgment also includes the ambient thread signature before~($\sig$) and
after~($\sig'$) evaluation of the command.
In addition, it includes a {\em thread record}~$\tsig$ (and~$\tsig'$).
The thread record maps a thread name~$a$ to a
pair~$(v_a,\sig_a)$ of the value to which thread~$a$ evaluates, and a
signature containing threads that are (transitively) spawned by~$a$.
%available to thread~$a$ (because they were spawned by~$a$ or its children).
%
The thread record is used by the rule \rulename{C-Sync} to capture the value
of the target thread~$b$, which must be returned by the sync operation. The
rule also captures the signature of threads transitively spawned by~$b$,
which it adds to the signature, indicating that future operations in thread~$a$
now ``know about'' these threads.
In showing the consistency of the cost
semantics later in the section, we will use the
judgment~$\tsigtyped{\sig}{\tsig}$ to indicate that the values in~$\tsig$ are
well-typed.
The following rules apply to the judgment:
\[
\begin{array}{c}
  \infer
      {
      }
      {
        \tsigtyped{\sig}{\esig}
      }
   \qquad
   \infer
       {
         \etyped{\sig,\sigtype{a}{\tau}{\prio},\sig'}{\ectx}{v}{\tau}\\
         \tsigtyped{\sig,\sigtype{a}{\tau}{\prio}}{\tsig}
       }
       {
         \tsigtyped{\sig,\sigtype{a}{\tau}{\prio}}
                   {\tsig,\sigcent{a}{v}{\sig'}}
       }
\end{array}
\]
The other rules are more straightforward.
Rule \rulename{C-Bind} composes the
graphs generated by the subexpressions using the sequential composition
operation defined as follows:
\[
\dagq{\dthread{a}{\prio}{\uthread} \uplus \dthreads}{\spawns}{\syncs}{\polls}
\scomp{a}
\dagq{\dthread{a}{\prio}{\uthread'} \uplus \dthreads'}{\spawns'}{\syncs'}
     {\polls'}
\defeq
\dagq{\dthread{a}{\prio}{\uthread \tscomp{} \uthread'} \uplus
  \dthreads \uplus \dthreads'}{\spawns \cup \spawns'}
     {\syncs \cup \syncs'}{\polls \cup \polls'}
\]
We use the notation~$\sgraph{\uthread}$ to indicate a graph consisting of a
single thread. The name and priority of the thread will generally be evident
from context, e.g. because~$\sgraph{\uthread}$ is immediately sequentially
composed with another graph at thread~$a$, so
\[
\sgraph{\uthread} \scomp{a} \dagq{\dthread{a}{\prio}{\uthread'} \uplus \dthreads}
       {\spawns}{\syncs}{\polls}
       \defeq
       \dagq{\dthread{a}{\prio}{\uthread \tscomp {} \uthread'} \uplus \dthreads}
         {\spawns}{\syncs}{\polls}
\]
Rule~\rulename{C-Spawn} evaluates the newly spawned thread to produce its
cost graph, and then adds it to the graph along with a single vertex~$u$ which
performs the spawn and the appropriate spawn edge.

Finally, the judgment~$\mceval[\dassn]{\tsig}{\sig}{\mem}{\tsig'}{\graph}$
evaluates the thread
pool~$\mem$ to a graph~$\graph$.
The judgment includes the ambient thread record~$\tsig$ and signature~$\sig$
so that when evaluating one thread, we have access to the records of the other
active threads.
A thread pool with a single thread~$\mthread[\delay]{a}{\prio}{\cmd}$
evaluates to the same graph as the command~$\cmd$.
Rule \rulename{CT-Concat} evaluates both parts of the thread pool and composes
the graphs, giving each access to the thread records of the other.

Lemma~\ref{lem:exp-cost-typed} shows that the evaluation judgment on
expressions preserves typing.
The equivalent property for commands will be shown as part of
Lemma~\ref{lem:typed-wf}.

%% \begin{defn}
%%   We say that a graph~$\graph = \dagq{\dthreads}{\spawns}{\syncs}{\polls}$
%%   is {\em well-formed except for}~$\sig$ if for all
%% $(a, u) \in \syncs$, if~$\cthread{a}{\prio_a}{\uthread},
%% \cthread{b}{\prio_b}{\uthread_1 \tscomp{} u \tscomp{} \uthread_2} \in \dthreads$,
%% we have that
%% \begin{enumerate}
%% \item $\ple{\prio_b}{\prio_a}$ and
%% \item If~$a \not\in \dom{\sig}$ and $(u', a) \in \spawns$
%%   then there exists a path from~$u'$ to~$u$,
%%   where the first edge is a thread edge.
%% \end{enumerate}
%% \end{defn}

%% By Lemma~\ref{lem:wf-alt}, if a DAG is well-formed except for~$\esig$, then
%% it is well-formed.

\begin{lemma}\label{lem:exp-cost-typed}
  If~$\etyped{\sig}{\ectx}{e}{\tau}$ and~$\ceval[\dassn]{e}{v}{\uthread}$,
  then~$\etyped{\sig}{\ectx}{v}{\tau}$.
\end{lemma}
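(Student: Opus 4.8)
The plan is to proceed by induction on the derivation of the cost evaluation $\ceval[\dassn]{e}{v}{\uthread}$, using inversion on the typing derivation $\etyped{\sig}{\ectx}{e}{\tau}$ in each case and appealing to the substitution principles of Lemma~\ref{lem:subst}. It is essential to induct on the \emph{evaluation} derivation rather than on the typing derivation: in rules such as \rulename{C-Let}, \rulename{C-App}, \rulename{C-Case-L}/\rulename{C-Case-R}, \rulename{C-Ifz-NZ} and \rulename{C-Fix}, the recursive premise evaluates a \emph{substituted} expression (for instance $[v_1/x]e_2$ in \rulename{C-Let}) that need not be a subterm of $e$, so its typing derivation is not smaller than that of $e$; but it is always a strict subderivation of the cost evaluation, so the induction hypothesis applies to it.

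The base case \rulename{C-Val} is immediate, since there $e = v$ and the hypothesis is exactly the conclusion. The recurring pattern for the remaining cases is: invert the typing rule that must have concluded $\etyped{\sig}{\ectx}{e}{\tau}$; apply the induction hypothesis to any expression subevaluations to learn the types of intermediate values; use the relevant clause of Lemma~\ref{lem:subst} to conclude that the substituted expression fed to the next subevaluation is still well-typed at the right type; and invoke the induction hypothesis once more on that subevaluation. For \rulename{C-Let}: inversion gives $\etyped{\sig}{\ectx}{e_1}{\tau_1}$ and $\etyped{\sig}{\ectx,\hastype{x}{\tau_1}}{e_2}{\tau}$, the induction hypothesis yields $\etyped{\sig}{\ectx}{v_1}{\tau_1}$, Lemma~\ref{lem:subst}(1) gives $\etyped{\sig}{\ectx}{[v_1/x]e_2}{\tau}$, and a second use of the induction hypothesis finishes. \rulename{C-App} is the same after additionally inverting the typing of the function value $\kwfun{x}{e_0}$ to extract $\etyped{\sig}{\ectx,\hastype{x}{\tau_1}}{e_0}{\tau}$; \rulename{C-Ifz-NZ} substitutes the numeral $\kwnumeral{n}$, well-typed at $\kwnat$ by the natural-number introduction rule; \rulename{C-Case-L}/\rulename{C-Case-R} first inverts the injection's typing to type the carried value; and \rulename{C-Fix} substitutes the recursive expression $\kwfix{x}{\tau}{e}$ itself, which is well-typed by hypothesis. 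The pure allocation cases \rulename{C-Pair}, \rulename{C-Fst}, \rulename{C-Snd}, \rulename{C-InL}, \rulename{C-InR} and the cases \rulename{C-Ifz-Z}, \rulename{C-Output}, \rulename{C-Input} need no substitution: they follow by inversion and a single re-application of the appropriate value typing rule (e.g.\ \rulename{C-Pair} produces $\kwpair{v_1}{v_2}$, typed from its components by the value-pair introduction rule).

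The only case requiring more than a mechanical step is priority instantiation (\rulename{C-PrApp}), where $e$ is a polymorphic value $\kwwlam{\vprio}{\cons}{e_0}$ applied to a priority $\prio$ and evaluates through $[\prio/\vprio]e_0$. Inverting \rulename{\fasym E} and then \rulename{\fasym I} gives $\etyped{\sig}{\ectx, \vprio\isprio, \cons}{e_0}{\tau'}$ with $\tau = [\prio/\vprio]\tau'$, together with the side condition $\meetc{\ectx}{[\prio/\vprio]\cons}$ from the elimination rule. Applying the priority-substitution principle, Lemma~\ref{lem:subst}(3), additionally requires discharging the constraint assumption $\cons$ from the context; this is sound precisely because $\meetc{\ectx}{[\prio/\vprio]\cons}$ holds, so every use of \rulename{hyp} on $[\prio/\vprio]\cons$ can be replaced by that derivation --- a routine strengthening of Lemma~\ref{lem:subst}(3), or equivalently a small separate ``constraint cut'' lemma. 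One then has $\etyped{\sig}{\ectx}{[\prio/\vprio]e_0}{[\prio/\vprio]\tau'}$, and since $[\prio/\vprio]\tau' = \tau$ a final appeal to the induction hypothesis closes the case. Reconciling priority substitution with the constraint context is thus the only (minor) subtlety; every other case is a straightforward inversion-and-substitution argument.
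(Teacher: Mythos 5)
Your proposal is correct and takes the same route as the paper, whose proof is given only as ``by induction on the derivation of $\ceval[\dassn]{e}{v}{\uthread}$''; you correctly identify why the induction must be on the evaluation derivation rather than the typing derivation, and the case analysis matches what that one-line proof leaves implicit. Your observation that the \rulename{C-PrApp} case needs a small constraint-cut step to discharge $[\prio/\vprio]\cons$ from the context before Lemma~\ref{lem:subst}(3) applies is a genuine subtlety the paper glosses over, and your resolution of it is sound.
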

\begin{proof}
  By induction on the derivation of~$\ceval[\dassn]{e}{v}{\uthread}$.
\end{proof}

%% \begin{lemma}
%%   \begin{enumerate}
%%   \item If~$\etyped{\sig}{\ectx}{e}{\tau}$
%%     and~$\ceval[\dassn]{e}{v}{\uthread}$, then~$\etyped{\sig}{\ectx}{v}{\tau}$.
%%   \item If~$\cmdtyped{\sig}{\ectx}{\cmd}{\tau}{\prio}$
%%     and~$\cceval{\sig}{a}{\prio}{\cmd}{v}{\graph}{\sig'}$
%%     then~$\etyped{\sig'}{\ectx}{v}{\tau}$.
%%   \end{enumerate}
%% \end{lemma}
%% \begin{proof}
%%   \begin{enumerate}
%%   \item By induction on the derivation of~$\ceval[\dassn]{e}{v}{\uthread}$.
%%   \item By induction on the derivation
%%     of~$\cceval{\sig}{a}{\prio}{\cmd}{v}{\graph}{\sig'}$.
%%     \begin{itemize}
%%     \item \rulename{C-Bind}.
%%       Then~$\cmd = \kwbind{e}{x}{\cmd_2}$
%%       and~$\ceval[\dassn]{e}{\kwcmd{\prio}{\cmd_1}}{\uthread}$
%%       and~$\cceval{\sig}{\cmd_1}{v}{\graph}{\sig'}$
%%       and~$\cceval{\sig'}{[v/x]\cmd_2}{v'}{\graph'}{\sig''}$.
%%       By inversion on the typing rules,~$\etyped{\sig}{\ectx}{e}
%%       {\kwcmdt{\tau}{\prio}}$
%%       and~$\cmdtyped{\sig}{\hastype{x}{\tau}}{\cmd_2}{\tau'}{\prio}$.
%%       By part 1,~$\etyped{\sig}{\ectx}{\kwcmd{\prio}{\cmd_1}}
%%       {\kwcmdt{\tau}{\prio}}$.
%%       By substitution,$\cmdtyped{\sig}{[v/x]\cmd_2}{\tau'}{\prio}$.
%%       By induction,~$\etyped{\sig'}{\ectx}{v}{\tau}$
%%       and~$\etyped{\sig''}{\ectx}{v'}{\tau'}{\prio}$.
%%     \item \rulename{C-Spawn}.
%%       Then~$\cmd = \kwbspawn{\prio'}{\tau}{\cmd_0}$
%%       and~$\cceval{\sig}{b}{\prio'}{\cmd_0}{v}
%%       By inversion,~$\cmdtyped{\sig}{\ectx}{\cmd_0}{
%%     \item \rulename{C-Sync}.

One more technical result we will need in Section~\ref{sec:resp-time} is that
entries in the thread record for threads that don't appear in a command or
thread pool are unnecessary for the purposes of the cost semantics.

%% \cceval{\tsig}{\sig}{a}{\prio}{\cmd}{v}{\graph}{\tsig'}{\sig'}

\begin{lemma}\label{lem:unused-tsig}
  \begin{enumerate}
  \item If~$\cmdtyped{\sig}{\ectx}{\cmd}{\tau}{\prio}$
    and~$\cceval{\tsig, \tsigent{c}{v_c}{\sig_c}}{\sig}{a}{\prio}{\cmd}
    {v}{\graph}{\tsig', \tsigent{c}{v_c}{\sig_c}}{\sig'}$
    and~$c \not\in \dom{\sig}$,
    then~$\cceval{\tsig}{\sig}{a}{\prio}{\cmd}{v}{\graph}{\tsig'}{\sig'}$.
  \item If~$\mtyped{\prios}{\sig}{\mem}{\sig'}$ and
    $\mceval[\dassn]{\tsig, \tsigent{c}{v_c}{\sig_c}}{\sig}
    {\mem}{\tsig', \tsigent{c}{v_c}{\sig_c}}{\graph}$
    and~$c \not\in \dom{\sig}$,
    then~$\mceval[\dassn]{\tsig}{\sig}{\mem}{\tsig'}{\graph}$.
  \end{enumerate}
\end{lemma}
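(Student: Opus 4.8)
The plan is to prove part~1 first, by induction on the derivation of $\cceval{\tsig,\tsigent{c}{v_c}{\sig_c}}{\sig}{a}{\prio}{\cmd}{v}{\graph}{\tsig',\tsigent{c}{v_c}{\sig_c}}{\sig'}$, and then part~2 by induction on the derivation of the thread-pool judgment, using part~1 to handle the \rulename{CT-Thread} case. The organizing observation is that no rule of the cost semantics ever \emph{reads} the thread-record entry for~$c$: among the command rules only \rulename{C-Sync} inspects the thread record at all, and it reads only the entry for the thread being synced on; among the thread-pool rules only \rulename{CT-Thread} does, and it reads only the entry of the thread currently running. In each case the inspected thread lies in the ambient signature, and therefore cannot be~$c$.

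For part~1 I would case on the last rule. The expression-only rules and \rulename{C-Ret} leave the thread record untouched, so there is nothing to do. In \rulename{C-Bind} and \rulename{C-Spawn} the entry for~$c$ is merely threaded into the premises unchanged, so the claim follows from the induction hypothesis after re-assembling the identical DAG; to invoke the induction hypothesis on a premise one must re-establish its two side conditions: well-typedness of the sub-command (from inversion on \rulename{Bind}/\rulename{Spawn} plus Lemma~\ref{lem:exp-cost-typed}, and, for the second premise of \rulename{C-Bind}, preservation of typing under command evaluation, i.e.\ Lemma~\ref{lem:typed-wf}), and the side condition $c \notin \dom{\sig}$ on the possibly-extended ambient signature of that premise. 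The latter holds because every thread name that enters an ambient signature during evaluation --- a fresh~$b$ introduced by \rulename{C-Spawn}, or a name in the component~$\sig'$ of an entry $\sigcent{b}{v}{\sig'}$ reattached by \rulename{C-Sync} --- is a freshly generated name, hence (up to $\alpha$-renaming) distinct from~$c$, which already occurs in the thread record.

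The crucial case is \rulename{C-Sync}, where $\cmd = \kwbsync{e}$. Its premise $\ceval[\dassn]{e}{\kwtid{b}}{\uthread}$ does not mention the thread record, so no induction hypothesis is needed and it suffices to show $b \neq c$. By inversion on \rulename{Sync}, $\etyped{\sig}{\ectx}{e}{\kwat{\tau}{\prio'}}$, so by Lemma~\ref{lem:exp-cost-typed} also $\etyped{\sig}{\ectx}{\kwtid{b}}{\kwat{\tau}{\prio'}}$; inversion on \rulename{Tid} then forces $\sigtype{b}{\tau}{\prio'} \in \sig$, and since $c \notin \dom{\sig}$ we get $b \neq c$. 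Hence the lookup of $\sigcent{b}{v}{\sig'}$ in the record $\tsig, \sigcent{b}{v}{\sig'}, \tsigent{c}{v_c}{\sig_c}$ does not use the $c$-entry, and the very same instance of \rulename{C-Sync} derives the conclusion from the record $\tsig, \sigcent{b}{v}{\sig'}$.

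Part~2 is then routine. \rulename{CT-Concat} and \rulename{CT-Extend} propagate the $c$-entry into their premises and reassemble via the induction hypothesis. For \rulename{CT-Thread} the pool is a single thread $\dthread{a}{\prio}{\cmd}$ and the rule strips the entry for~$a$ before descending into $\cmd$; since the lemma's hypothesis puts~$c$ in both the input and the output thread record whereas the current thread's entry is absent from the output, $c \neq a$, so removing the $a$-entry leaves the $c$-entry in place and part~1 applies to~$\cmd$ (which is well-typed under $\sig$ by inversion on \rulename{OneThread}). The only real obstacle is the bookkeeping that keeps~$c$ out of the growing ambient signature along every branch where the induction hypothesis is invoked; this is a direct consequence of reading the freshness condition of \rulename{C-Spawn} globally, so that every spawned name --- and every transitively spawned name recorded inside a thread-record entry --- differs from everything else in scope, $c$ included.
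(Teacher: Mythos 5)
Your proposal is correct and follows essentially the same route as the paper: induction on the cost derivation with \rulename{C-Sync} as the only interesting case, where everything reduces to showing $b \neq c$ so that the $c$-entry is never the one consulted. The only (immaterial) difference is that the paper reads $c \neq b$ directly off the ambient signature $\sig,\sigtype{b}{\tau}{\prio'}$ appearing in the conclusion of \rulename{C-Sync}, whereas you derive $b \in \dom{\sig}$ by inversion on \rulename{Sync} and \rulename{Tid}; both are valid, and your treatment of the remaining cases matches (and fleshes out) what the paper dismisses as routine.
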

\begin{proof}
  \begin{enumerate}
    \item
      By induction on the derivation of~$\cceval{\tsig, \tsigent{c}{v_c}{\sig_c}
        {\prio'}}{\sig}{a}{\prio}{\cmd}
      {v}{\graph}{\tsig',\tsigent{c}{v_c}{\sig_c}}{\sig'}$.
    \ifproofs The interesting case is~\rulename{C-Sync}.
    \[
    \infer[C-Sync]
      {\ceval[\dassn]{e}{\kwtid{b}}{\uthread}\\
        u \fresh}
      {\cceval{\tsig,\sigcent{b}{v}{\sig'}}
        {\sig,\sigtype{b}{\tau}{\prio'}}{a}{\prio}
        {\kwbsync{e}}
        {v}
        {\dagq{\cthread{a}{\prio}{\uthread \tscomp{} u}}{\emptyset}
          {\{(b, u)\}}{\emptyset}}{\tsig,\sigcent{b}{v}{\sig'}}{\sig,\sigtype{b}{\tau}{\prio'}, \sig'}
      }
      \]
      \[
      \begin{plines}
        \have{c \neq b}
        \cjust{$c \not\in\dom{\sig,\sigtype{b}{\tau}{\prio'}}$}\\
        \have{\cceval{\tsig,\tsigent{b}{v}{\sig'}}
          {\sig,\sigtype{b}{\tau}{\prio'}}{a}{\prio}{\kwbsync{e}\\ &}
          {v}{\tsig,\tsigent{b}{v}{\sig'}}
          {\sig,\sigtype{b}{\tau}{\prio'}, \sig'}
          {\dagq{\cthread{a}{\prio}{\uthread \tscomp{} u}}{\emptyset}
          {\{(b, u)\}}{\emptyset}}}
        \cjust{\rulename{C-Sync}}
      \end{plines}
      \]
      \fi
    \item By induction on the derivation of
      $\mceval[\dassn]{\tsig, \tsigent{b}{v'}{\prio'}}{\sig}
      {\mem}{\tsig', \tsigent{b}{v'}{\prio'}}{\graph}$.
      All cases follow from induction.
  \end{enumerate}
\end{proof}

We now show that well-typed programs produce strongly well-formed cost graphs.
%
%Recall the two conditions of strong well-formedness: 1) all join edges go from
%higher-priority threads to lower-priority threads and 2) if a path from~$u$
%to~$u'$ starts with a spawn edge and ends with a join edge, there is another
%path.
%
%The first condition is directly guaranteed by the type system.
%
%We show the second condition using the signatures that are threaded through
%the cost semantics.
%
We maintain the invariant that if~$b \in \dom{\sig}$ when an operation
corresponding to vertex~$u$ in thread~$a$ is typed, then the vertex that
spawned~$b$ must be an ancestor of~$u$.
We say that a graph for which this invariant holds is {\em compatible}
with~$\sig$ at~$a$.

\begin{defn}
  We say that a graph~$\graph=\dagq{\dthreads}{\spawns}{\syncs}{\polls}$
  is compatible with a signature~$\sig$ at~$a$ if
  \begin{enumerate}
  \item $\cthread{a}{\prio_a}{\uthread_a \tscomp{} t_a} \in \dthreads$
  \item for all~$b \in \dom{\sig}$, if~$(u, b) \in \spawns$, then~$\anc{u}{t_a}$.
    %we
  %have~$\cthread{b}{\prio_b}{\uthread_b \tscomp{} t_b} \in \dthreads$
  %and~$(u, b) \in \spawns$ for some~$\anc{u}{t_a}$.
  \end{enumerate}

  We say that a graph~$\graph$ is compatible with a thread record~$\tsig$
  if for all~$\sigcent{b}{v}{\sig'} \in \tsig$, it is the case that~$\graph$
  is compatible with~$\sig'$ at~$b$.
\end{defn}

\ifproofs
We show some facts about compatibility and strong well-formedness
that will be useful later:
\begin{lemma}\label{lem:compat-facts}
  \begin{enumerate}
  \item If~$\graph$ is compatible with a signature~$\sig$ at~$a$, then
    $\graph \scomp{a} \sthread{\uthread}$ is compatible with~$\sig$ at~$a$
    and $\sthread{\uthread} \scomp{a} \graph$ is compatible with~$\sig$ at~$a$.
  \item If~$\graph$ is compatible with~$\tsig$, then
    $\graph \scomp{a} \sthread{\uthread}$ is compatible with~$\tsig$
    and $\sthread{\uthread} \scomp{a} \graph$ is compatible with~$\tsig$.
  \item If~$\graph_1$ and~$\graph_2$ are compatible with~$\sig$ at~$a$, then
    $\graph_1 \scomp{a} \graph_2$ is compatible with~$\sig$ at~$a$.
  \item If~$\graph_1$ and~$\graph_2$ are compatible with~$\tsig$, then
    $\graph_1 \scomp{a} \graph_2$ is compatible with~$\tsig$.
  \item If~$\graph$ is strongly well-formed, then
    $\graph \scomp{a} \sthread{\uthread}$ is strongly well-formed
    and $\sthread{\uthread} \scomp{a} \graph$ is strongly well-formed.
  \end{enumerate}
\end{lemma}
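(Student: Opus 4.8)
The plan is to prove all five parts by unwinding the definition of the sequential composition $\scomp{a}$ and exploiting its two structural properties: composing at thread $a$ never introduces a new spawn edge or a new join edge, and the only edge it adds is a thread edge, namely the one joining the last vertex of the first component's copy of $a$ to the first vertex of the second component's copy of $a$. Consequently, in the composed graph the spawn-edge set and the sync-edge set are exactly the unions of those of the components, the priority of every thread and the collection of threads present are unchanged, and the ancestor relation $\anc{}{}$ of the composed graph \emph{contains} the ancestor relations of the components. These observations reduce every part to a short bookkeeping argument about where the terminal vertex $t_a$ of thread $a$ ends up after composition.

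For part~1, in $\sthread{\uthread} \scomp{a} \graph$ the copy of $a$ is $\uthread$ followed by $a$'s sequence from $\graph$, so its terminal vertex is the same as in $\graph$; since the spawn edges and $t_a$ are unchanged and the ancestor relation only grows, condition~(2) of compatibility with $\sig$ at $a$ is inherited directly from $\graph$, and condition~(1) holds because $a$ is still present. In $\graph \scomp{a} \sthread{\uthread}$ the new terminal vertex $t_a'$ is the last vertex of $\uthread$, but the old $t_a$ is an ancestor of $t_a'$ along the new thread edge, so $\anc{u}{t_a}$ in $\graph$ yields $\anc{u}{t_a'}$ in the composite, and the spawn-edge set is again unchanged. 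Part~3 is the same argument with $\sthread{\uthread}$ replaced by a general graph $\graph_2$: the terminal vertex $t_a^{(2)}$ of $a$ in $\graph_1 \scomp{a} \graph_2$ is a descendant (via the new thread edge) of the terminal vertex $t_a^{(1)}$ of $a$ in $\graph_1$, so a spawn edge $(u,b) \in \spawns$ coming from $\graph_1$ with $\anc{u}{t_a^{(1)}}$ still satisfies $\anc{u}{t_a^{(2)}}$, while one coming from $\graph_2$ already targets a vertex of $\graph_2$ from which $t_a^{(2)}$ is reachable; here we use that the two graphs are vertex-disjoint apart from thread $a$, which is exactly the situation in which $\scomp{a}$ is defined. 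Part~5 is even more immediate: since no join or spawn edges are added and all thread priorities are preserved, both clauses of strong well-formedness quantify over the same sync edges and spawn edges as in $\graph$, clause~(1) transfers verbatim, and clause~(2) transfers because any witnessing path in $\graph$ whose first edge is a thread edge is still present in the composite.

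Parts~2 and~4 will follow by applying parts~1 and~3 entry-by-entry over $\tsig$, after a mild generalization: composing at thread $a$ preserves compatibility not only at $a$ but at \emph{any} thread $b$. For $b = a$ this is precisely parts~1 and~3. For $b \neq a$, thread $b$ lies in a single component and is not the thread being composed, so its terminal vertex, the spawn edges pointing into the associated signature $\sig'$, and (by the monotonicity of $\anc{}{}$) the relevant ancestor facts are all inherited unchanged; hence compatibility at $b$ with $\sig'$ is preserved. Applying this to every $\sigcent{b}{v}{\sig'} \in \tsig$ yields the two statements.

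I expect no genuine obstacle here. The only points that require care — and which I would flag in the write-up — are getting the direction of the new thread edge right so that $t_a$ is tracked correctly in each of the two orders of composition, and being explicit that the vertex-disjointness underlying $\scomp{a}$ is what licenses the case split on $b = a$ versus $b \neq a$ in parts~3 and~4.
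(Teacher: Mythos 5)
Your proposal is correct and follows essentially the same route as the paper's proof: both arguments unwind the definition of $\scomp{a}$, observe that it adds no spawn or join edges, track where the terminal vertex of thread~$a$ lands under each order of composition, and use monotonicity of the ancestor relation, with parts~2 and~4 reduced to the observation that threads other than~$a$ are structurally unchanged. Your explicit case split on $b = a$ versus $b \neq a$ in parts~2 and~4 is a slightly more careful rendering of the same idea, not a different approach.
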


\begin{proof}
  \begin{enumerate}
  \item Part (1) of compatibility is immediate from the definitions, as is
    part (2) for $\sthread{\uthread} \scomp{a} \graph$. To show
    part (2) on $\graph \scomp{a} \sthread{\uthread}$,
    let~$b \in \dom{\sig}$, suppose~$(u,b) \in \spawns$.
    By definition,~$\anc{u}{t_a}$, where~$t_a$ is the last vertex of~$a$
    in~$\graph$. We have~$\anc{t_a}{t_a'}$ where~$t_a'$ is the last vertex
    of~$a$ in~$\uthread$.
  \item Composing at~$a$ doesn't change the structure of any other
    thread.%, so compatibility with~$\tsig$ is preserved.
  \item Let~$\graph_1 = \dagq{\dthreads_1}{\spawns}{\syncs}{\polls}$
    and~$\graph_2 = \dagq{\dthreads_2}{\spawns'}{\syncs'}{\polls'}$,
    where~$t_1$ is the last vertex of~$a$ in~$\graph_1$ and~$t_2$ is the
    last vertex of~$a$ in~$\graph_2$.
    Part (1) of compatibility is immediate from the definitions. For part
    (2), let~$b \in \dom{\sig}$ and suppose~$(u,b) \in \spawns$.
    Then~$\anc{u}{\anc{t_1}{t_2}}$.
    Now suppose~$(u,b) \in \spawns'$. Then~$\anc{u}{t_2}$ immediately.
  \item Composing at~$a$ doesn't change the structure of any other
    thread.%, so compatibility with~$\tsig$ is preserved.
  \item No join edges are added in either case, so strong well-formedness is
    preserved by composition.
  \end{enumerate}
\end{proof}
\fi

Compatibility gives the final piece needed to show that a graph is
strongly well-formed: if a vertex~$u$ syncs on a thread~$b$, then~$b$
must be in the signature~$\sig$ used to type the sync operation~$u$,
and if the graph generated up to this point is compatible with~$\sig$, the
vertex that spawned~$b$ is an ancestor of~$u$.
At first glance, the phrase ``the graph generated up to this point'' seems
terribly non-compositional.
This would be worrisome, as we wish to be able to prove a large graph
well-formed by breaking it into subgraphs and showing the result by induction.
To do so, we posit the existence
of a graph~$\graph'$ which is well-formed and compatible with the current
signature and thread record.
This graph represents ``the graph generated up to this point''.
%and will be
%filled in appropriately when we compose subgraphs.

\begin{lemma}\label{lem:typed-wf}
  If~$\cmdtyped{\sig}{\ectx}{\cmd}{\tau}{\prio}$ and
  $\tsigtyped{\sig}{\tsig}$ and
  $\cceval{\tsig}{\sig}{a}{\prio}{\cmd}{v}{\graph}{\tsig'}{\sig'}$
  and there exists~$\graph'$ such that:
  \begin{enumerate}
  \item $\graph'$ is strongly well-formed
  \item $\graph'$ is compatible with~$\sig$ at~$a$ and
  \item $\graph'$ is compatible with~$\tsig$
  \end{enumerate}
  then
  \begin{enumerate}
  \item $\graph=\dagq{\cthread{a}{\prio}{\uthread}
    \uplus \dthreads}{\spawns}{\syncs}{\polls}$
  \item $\sig'$ extends~$\sig$
  \item $\graph' \scomp{a} \graph$ is strongly well-formed
    %    \item All vertices of~$\graph$ are accessible from~$u_1$
  \item $\graph' \scomp{a} \graph$ is compatible with~$\sig'$ at~$a$.
  \item $\graph' \scomp{a} \graph$ is compatible with~$\tsig'$.
  \item $\etyped{\sig'}{\ectx}{v}{\tau}$.
    %If~$d \in \dom{\sig'} \setminus \dom{\sig}$, we have
    %$(u'', d) \in \spawns \cup \spawns'$ and there is a path from~$u''$ to
    %$u_n$ in $\graph' \scomp{a} \graph$.
  \end{enumerate}
\end{lemma}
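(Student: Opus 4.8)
The plan is to induct on the derivation of $\cceval{\tsig}{\sig}{a}{\prio}{\cmd}{v}{\graph}{\tsig'}{\sig'}$, carrying the hypothesized history graph $\graph'$ through the recursive calls. Conclusions~(1) and~(2) are read off each rule directly---using transitivity of ``$\sig'$ extends $\sig$'' where there are recursive premises---and conclusion~(6) follows from \lemref{exp-cost-typed} on the expression premises, together with \lemref{subst}(2) and weakening along $\sig'\supseteq\sig$ for the substitution $[v/x]\cmd_2$ in \rulename{C-Bind}, and from inverting the derivation of $\tsigtyped{\sig}{\tsig}$ at the entry $\sigcent{b}{v}{\sig'}$ in \rulename{C-Sync}. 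The substance is conclusions~(3), (4), (5). For every expression subderivation $\ceval[\dassn]{e}{v}{\uthread}$, the produced $\uthread$ is a single thread of thread edges, so by \lemref{compat-facts}(1,2,5) adjoining it with $\scomp{a}$ preserves strong well-formedness and both compatibility properties; parts~(3),(4) of the same lemma likewise let us compose whole subgraphs. The bookkeeping of keeping the history graph correctly updated across the recursive calls---plus a mild companion fact, that the cost semantics also preserves well-typedness of the thread record, so that the inductive hypothesis can be reapplied in \rulename{C-Bind} and \rulename{C-Spawn}---is the routine-but-fiddly part.

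Concretely, \rulename{C-Ret} is immediate: the result is a single thread, so \lemref{compat-facts}(1,2,5) closes it. In \rulename{C-Bind} I would apply the inductive hypothesis to the evaluation of $\cmd_1$ with history $\graph'\scomp{a}\sgraph{\uthread_1}$, and then to the evaluation of $[v/x]\cmd_2$ (well-typed by inversion on \rulename{Bind}, \lemref{subst}(2), and weakening) with history $(\graph'\scomp{a}\sgraph{\uthread_1})\scomp{a}\graph_1\scomp{a}\sgraph{u}$; associativity of $\scomp{a}$ then makes the two history graphs line up to $\graph'\scomp{a}\graph$. In \rulename{C-Spawn} I would apply the inductive hypothesis to the evaluation of the spawned command at thread $b$, with history obtained from $\graph'$ by adjoining the spawn vertex $u$ as the new end of thread $a$, a fresh thread $b$ at priority $\prio'$, and the spawn edge $(u,b)$: this history is strongly well-formed since no join edge is added, compatible with $\sig$ at $b$ since $b\notin\dom{\sig}$ and $(u,b)$ is the only new spawn edge, and compatible with $\tsig$; the threads recorded in $\sig'$ are all spawned inside $b$'s subgraph, so the final graph is compatible with $\sig,\sigtype{b}{\tau}{\prio'}$ at $a$ (the edge $(u,b)$ leaves the new last vertex $u$ of thread $a$) and with $\tsig,\sigcent{b}{v}{\sig'}$.

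The crux---and the step I expect to be the main obstacle---is \rulename{C-Sync}, the only rule that introduces a genuine join edge $(b,u)$. After using \lemref{compat-facts}(5) to handle composition with the single-thread part, strong well-formedness of $\graph'\scomp{a}\graph$ reduces to verifying the two conditions of the strong-well-formedness definition for that one edge. The first, $\ple{\prio}{\prio'}$ between the syncing thread $a$ and the synced-on thread $b$, is precisely the premise $\meetc{\ctx}{\ple{\prio}{\prio'}}$ of the \rulename{Sync} typing rule, recovered by inverting $\cmdtyped{\sig}{\ectx}{\kwbsync{e}}{\tau}{\prio}$. The second condition is where compatibility pays off: if $(w,b)$ is a spawn edge, then $b$ lies in the signature used to type the $\kwbsync{}$, so compatibility of $\graph'$ with that signature at $a$ gives $\anc{w}{t_a}$ for $t_a$ the last vertex of $a$ in $\graph'$; since $\graph$ extends thread $a$ past $t_a$ by a chain of thread edges ending at $u$, it remains to show that the ancestor path from $w$ to $t_a$ may be routed through a thread edge out of $w$. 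This holds because $w$, being a spawn vertex whose spawned thread $b$ is still visible at $t_a$, cannot be the last vertex of its own thread---the spawning command always has a continuation there---and that outgoing thread edge is what heads back toward $t_a$, since any route starting with the spawn edge $(w,b)$ descends into thread $b$, from which there is no way back to thread $a$ in $\graph'$ (thread $a$ does not yet sync on $b$). Conclusions~(4) and~(5) for \rulename{C-Sync} then follow, the thread record and signature having grown only by $\sig'$, whose threads are spawned transitively under $b$ and hence recorded, with their spawn edges below the relevant frontier, in the thread-record entry for $b$ that $\graph'$ is compatible with by hypothesis. Thus the genuinely delicate point is this thread-edge routing argument, exactly where the \rulename{Sync} rule's priority constraint and the purely functional structure of {\calcname} are both essential.
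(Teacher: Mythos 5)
Your plan reproduces the paper's proof almost exactly: the same induction on the cost derivation, the same use of \lemref{compat-facts} to absorb the expression-generated thread segments, the same re-threading of the history graph through \rulename{C-Bind}, and the same division of labour in \rulename{C-Sync} between condition~(1) of strong well-formedness (recovered by inverting \rulename{Sync}) and condition~(2) (recovered from compatibility). Your bookkeeping in \rulename{C-Spawn} --- explicitly adjoining the spawn vertex, the fresh thread~$b$, and the edge~$(u,b)$ to~$\graph'$ before invoking the inductive hypothesis at~$b$ --- is in fact tidier than the paper's, which composes at~$a$ a history that has to be read as already containing thread~$b$ for the recursive call to typecheck.

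The one place you go beyond the paper is the thread-edge routing argument in \rulename{C-Sync}, and that is also where your argument does not hold up. The paper's own proof of this case stops at ``if~$(u',b)\in\graph'$ then~$\anc{u'}{u}$, so the graph is strongly well-formed,'' never discharging the requirement that the path \emph{begin with a thread edge}; you correctly identify this as the real obligation. But your justification --- that the spawn vertex~$w$ cannot be the last vertex of its own thread because ``the spawning command always has a continuation there'' --- is not supported by the cost semantics. \rulename{C-Spawn} contributes exactly one vertex to the spawning thread, so a thread~$d$ whose last (or only) command is~$\kwbspawn{\prio'}{\tau}{\cmd}$ ends at its spawn vertex~$w$. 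The handle~$\kwtid{b}$ then escapes as~$d$'s return value; a third thread~$a$ with~$\ple{\prio_a}{\prio_d}$ and~$\ple{\prio_a}{\prio_b}$ can obtain it by syncing on~$d$ and subsequently sync on~$b$; and the only edges leaving~$w$ are the spawn edge~$(w,b)$ and the join edge carrying~$d$'s return value into~$a$ --- no thread edge at all. This well-typed program violates condition~(2) of strong well-formedness as literally stated, so no routing argument can succeed; what compatibility actually delivers, and what still suffices for the case analysis in \lemref{wf-alt}, is the weaker statement that some path from~$w$ to~$u$ avoids starting with the spawn edge~$(w,b)$ itself. In short, you found the genuine soft spot that the paper's proof silently elides, but the patch you propose does not close it.
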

\begin{proof}
  By induction on the derivation of
  $\cceval{\tsig}{\sig}{a}{\prio}{\cmd}{v}{\graph}{\tsig'}{\sig'}$.
  \ifproofs
  \begin{itemize}
  \item[Case]
      \[\infer[C-Bind]
      {\ceval[\dassn]{e}{\kwcmd{\prio}{\cmd_1}}{\uthread_1}\\
        \cceval{\tsig}{\sig}{a}{\prio}{\cmd_1}{v}{\graph_1}{\tsig_1}{\sig_1}
               \\
        u \fresh\\
        \cceval{\tsig_1}{\sig_1}{a}{\prio}{[v/x]\cmd_2}{v'}{\graph_2}
               {\tsig_2}{\sig_2}
               \\
        }
      {\cceval{\tsig}{\sig}{a}{\prio}{\kwbind{e}{x}{\cmd_2}}{v'}
        {\sgraph{\uthread_1} \scomp{a}
          \graph_1 \scomp{a} \sgraph{u} \scomp{a} \graph_2}
        {\tsig_2}{\sig_2}
      }\]
      This case follows from the inductive hypothesis applied to the
      second subderivation if we can show that the conditions of the lemma
      hold for~$\graph' \scomp{a} \sgraph{\uthread_1} \scomp{a} \graph_1
      \scomp{a} \sgraph{u}$.
      These in turn hold from Lemma~\ref{lem:compat-facts} and the
      inductive hypothesis applied to the first subderivation if we can show
      that the conditions of the lemma hold
      for~$\graph' \scomp{a} \sgraph{u_1}$.
      This follows from Lemma~\ref{lem:compat-facts} and the assumptions.

    \item[Case]
      \[\infer[C-Spawn]
      {b \fresh\\
        \cceval{\tsig}{\sig}{b}{\prio}{\cmd}{v}
               {\dagq{\dthreads}{\spawns}{\syncs}{\polls}}
               {\tsig,\tsig'}{\sig,\sig'}
               \\
        u \fresh
        }
      {\cceval{\tsig}{\sig}{a}{\prio}{\kwbspawn{\prio'}{\tau}{\cmd}\\}
        {\kwtid{b}}{\dagq{\cthread{a}{\prio}{u} \uplus \dthreads}
          {\spawns \cup \{(u, b)\}}{\syncs}{\polls}}
        {\tsig,\tsig',\tsigent{b}{v}{\sig'}}
          {\sig,\sigtype{b}{\tau}{\prio'}}
      }\]
      Then $\graph = \dagq{\cthread{a}{\prio}{\sgraph{u}} \uplus \dthreads}
      {\spawns \cup \{(u, b)\}}{\syncs}{\polls}$. By induction,
      $\sig,\sig'$ extends~$\sig$ and
      $\graph' \scomp{a} \dagq{\dthreads}{\spawns}{\syncs}{\polls}$ is
      strongly well-formed and compatible with~$\sig,\sig'$ at~$b$ and
      is compatible with~$\tsig,\tsig'$ and
      $\etyped{\sig,\sig'}{\ectx}{v}{\tau}$.
      We have that~$\graph' \scomp{a} \graph$ is strongly well-formed as well,
      because this adds no join edges.
      Because~$\graph' \scomp{a} \graph$ is compatible with~$\sig,\sig'$ at~$b$
      and is compatible with~$\tsig,\tsig'$, we have that
      $\graph' \scomp{a} \graph$ is compatible
      with~$\tsig,\tsig',\tsigent{b}{v}{\sig'}$. It remains to show that
      $\graph' \scomp{a} \graph$ is compatible
      with~$\sig,\sigtype{b}{\tau}{\prio}$ at~$a$. This is the case
      because~$\graph' \scomp{a} \graph$ is compatible with~$\sig$ at~$a$
      and~$(u, b) \in \spawns \cup \{(u, b)\}$.

    \item[Case]
      \[
      \infer[C-Sync]
      {\ceval[\dassn]{e}{\kwtid{b}}{\uthread}\\
        u \fresh}
      {\cceval{\tsig,\tsigent{b}{v}{\sig'}}
        {\sig,\sigtype{b}{\tau}{\prio'}}{a}{\prio}{\kwbsync{e}\\}{v}
        {\dagq{\cthread{a}{\prio}{\uthread \tscomp{} u}}{\emptyset}
          {\{(b, u)\}}{\emptyset}}
        {\tsig,\tsigent{b}{v}{\sig'}}{\sig,\sigtype{b}{\tau}{\prio'}, \sig'}}
      \]
      Then $\graph = \dagq{\cthread{a}{\prio}{\uthread \tscomp{} \sgraph{u}}}
      {\emptyset}{\{(b, u)\}}{\emptyset}$.
      The only join edge added to form~$\graph' \scomp{a} \graph$ is~$(b, u)$.
      By inversion on the typing rule, we must have~$\ple{\prio}{\prio'}$.
      Because~$\graph'$ is compatible with~$\sig,\sigtype{b}{\tau}{\prio'}$
      at~$a$, if~$(u', b) \in \graph'$ then~$\anc{u'}{u}$, so
      $\graph' \scomp{a} \graph$ is strongly well-formed.
      In addition, it remains compatible with~$\tsig,\tsigent{b}{v}{\sig'}$.
      By inversion on~$\tsigtyped{\sig,\sigtype{b}{\tau}{\prio'}}
      {\tsig,\tsigent{b}{v}{\sig'}}$, we must
      have~$\etyped{\sig,\sigtype{b}{\tau}{\prio'},\sig'}{\ectx}{v}{\tau}$.
      It remains to show that~$\graph' \scomp{a} \graph$ is compatible
      with~$\sig,\sigtype{b}{\tau}{\prio'},\sig'$ at~$a$ and in particular
      that for all~$c \in \dom{\sig'}$,
      if~$(u'', c) \in \graph' \scomp{a}{\graph}$, then~$\anc{u''}{u}$.
      Because~$\graph'$ is compatible with~$\tsig,\tsigent{b}{v}{\sig'}$,
      we have that~$\graph'$ is compatible with~$\sig'$ at~$b$, so~$u''$
      is an ancestor of the last vertex of~$b$ in~$\graph'$ and is therefore
      an ancestor of~$u$ in~$\graph' \scomp{a}{\graph}$.

    \item[Case]
      \[\infer[C-Ret]
      {\ceval[\dassn]{e}{v}{\uthread}}
      {\cceval{\tsig}{\sig}{a}{\prio}{\kwret{e}}{v}
        {\dagq{\cthread{a}{\prio}{\uthread}}{\emptyset}{\emptyset}{\emptyset}}
        {\tsig}{\sig}
      }\]
      By Lemma~\ref{lem:exp-cost-typed}, we
      have~$\etyped{\sig}{\ectx}{v}{\tau}$.
      The other conditions follow from Lemma~\ref{lem:compat-facts}.
  \end{itemize}
  \fi
\end{proof}

In order to show that a full graph generated by a well-typed program is
strongly well-formed, we simply observe that ``the graph generated up to this
point'' is empty, and trivially satisfies the requirements of the lemma.

\begin{cor}
  If~$\cmdtyped{\esig}{\ectx}{\cmd}{\tau}{\prio}$ and
  $\cceval{\esig}{\esig}{a}{\prio}{\cmd}{v}{\graph}{\tsig}{\sig}$, then
  $\graph$ is well-formed.
\end{cor}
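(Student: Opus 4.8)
The plan is to derive this as an immediate corollary of \lemref{typed-wf} and \lemref{wf-alt}. I would apply \lemref{typed-wf} to the two hypotheses $\cmdtyped{\esig}{\ectx}{\cmd}{\tau}{\prio}$ and $\cceval{\esig}{\esig}{a}{\prio}{\cmd}{v}{\graph}{\tsig}{\sig}$, choosing for the ``already-generated'' graph $\graph'$ (which that lemma requires to exist) the empty graph $\egraph$. Invoking the lemma also needs $\tsigtyped{\esig}{\esig}$, which holds by the base rule for well-typed thread records, and the three side conditions on $\graph'$. All three hold trivially for $\egraph$: it has no join edges, so it is strongly well-formed; $\dom{\esig} = \emptyset$, so it is vacuously compatible with $\esig$ at $a$; and $\esig$ has no entries, so it is vacuously compatible with $\esig$.

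\lemref{typed-wf} then yields, in particular, that $\egraph \scomp{a} \graph$ is strongly well-formed. Because sequentially composing at thread $a$ with the empty graph does not change $\graph$ (it merely prepends an empty vertex sequence to thread $a$), conclusion (3) of the lemma is exactly the assertion that $\graph$ is strongly well-formed. A final appeal to \lemref{wf-alt} then gives that $\graph$ is well-formed, which is the claim.

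I expect no real obstacle here: the substance of the argument lies in \lemref{typed-wf}, namely the induction on the cost-semantics derivation that simultaneously preserves strong well-formedness and compatibility --- with \rulename{C-Spawn} (introducing a fresh spawn edge) and \rulename{C-Sync} (adding a join edge, where compatibility with the signature is precisely what certifies that the spawning vertex is an ancestor of the syncing vertex) being the delicate cases --- and in \lemref{wf-alt}, which converts the stronger invariant into the form \thmref{gen-brent} consumes. For the corollary itself the only thing warranting a careful sentence is confirming that $\egraph$ discharges the three hypotheses of \lemref{typed-wf} vacuously and that $\egraph \scomp{a} \graph$ really is $\graph$, so that the lemma's conclusion is verbatim what we need.
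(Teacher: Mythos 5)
Your proposal is correct and follows exactly the paper's own argument: instantiate Lemma~\ref{lem:typed-wf} with the empty graph~$\egraph$ (whose hypotheses hold vacuously) to get strong well-formedness, then apply Lemma~\ref{lem:wf-alt}. The additional care you take in checking $\tsigtyped{\esig}{\esig}$ and that $\egraph \scomp{a} \graph$ is $\graph$ is sound but does not change the route.
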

\begin{proof}
  Because~$\egraph$ is strongly well-formed and compatible with~$\esig$,
  Lemma~\ref{lem:typed-wf} shows that~$\graph$ is strongly well-formed, and
  is thus well-formed by Lemma~\ref{lem:wf-alt}.
\end{proof}

\subsection{Response Time Bound for Operational Semantics}\label{sec:resp-time}
Thus far in this section, we have developed a DAG-based cost model for
{\calcname} programs and showed an offline scheduling bound which holds for
DAGs derived from well-typed {\calcname} programs.
Although the DAGs are built upon our intuitions of how {\calcname}
programs execute, they are still abstract artifacts which must, in order to
be valuable, be shown to correspond to more concrete, runtime notions.
%
%Just as we justify the correctness of type systems by showing that
%``well-typed programs do not go wrong'', we should justify the correctness
%of our cost model by showing that it correctly predicts runtime costs.

Our goal in this section is to show that an execution of a {\calcname} program
using the dynamic semantics corresponds to a valid schedule of the DAG
generated from that program.
Because well-typed programs admit the cost bound
of Theorem~\ref{thm:gen-brent}, we may then directly appeal to that theorem
for cost bounds on programs.
The argument proceeds as follows:
\begin{enumerate}
\item Lemmas~\ref{lem:ready-threads} and~\ref{lem:global-ready-threads}
  show that a thread of a DAG is ready (i.e. its first unexecuted
  vertex is ready) if and only if the corresponding thread in the program
  may take a step.
\item Lemma~\ref{lem:step-cost} shows that stepping some set of threads in
  the dynamic semantics corresponds to executing the first vertex of those
  threads in a schedule of the DAG.
\item Lemma~\ref{lem:schedule} combines the above results to establish a
  correspondence between an execution of a {\calcname} program and a schedule
  of its cost graph.
\item Finally, we use Theorem~\ref{thm:gen-brent} to bound the length of
  the schedule and therefore the length of the execution in the dynamic
  semantics.
\end{enumerate}

The correspondence between ready DAG threads and active thread pool
threads requires intermediate results about expressions and commands.
Part (1) of Lemma~\ref{lem:ready-threads} states that
an expression produces an empty thread if and only if it is a value.
Part (2) states that
a command a) takes a silent step if and only if it produces a graph
with a ready first vertex, b) returns a value if and only if it produces
an empty graph and c) takes a sync step if and only if it produces a
graph with an incoming join edge.
Parts (3) and (4) extend part (2) to thread pools.
Part (4) in particular states that if the first vertex of a thread is ready
in a graph, the corresponding thread in the thread pool can take a silent step.
The key observation in proving part (4) from part (2) is that if a
vertex~$u$ has an incoming join edge~$(b, u)$ but thread~$b$ is empty, then
thread~$b$ must be returning a value and~$u$ can perform the sync, taking
a silent step with rule \rulename{D-Sync}.

\begin{lemma}\label{lem:ready-threads}
  \begin{enumerate}
  \item If $\etyped{\sig}{\ectx}{e}{\tau}$ and
    $\ceval[\dassn]{e}{v}{\uthread}$, then
    $e \estep{\sig} e'$ for some~$e'$ if and only if~$\uthread$ is nonempty.
  \item If $\cmdtyped{\sig}{\ectx}{\cmd}{\tau}{\prio}$ and
    $\cceval{\tsig}{\sig}{a}{\prio}{\cmd}{v}{\graph}{\tsig'}{\sig'}$, then
    $\graph = \dagq{\cthread{a}{\prio}{\uthread}
    \uplus \dthreads}{\spawns}{\syncs}{\polls}$,
    and~$\graph$ has no spawn edges to threads in~$\sig$ and has no join edges
    to active threads other than~$a$, and
    one of the following is true:
    \begin{enumerate}
    \item There exists~$\cmd'$ such that
      $\cmd \lstep{\sig}{\asil} \cmd'$ and
      $\uthread = u\tscomp{\delay} \uthread'$ and~$u$ is ready in $\graph$.
    \item There exists~$v$ such that~$v\val{\sig}$ and
      $\cmd = \kwret{v}$ and $\uthread = \ethread$.
    \item There exist~$v$ and~$\cmd'$ such that
      $\cmd \lstep{\sig}{\async{b}{v}} \cmd'$ and
      $\uthread = u\tscomp{\delay} \uthread'$ and
      there exists an edge~$(b,u) \in \graph$, which is the only in-edge of~$u$.
    \end{enumerate}
  \item If $\mtyped{\worlds}{\sig}{\mem}{\sig',\sigtype{a}{\tau}{\prio}}$ and
    $\mceval[\dassn]{\tsig}{\sig}{\mem}{\tsig'}{\graph}$ where
    $\graph =\dagq{\dthreads}{\spawns}{\syncs}{\polls}$
    and~$\mem \gstep{\sig}{\tact{a}{\act}} \mem'$,
    then~$\graph$ has no spawn or join edges to threads not in~$\dthreads$ and
    \begin{enumerate}
    \item If~$\act = \asil$, then
      $\cthread{a}{\prio}{u\tscomp{\delay} \uthread} \in \dthreads$ and~$u$
      is ready in~$\graph$.
    \item If~$\act = \asend{b}{v}$, then
      $\cthread{a}{\prio}{\ethread} \in \dthreads$.
    \item If~$\act = \async{b}{v}$, then
      $\cthread{a}{\prio}{u\tscomp{\delay} \uthread} \in \dthreads$ and there
      exists an edge~$(b,u) \in \graph$, which is the only in-edge of~$u$.
    \end{enumerate}
  \item If $\mtyped{\worlds}{\esig}{\mem}{\sig}$ and
    $\mceval[\dassn]{\tsig}{\sig}{\mem}{\tsig'}{\graph}$
    and the first vertex of~$a$ is ready in~$\graph$, then there
    exists~$\mem'$ such that $\mem \gstep{\esig}{\tact{a}{\asil}} \mem'$.
  \end{enumerate}

\end{lemma}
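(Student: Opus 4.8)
The plan is to prove the four parts in sequence, since they stack: part (2) invokes part (1), part (3) invokes part (2), and part (4) invokes part (3) together with Progress (\thmref{prog}). Part (1) is a routine structural induction on the derivation of $\ceval[\dassn]{e}{v}{\uthread}$: the only rule yielding an empty thread is \rulename{C-Val}, whose subject is a value and hence does not step by inversion on the expression dynamics; every other rule allocates at least one fresh vertex into $\uthread$ and matches a dynamic rule under which $e$ steps, while conversely a nonempty $\uthread$ forces $e$ to be a non-value, so Progress for expressions produces $e \estep{\sig} e'$. Part (2) is an induction on the command cost derivation. For \rulename{C-Ret} we use part (1): a value head gives case (b), a stepping head gives case (a) via \rulename{D-Ret}, with the first vertex a source because expression graphs contain only thread edges. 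For \rulename{C-Bind} we case on the head expression: if it still steps, \rulename{D-Bind1} gives case (a); if it is $\kwcmd{\prio}{\cmd_1}$, we apply the IH to the evaluation of $\cmd_1$ and transport its three outcomes through \rulename{D-Bind2}/\rulename{D-Bind3}, using that $\uthread_1$ is empty so the composite graph's first vertex is that of $\graph_1$ (outcomes a, c) or the fresh bind vertex (outcome b). \rulename{C-Spawn} maps to \rulename{D-Spawn}, case (a): the new spawn edge $(u,b)$ targets a fresh $b \notin \dom\sig$, so the ``no spawn edges into $\sig$'' invariant is kept. \rulename{C-Sync} maps to \rulename{D-Sync1} (case a) when the scrutinee still steps, and to \rulename{D-Sync2} (case c) when it is already $\kwtid{b}$, in which case the fresh vertex's unique in-edge is the join edge $(b,u)$ with $b\in\dom\sig$, so no join edge targets an active thread other than $a$.

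Part (3) is an induction on the derivation of $\mem \gstep{\sig}{\tact{a}{\act}} \mem'$, whose possible last rules are \rulename{DT-Thread}, \rulename{DT-Ret}, \rulename{DT-Sync}, \rulename{DT-Concat}, \rulename{DT-Extend}. The base case \rulename{DT-Thread} is exactly part (2) applied to the command (with $\act\in\{\asil,\async{b}{v}\}$), after discarding the unused thread-record entry for $a$ via \lemref{unused-tsig}. \rulename{DT-Ret} fires only on $\kwret{v}$ with $v$ a value, whose cost graph (via \rulename{CT-Thread} and \rulename{C-Ret} on a value) has thread $a$ empty, giving case (b). \rulename{DT-Concat} and \rulename{DT-Extend} follow from the IH, noting that \rulename{CT-Concat}/\rulename{CT-Extend} union (resp.\ pass through) the subgraphs and that the IH's ``no edges to threads outside $\dthreads$'' invariant guarantees the other subgraph contributes no in-edges to $a$'s vertices. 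The crux is \rulename{DT-Sync}: $\mem_1$ does $\async{b}{v}$ (case c by IH: the first vertex $u$ of $a$ has unique in-edge $(b,u)$) and $\mem_2$ does $\asend{b}{v}$ (case b by IH: thread $b$ is empty in $\graph_2$, hence in $\graph$). Because $b$ is empty, the join edge $(b,u)$ imposes no constraint, so $u$ is ready in $\graph$, and the silent conclusion of \rulename{DT-Sync} lands us in case (a).

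Part (4) combines part (3) with Progress. By Progress for thread pools, $\mem \gstep{\esig}{\tact{a}{\act}} \mem'$ for some $\act$, which we classify against the given cost graph $\graph$ via part (3). The case $\act = \asend{b}{v}$ is impossible, since part (3b) would force thread $a$ to be empty, contradicting that its first vertex exists; the case $\act=\asil$ is the desired conclusion. The remaining case $\act = \async{b}{v}$ is resolved by ``completing'' the sync: part (3c) gives that $a$'s first vertex $u$ has $(b,u)$ as its sole in-edge, and since $u$ is assumed ready in $\graph$, thread $b$ must be empty in $\graph$; re-applying Progress and part (3) to thread $b$ (whose only action when empty is $\asend{b}{v'}$ via \rulename{DT-Ret}) shows $\mem$ contains a subthread $\dthread{b}{\prio_b}{\kwret{v'}}$ poised to return. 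Using the normal form from Type Safety (\thmref{safety}) to split $\mem$ into a part containing $a$ and one containing $b$, and choosing the value guessed by \rulename{D-Sync2} to be $v'$, rule \rulename{DT-Sync} then produces a silent step $\mem \gstep{\esig}{\tact{a}{\asil}} \mem''$.

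I expect the main obstacle to be the two places where an abstract DAG fact must be cashed out against the concrete operational semantics: the \rulename{DT-Sync} case of part (3), where a sync action on one side must be recognized as already satisfiable because the partner thread's cost graph is empty, making the composite step silent; and the last step of part (4), where ``thread $b$ is empty in $\graph$'' has to be converted back into ``$\mem$ contains $\dthread{b}{\prio_b}{\kwret{v'}}$'' and the guessed sync value reconciled with the returned value so that \rulename{DT-Sync} genuinely applies. Both rely on careful bookkeeping with the thread-pool congruence rules and with the invariant, established inductively, that cost DAGs contain no dangling spawn or join edges.
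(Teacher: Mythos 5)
Your proposal is correct and follows essentially the same route as the paper's proof: part (1) via Progress plus inspection of the cost rules, part (3) by induction on the thread-pool step with the same crux in \rulename{DT-Sync} (the partner thread's empty graph makes the join edge vacuous), and part (4) by classifying the action Progress provides and completing the pending sync via \rulename{DT-Ret}/\rulename{DT-Sync}. The only deviations are cosmetic: in part (2) you induct on the cost derivation where the paper inducts on the command step derivation (both are syntax-directed and visit the same cases), and in part (4) you recover $\cmd_b = \kwret{v'}$ by re-applying Progress and part (3) to $b$ where the paper appeals directly to part (2).
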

\ifproofs
\begin{proof}
  \begin{enumerate}
  \item By Theorem~\ref{thm:prog}, either $e \estep{\sig} e'$ or~$e\val{\sig}$.
    It remains to show that~$e\val{\sig}$ if and only if~$\uthread=\ethread$.
    Both directions are clear by inspection of the cost semantics.
  \item By Theorem~\ref{thm:prog} and inspection of the dynamic semantics,
    the three cases given are exhaustive. If~$\cmd = \kwret{v}$, then
    apply~\rulename{C-Val} and~\rulename{C-Ret}. Otherwise,
    proceed by induction on the derivation of $\cmd \lstep{\sig}{\act} \cmd'$.

    \begin{itemize}
    \item[Case]
      \[\infer[D-Bind1]
{
  e \estep{\sig} e'
}
{
  \kwbind{e}{x}{\cmd}
  \lstep{\sig}{\asil}
  \rpconfig{\esig}{\kwbind{e'}{x}{\cmd}}{\emem}
}
      \]
      \[
      \begin{plines}
        \have{\ceval[\dassn]{e}{v'}{\uthread'},~\uthread'~\text{nonempty}}
        \just{inversion on~\rulename{C-Bind}, part 1}\\
        \have{\graph=\dagq{\cthread{a}{\prio}{\uthread'\tscomp{}\uthread''}
            \uplus \dthreads}{\spawns}{\syncs}{\polls}}
        \just{inversion on~\rulename{C-Bind}}
      \end{plines}
      \]
    \item[Case]
      \[\infer[D-Bind2]
{
  \cmd_1 \lstep{\sig}{\act} \rpconfig{\sig'}{\cmd_1'}{\mem'}
}
{
  \kwbind{\kwcmd{\prio}{\cmd_1}}{x}{\cmd_2}
  \lstep{\sig}{\act}
  \rpconfig{\sig'}{\kwbind{\kwcmd{\prio}{\cmd_1'}}{x}{\cmd_2}}{\mem'}
}
              \]
              \[
      \begin{plines}
        \have{\cceval{\tsig}{\sig}{a}{\prio}{\cmd_1}{v}
          {\graph_1}{\tsig'}{\sig''}}
        \just{inversion on cost semantics}\\
        \have{\graph_1, \act~\text{meet condition 2(a) or 2(c) of the lemma}}
        \just{induction}\\
        \have{\graph, \act~\text{meet condition 2(a) or 2(c) of the lemma}}
        \just{\rulename{C-Bind}}
      \end{plines}
      \]
    \item[Case]
      \[\infer[D-Bind3]
{
  e \val{\sig}
}
{
  \kwbind{\kwcmd{\prio}{\kwret{e}}}{x}{\cmd}
  \lstep{\sig}{\asil}
  \rpconfig{\esig}{[e/x]\cmd}{\emem}
}
      \]
      \[
      \begin{plines}
        \have{\graph=\dagq{\cthread{a}{\prio}{u\tscomp{}\uthread} \uplus
            \dthreads}{\spawns}{\syncs}{\polls}}
        \just{inversion on~\rulename{C-Bind}}
      \end{plines}
      \]
    \item[Case]
      \[
      \infer[D-Spawn]
{
  b \fresh
}
{
  \kwbspawn{\prio}{\tau}{\cmd}
  \lstep{\sig}{\asil}
  \rpconfig{\sigtype{b}{\tau}{\prio}}{\kwret{\kwtid{b}}}{\dthread{b}{\prio}{\cmd}}
}
      \]
      \[
      \begin{plines}
        \have{\graph = \dagq{\cthread{a}{\prio}{u} \uplus \dthreads}
          {\spawns \cup \{(u, b)\}}{\syncs}{\polls}}
        \just{inversion on~\rulename{C-Spawn}}
      \end{plines}
      \]
    \item[Case]
      \[\infer[D-Sync1]
{
  e \estep{\sig} e'
}
{
  \kwbsync{e}
  \lstep{\sig}{\asil}
  \rpconfig{\esig}{\kwbsync{e'}}{\emem}
}
      \]
      \[
      \begin{plines}
        \have{\ceval[\dassn]{e}{v'}{\uthread'},~\uthread'~\text{nonempty}}
        \just{inversion on~\rulename{C-Sync}, part 1}\\
        \have{\graph=\dagq{\cthread{a}{\prio}{\uthread'\tscomp{}\uthread''}
            \uplus \dthreads}{\spawns}{\syncs}{\polls}}
        \just{inversion on~\rulename{C-Sync}}
      \end{plines}
      \]
    \item[Case]
      \[
      \infer[D-Sync2]
{
  v \val{\sig}%\\
  %\ple{\prio}{\prio'}
}
{
  \kwbsync{(\kwtid{b})}
  \lstep[\prio]{\sig}{\async{b}{v}}
  \rpconfig{\esig}{\kwret{v}}{\emem}
}
      \]
      \[
      \begin{plines}
        \have{\graph = \dagq{\cthread{a}{\prio}{\uthread' \tscomp{} u}}{\emptyset}
          {\{(b, u)\}}{\emptyset}}
        \just{inversion on~\rulename{C-Sync}}\\
        \have{\ceval[\dassn]{v}{v}{\uthread'}}
        \just{inversion on~\rulename{C-Sync}}\\
        \have{\uthread' = \ethread}
        \just{part 1}
      \end{plines}
      \]

    \item[Case]
      \[
      \infer[D-Ret]
{
  e \estep{\sig} e'
}
{
  \kwret{e}
  \lstep{\sig}{\asil}
  \rpconfig{\esig}{\kwret{e'}}{\emem}
}
      \]
      \[
      \begin{plines}
        \have{\ceval[\dassn]{e}{v}{\uthread'},\uthread'~\text{nonempty}}
        \just{inversion on~\rulename{C-Ret}, part 1}\\
        \have{\dagq{\cthread{a}{\prio}{\uthread'}}{\emptyset}{\emptyset}{\emptyset}}
        \just{inversion on~\rulename{C-Ret}}
      \end{plines}
      \]
    \end{itemize}

  \item
    By induction on the derivation of~$\mem \gstep{\sig}{\tact{a}{\act}} \mem'$.

        \begin{itemize}
    \item[Case]
      \[\infer[DT-Thread]
{
  \cmd \lstep[\prio]{\sig}{\act}
  \rpconfig{\sig'}{\cmd'}{\mem'}
}
{
  \dthread{a}{\prio}{\cmd}
  \gstep{\sigtype{a}{\tau}{\prio}, \sig}{\tact{a}{\act}}
  \mbconfig{\sig'}{\dthread{a}{\prio}{\cmd'} \mcp \mem'}
}
      \]
      \[
      \begin{plines}
        \have{\cceval{\tsig}{\sig}{a}{\prio}{\cmd}{v}{\graph}{\tsig'}{\sig'}}
        \just{inversion on~\rulename{CT-Thread}}\\
        \have{\graph = \dagq{\cthread{a}{\prio}{\uthread} \uplus \dthreads}
          {\spawns}{\syncs}{\polls}}
        \just{part 2}\\
        \have{\text{condition (a) or (c) holds on}~\graph}
        \just{part 2}
      \end{plines}
      \]
    \item[Case]
      \[\infer[DT-Ret]
{
  v \val{\sigtype{a}{\tau}{\prio}, \sig}
}
{
  \dthread{a}{\prio}{\kwret{v}}
  \gstep{\sigtype{a}{\tau}{\prio}, \sig}{\tact{a}{\asend{a}{v}}}
  \dthread{a}{\prio}{\kwret{v}}
}
      \]
      \[
      \begin{plines}
        \have{\cceval{\tsig}{\sig}{a}{\prio}{\kwret{v}}{v}{\graph}
          {\tsig'}{\sig'}}
        \just{inversion on~\rulename{CT-Ret}}\\
        \have{\graph = \tgraph{a}{\prio}{\ethread}}
        \just{part 2}
      \end{plines}
      \]
    \item[Case]
      \[\infer[DT-Sync]
{
  \sig = \sig', \sigtype{a}{\tau_a}{\prio_a}, \sigtype{b}{\tau_b}{\prio_b}\\
  %\ple{\prio_a}{\prio_b}\\
  \mem_1 \gstep{\sig}{\tact{a}{\async{b}{v}}} \mem_1'\\
  \mem_2 \gstep{\sig}{\tact{b}{\asend{b}{v}}} \mem_2
}
{
  \mem_1 \mcp \mem_2
  \gstep{\sig}{\tact{a}{\asil}}
  \mem_1' \mcp \mem_2
}
      \]
      \[\begin{plines}
      \have{\mceval[\dassn]{\tsig, \tsig_2}{\sig}{\mem_1}{\tsig_1}
        {\dagq{\cthread{a}{\prio_a}{u\tscomp{} \uthread_a} \uplus \dthreads}
          {\spawns}{\syncs}{\polls}}}
      \ljust{inversion on \rulename{CT-Concat}, induction}\\
      \have{\mceval[\dassn]{\tsig, \tsig_1}{\sig}{\mem_2}{\tsig_2}
        {\dagq{\cthread{b}{\prio_b}{\ethread} \uplus \dthreads'}
          {\spawns'}{\syncs'}{\polls'}}}
      \ljust{inversion on \rulename{CT-Concat}, induction}\\
      \have{\graph =
        \dagq{\cthread{a}{\prio_a}{u\tscomp{}\uthread_a} \uplus
          \cthread{b}{\prio_b}{\ethread} \uplus \dthreads \uplus \dthreads'}
             {\spawns \cup \spawns'}{\syncs \cup \syncs'}{\polls \cup \polls'}}
      \ljust{\rulename{CT-Concat}}\\
      \have{\text{no edges in}~\spawns', \syncs'~\text{target}~u}
      \cjust{induction}\\
      \have{u~\text{is ready in}~\graph}
      \cjust{$b$ is empty, so we may ignore the edge~$(b,u)$}
      \end{plines}
      \]
    \item[Case]
      \[\infer[DT-Concat]
{
  \mem_1 \gstep{\sig}{\tact{a}{\act}} \mem_1'
}
{
  \mem_1 \mcp \mem_2
  \gstep{\sig}{\tact{a}{\act}}
  \mem_1' \mcp \mem_2
}
      \]
      \[\begin{plines}
      \have{\mceval[\dassn]{\tsig,\tsig_2}{\sig}{\mem_1}{\tsig_1}{\graph_1},
        \graph_1 =
        {\dagq{\cthread{a}{\prio_a}{\uthread} \uplus \dthreads}
          {\spawns}{\syncs}{\polls}}}
      \just{induction}\\
      \have{\mceval[\dassn]{\tsig, \tsig_1}{\sig}{\mem_2}{\tsig_2}
        {\dagq{\dthreads'}{\spawns'}{\syncs'}{\polls'}}}
      \just{induction}\\
      \have{\graph =
        \dagq{\cthread{a}{\prio_a}{\uthread} \uplus
          \dthreads \uplus \dthreads'}
             {\spawns \cup \spawns'}{\syncs \cup \syncs'}{\polls \cup \polls'}}
      \just{\rulename{CT-Concat}}\\
      \pcase{\act = \asil}
        \have{\uthread = u \tscomp{} \uthread', u~\text{is ready in}~\graph_1}
        \just{induction}\\
        \have{\text{no edges in}~\spawns',\syncs'~\text{target}~u}
        \just{induction}\\
        \have{u~\text{is ready in}~\graph}\\
      \pcase{\act = \asend{a}{v}}
        \have{\uthread = \ethread}
        \just{induction}
      \pcase{\act = \async{b}{v}}
        \have{\uthread = u \tscomp{\delay} \uthread', \exists
          (b,u)~\text{the only in-edge of}~u~\text{in}~\graph_1}
        \just{induction}\\
        \have{\text{no edges in}~\spawns',\syncs'~\text{target}~u}
        \just{induction}\\
        \have{(b,u)~\text{is the only in-edge of}~u~\text{in}~\graph}
      \end{plines}
      \]
    \item[Case]
      \[\infer[DT-Extend]
{
  \mem \gstep{\sig, \sigtype{a}{\tau}{\prio}}{\tact{b}{\act}} \mem'
}
{
  \mbconfig{\sigtype{a}{\tau}{\prio}}{\mem}
  \gstep{\sig}{\tact{b}{\act}}
  \mbconfig{\sigtype{a}{\tau}{\prio}}{\mem'}
}
      \]
      \[\begin{plines}
      \have{\mceval[\dassn]{\tsig}{\sig,\sigtype{a}{\tau}{\prio}}{\mem}
        {\tsig'}{\graph}}
      \just{inversion on~\rulename{CT-Extend}}\\
      \have{\graph~\text{meets the conditions of the lemma}}
      \just{induction}
      \end{plines}
      \]
        \end{itemize}

      \item
    \[\begin{plines}
    \have{\mem \meq \mbconfig{\sig'}{\mthread{a}{\prio}{\cmd} \mcp \mem_0},
      \text{where}~\dom{\mem_0} \cup \{a\} = \dom{\sig'}}
    \cjust{Theorem~\ref{thm:prog}}\\
    \have{\mem \gstep{\sig,\sig'}{\tact{a}{\act}} \mem'~\text{and}~
      \atyped{\sig,\sig'}{\act}}
    \cjust{Theorem~\ref{thm:prog}}\\
    \have{\act \neq \asend{a}{v}}
    \cjust{part 3 would imply~$a$ has no vertices in~$\graph$, a contradiction}
    \\
    \pcase{\act = \asil}
      \have{\text{Conclusion holds trivially}}
    \pcase{\act = \async{b}{v}}
      \have{(b, u) \in \graph,~\text{where}~u~\text{is the first vertex of}~
        a~\text{and this is the lone in-edge of}~u}
      \ljust{part 3}\\
      \have{\sigtype{b}{\tau_b}{\prio_b} \in \sig,\sig'}
      \cjust{inversion on the static semantics for actions}\\
      \have{\mthread{b}{\prio_b}{\cmd_b} \in \mem_0}
      \cjust{assumption}\\
      \have{b~\text{is empty in}~\graph}
      \cjust{$u$~is ready in~$\graph$}\\
      \have{\cmd_b = \kwret{v}}
      \cjust{part 2}\\
      \have{\exists \mem''. \mem \gstep{\sig,\sig'}{\tact{a}{\asil}} \mem''}
      \cjust{\rulename{DT-Ret}, \rulename{DT-Sync}}
    \end{plines}
    \]
  \end{enumerate}
\end{proof}
\else
%The proofs of cases (1), (2) and (4) make the observation that, by
%Theorem~\ref{thm:prog}, a step must be possible or evaluation has completed.
%All cases then proceed by induction on the step taken, if any.
The full proof is available in the technical report~\citep{partial-prio-tr}.
\fi

Parts (3) and (4) of Lemma~\ref{lem:ready-threads} state that a thread can
take a silent step if and only if its first vertex is ready in the
corresponding graph.
However, this result still considers only sequential
execution: if threads~$a$ and~$b$ are both ready in the graph, it says nothing
about whether~$a$ and~$b$ can step {\em in parallel}.
Lemma~\ref{lem:global-ready-threads} extends the result to parallel steps.
It states that a set~${a_1, \dots, a_n}$ of threads that are ready in~$\graph$
may all step simultaneously, and that any set of threads that can take a
parallel step must be ready in~$\graph$.

\begin{lemma}\label{lem:global-ready-threads}
  Let~$R = \{a \mid \cthread{a}{\prio}{u \tscomp{\delay} \uthread} \in \graph, u
  \text{ is ready in } \graph\}$.
  If $\mtyped{\worlds}{\sig}{\mem}$ and
  $\mceval[\dassn]{\tsig}{\sig}{\mem}{\tsig'}{\graph}$, then
  \begin{enumerate}
    \item For any subset~$\{a_1, \dots, a_n\}$ of~$R$, we have
      $\mem \pgstep{\{a_1, \dots, a_n\}} \mem'$.
    \item If
      $\mem \pgstep{\{a_1, \dots, a_n\}} \mem'$,
      then $\{a_1, \dots, a_n\} \subset R$.
  \end{enumerate}
\end{lemma}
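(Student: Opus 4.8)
The plan is to reduce both halves of the statement to \lemref{ready-threads}, parts (3) and (4), together with inversion on \rulename{DT-Par}, the only rule that can conclude a $\pgstep{}$ transition. First I would normalize the pool: by the congruence rules of Figure~\ref{fig:tp-cong} and \thmref{prog}(3), write $\mem \meq \mbconfig{\sig''}{\dthread{b_1}{\prio_1}{\cmd_1} \mcp \cdots \mcp \dthread{b_k}{\prio_k}{\cmd_k}}$, so that $\{b_1,\dots,b_k\}$ lists the threads currently in $\mem$. Every $a \in R$ is one of the $b_j$: a vertex that is ready in $\graph$ has no incoming spawn edge, so (by inspection of the cost semantics, in which the first vertex of every spawned thread carries such an edge) the thread of that vertex has already been spawned and hence occurs in $\mem$.

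For part (1), given $\{a_1,\dots,a_n\} \subseteq R$, I would split the normal form into the singleton blocks $\mem_i \defeq \dthread{a_i}{\prio_{a_i}}{\cmd_{a_i}}$ for $1 \le i \le n$, with $\mem_0$ collecting the remaining threads. Since $a_i \in R$, the first vertex of $a_i$ is ready in $\graph$, so \lemref{ready-threads}(4) yields a step $\mem \gstep{\esig}{\tact{a_i}{\asil}} \mem_i^{\star}$. The crucial point — the main obstacle, discussed below — is that such a step rewrites only $a_i$'s block, possibly binding and appending freshly spawned threads to it, while leaving $\mem_0$ and every $\mem_j$ with $j \neq i$ untouched; thus $\mem_i^{\star} \meq \mbconfig{\sig''}{\mem_0 \mcp \mem_1 \mcp \cdots \mcp \mem_i^{\circ} \mcp \cdots \mcp \mem_n}$ for some replacement block $\mem_i^{\circ}$ (which may itself be of the form $\mbconfig{\cdot}{\cdot}$ for the new thread names). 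These $n$ steps are exactly the premises demanded by \rulename{DT-Par}, with $\mem_i^{\circ}$ in the role of $\mem_i'$, so \rulename{DT-Par} gives $\mem \pgstep{\{a_1,\dots,a_n\}} \mem'$. The case $n = 0$ is the degenerate, reflexive instance of \rulename{DT-Par}.

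For part (2), inversion on \rulename{DT-Par} gives $\mem \meq \mbconfig{\sig''}{\mem_0 \mcp \mem_1 \mcp \cdots \mcp \mem_n}$ with $a_i$ a thread of $\mem_i$ and, for each $i$, a step $\mem \gstep{\esig}{\tact{a_i}{\asil}} \mem_i^{\star}$. Applying \lemref{ready-threads}(3) in the case $\act = \asil$ then yields that $\cthread{a_i}{\prio}{u \tscomp{\delay} \uthread}$ is a thread of $\graph$ with $u$ ready in $\graph$, that is, $a_i \in R$; hence $\{a_1,\dots,a_n\} \subseteq R$.

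The hard part is the non-interference claim invoked in part (1): that the step furnished by \lemref{ready-threads}(4) can be taken so that exactly one $\mcp$-component of the normal form is rewritten, which is what lets the $n$ separate steps line up as the premises of the single rule \rulename{DT-Par}. This is the property stated informally just before \thmref{safety} — a step of one thread in a normalized pool affects the rest only by spawning new threads — and I would make it precise by a short induction on the derivation of $\mem \gstep{\esig}{\tact{a_i}{\asil}} \mem_i^{\star}$, tracking which component is rewritten in each of \rulename{DT-Thread}, \rulename{DT-Sync}, \rulename{DT-Concat}, and \rulename{DT-Extend}. The only subtle case is \rulename{DT-Sync}: there the synced-on thread $b$ must already have finished, since the join edge into the ready first vertex of $a_i$ forces $b$ to be empty in $\graph$ and \lemref{ready-threads}(2) then makes $b$ equal to some $\kwret{v}$, so $b$'s block is indeed left unchanged by the accompanying \rulename{DT-Ret} premise.
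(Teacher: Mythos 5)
Your proposal is correct and follows essentially the same route as the paper: normalize the pool into its parallel components, use Lemma~\ref{lem:ready-threads}(4) plus an induction on the single-thread step derivation to show each step rewrites only its own component (so the $n$ steps assemble into the premises of \rulename{DT-Par}), and for the converse invert \rulename{DT-Par} and apply Lemma~\ref{lem:ready-threads}(3). The extra care you take with the \rulename{DT-Sync} case and with showing $R$'s threads actually occur in $\mem$ only makes explicit what the paper leaves as ``a straightforward induction.''
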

\begin{proof}
  \begin{enumerate}
  \item
    By Theorem~\ref{thm:safety}, we have
    $\mem \meq \mbconfig{\sig'}{\dthread{a_1}{\prio_1}{\cmd_1} \mcp \dots
      \mcp \dthread{a_m}{\prio_m}{\cmd_m}}$.
    For all~$a_i \in R$, we have that by Lemma~\ref{lem:ready-threads},
    $\mem \gstep{\sig}{\tact{a_i}{\asil}} \mem_i'$.
    A straightforward induction
    on $\mem \gstep{\sig}{\tact{a_i}{\asil}} \mem_i'$ shows that
    $\mem_i' \meq \mbconfig{\sig''}{\dthread{a_1}{\prio_1}{\cmd_1} \mcp \dots
      \mcp \dthread{a_i}{\prio_i}{\cmd_i'} \mcp \mem_i'' \mcp \dots
      \mcp \dthread{a_m}{\prio_m}{\cmd_m}}$.
    Applying this reasoning to all~$a_i \in \{a_1, \dots, a_n\}$
    allows us to apply rule~\rulename{DT-Par}.
  \item Let $i \in [1, n]$. By inversion on rule~\rulename{DT-Par},
    $\mem \gstep{\esig}{\tact{a_i}{\asil}} \mem_i'$.
    By Lemma~\ref{lem:ready-threads}, $a_i \in R$.
  \end{enumerate}
\end{proof}

We now move on to showing that a parallel transition corresponds to a step
of a schedule. At a more precise level, Lemma~\ref{lem:step-cost} shows that
if a thread pool~$\mem'$ produces a graph~$\graph'$
and~$\mem$ steps to~$\mem'$, then~$\mem$ produces a graph isomorphic
to~$\graph$ sequentially post-composed with one vertex for each thread that
was stepped.

Stating this formally requires us to define a new graph composition
operator~$\gscomp{a}$
which composes a thread with a graph~$g$ by adding outgoing edges from the
thread to {\em all} sources of~$g$, with the edge to~$a$ being a continuation
edge and all other edges being spawn edges
(as opposed to~$\scomp{a}$ which adds an edge only to thread~$a$).
\[
\begin{array}{l l}
& \sgraph{\uthread} \gscomp{a}
\dagq{\cthread{a}{\prio}{\uthread'} \uplus
  \cthread{a_1}{\prio_1}{\uthread_1}\dots \uplus
  \cthread{a_n}{\prio_n}{\uthread_n}}{\spawns}{\syncs}{\polls}\\
\defeq &
\dagq{\cthread{a}{\prio}{\uthread \tscomp{\delay} \uthread'} \uplus
  \cthread{a_1}{\prio_1}{\uthread_1}\dots \uplus
  \cthread{a_n}{\prio_n}{\uthread_n}}
     {\spawns \cup \{(u, a_1), \dots, (u, a_n)\}}{\syncs}{\polls}
\end{array}
\]

\begin{lemma}\label{lem:step-cost}
  \begin{enumerate}
  \item If $\ceval[\dassn]{e'}{v}{\uthread}$ and $e \estep{\sig} e'$, then
    $\ceval[\dassn]{e}{v}{u \tscomp{} \uthread}$.
  \item If $\mceval[\dassn]{\tsig}{\sig}{\dthread{a}{\prio}{\cmd'} \uplus \mem'}
    {\tsig''}{\graph}$ and
    $\cmd \lstep{\sig}{\act} \rpconfig{\sig'}{\cmd'}{\mem'}$, then
    $\cceval{\tsig}{\sig}{a}{\prio}{\cmd}{v}{\graph_0}{\tsig'}{\sig'}$,
    where~$\graph_0$ is isomorphic to~$\sgraph{u} \gscomp{a} \graph$.
  \item If $\mceval[\dassn]{\tsig}{\sig}{\mem'}{\tsig'}{\graph'}$ and
    $\mem \pgstep{\{\tact{a_i}{\asil}, \dots, \tact{a_n}{\asil}\}} \mem'$, then
    $\graph'$ can be decomposed into
    $\graph_0 \uplus \graph_1' \uplus \dots \uplus \graph_n'$, and
    $\mceval[\dassn]{\tsig}{\sig}{\mem}{\tsig'}{\graph}$,
    where~$\graph$ is isomorphic to
    $\graph_0 \uplus (\sgraph{u_1} \gscomp{a_1} \graph_1) \uplus \dots \uplus
    (\sgraph{u_n} \gscomp{a_n} \graph_n)$.
  \end{enumerate}
\end{lemma}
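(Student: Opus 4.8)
The three parts will be handled in order, since (2) uses (1) and (3) uses (2), and each is an induction on the operational-semantics derivation it mentions. The uniform idea is: take the cost-semantics derivation of the \emph{post-step} term, invert it, recognize its sub-derivations as the premises of the cost rule matching the \emph{redex} that was fired, and re-assemble that rule after gluing one fresh vertex $u$ --- the vertex that performs the step --- onto the head of the active thread; every C-rule introduces exactly one fresh vertex per redex, which is why one operational step is worth one DAG vertex. Part (1) is then a mechanical induction on $e\estep{\sig}e'$: for a search step such as \rulename{D-Let-Step} I would invert the matching C-rule on the evaluation of $e'$, apply the IH to get the sub-expression's thread prefixed by $u$, re-apply the C-rule, and re-associate $\tscomp{}$; for a head step (\rulename{D-App}, the two \rulename{D-Ifz} rules, the two \rulename{D-Case} rules, \rulename{D-Fst}/\rulename{D-Snd}, \rulename{D-Pair}, \rulename{D-InL}/\rulename{D-InR}, \rulename{D-PrApp}, \rulename{D-Fix}, \rulename{D-Output}, \rulename{D-Input}, \rulename{D-Let} on a value) I would wrap the evaluation of the reduct in the matching C-rule, with \rulename{C-Val} supplying $\ethread$ where needed so the prepended thread is the single vertex $u$.

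For part (2) I would induct on $\cmd\lstep{\sig}{\act}\rpconfig{\sig'}{\cmd'}{\mem'}$, first splitting the assumed $\graph$ via \rulename{CT-Thread}/\rulename{CT-Concat} into the graph of $\cmd'$ run as $a$ and the graphs of the step-spawned threads $\mem'$. For \rulename{D-Bind1}, \rulename{D-Sync1}, \rulename{D-Ret} (step inside a subexpression, $\mem'=\emem$) I invert \rulename{C-Bind}/\rulename{C-Sync}/\rulename{C-Ret}, extend the subexpression's thread by $u$ via part (1), re-apply the rule, and re-associate $\tscomp{}$ to obtain $\graph_0=\sgraph{u}\scomp{a}\graph=\sgraph{u}\gscomp{a}\graph$ (the last equality because there is no new thread for $\gscomp{}$ to attach to). For \rulename{D-Bind3} I build \rulename{C-Bind} directly; for \rulename{D-Spawn} I build \rulename{C-Spawn} directly, observing that the only source of $\graph$ other than thread $a$ introduced by this step is the root of $b$, so the spawn edge added by \rulename{C-Spawn} is exactly the one $\gscomp{a}$ adds. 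The recursive case \rulename{D-Bind2} applies the IH to the inner step against the graph rebuilt from \rulename{C-Bind}'s premises and re-applies \rulename{C-Bind}; to hit the required form I would first prove $\sgraph{u}\gscomp{a}(\graph_1\scomp{a}\sgraph{u'}\scomp{a}\graph_2)\cong(\sgraph{u}\gscomp{a}\graph_1)\scomp{a}\sgraph{u'}\scomp{a}\graph_2$, which holds because appending along thread $a$ neither creates nor destroys sources. This $\scomp{a}/\gscomp{a}/\uplus$ bookkeeping is the bulk of part (2).

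Part (3) reduces to part (2). I would put $\mem$ in the normal form of \thmref{safety}, $\mem\meq\mbconfig{\sig''}{\mem_0\mcp\dthread{a_1}{\prio_1}{\cmd_1}\mcp\dots\mcp\dthread{a_n}{\prio_n}{\cmd_n}}$ with $\mem_0$ the non-stepping threads; invert \rulename{DT-Par} (and \rulename{DT-Thread}) to extract a silent command step $\cmd_i\lstep{\esig}{\asil}\rpconfig{\sig_i}{\cmd_i'}{\mem_i^{\mathrm{new}}}$ for each $i$, with $\mem'$ in the matching normal form; decompose $\graph'$ along \rulename{CT-Concat} as $\graph_0\uplus\graph_1\uplus\dots\uplus\graph_n$ where $\graph_i$ is the cost graph of $\dthread{a_i}{\prio_i}{\cmd_i'}\uplus\mem_i^{\mathrm{new}}$; apply part (2) to each $i$ to obtain an evaluation of $\cmd_i$ whose graph is isomorphic to $\sgraph{u_i}\gscomp{a_i}\graph_i$; and re-assemble with \rulename{CT-Thread}, \rulename{CT-Concat} and the congruence rules, invoking \lemref{unused-tsig} to discard thread-record entries for step-spawned threads that $\mem$ does not yet know about.

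I expect the synchronization case to be the main obstacle. A silent thread-pool step can be a \rulename{DT-Sync}, in which the stepping thread $a_i$ ran $\kwbsync{\kwtid{b}}$ before the step but only $\kwret{v}$ afterwards, so its post-step cost graph $\graph_i=\tgraph{a_i}{\prio_i}{\ethread}$ carries no join edge, whereas \rulename{C-Sync} on the pre-step command emits $\dagq{\cthread{a_i}{\prio_i}{u_i}}{\emptyset}{\{(b,u_i)\}}{\emptyset}$ --- so $\sgraph{u_i}\gscomp{a_i}\graph_i$ is literally missing the edge $(b,u_i)$. The fix must reinstate that edge from data that survives the step, namely the target $b$ and value $v$ recorded in the thread record $\tsig$ together with the action label $\async{b}{v}$; concretely I would either extend $\gscomp{a_i}$ so that on a sync step it re-adds the incoming join edge recorded for $a_i$, or split part (3) into a sync-free parallel step plus an explicit \rulename{DT-Sync} case. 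A secondary, routine hazard is the graph algebra itself, which I would tame up front with a lemma characterizing $\gscomp{a}\graph$ via the \emph{sources} of $\graph$ (matching the informal definition) rather than its literal thread decomposition.
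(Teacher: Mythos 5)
Your plan is essentially the paper's proof: part (1) by induction on $e \estep{\sig} e'$, inverting the matching C-rule and prefixing one fresh vertex; part (2) by induction on the command step, using part (1) for the \rulename{D-Bind1}/\rulename{D-Sync1}/\rulename{D-Ret} cases, building \rulename{C-Spawn}/\rulename{C-Bind} directly for the head cases, and invoking Theorem~\ref{thm:pres} and \lemref{unused-tsig} in the recursive \rulename{D-Bind2} case; part (3) by normalizing via \rulename{DT-Par} inversion, decomposing $\graph'$ along \rulename{CT-Concat}, applying part (2) per stepping thread, and reassembling. On the one point where you deviate --- the sync case --- the paper does not extend $\gscomp{a}$ or split out \rulename{DT-Sync}: part (2) is stated for arbitrary actions $\act$, so \rulename{D-Sync2} is handled there, and the resulting discrepancy (the pre-step graph carries the join edge $(b,u)$ while $\sgraph{u} \gscomp{a} \graph$ does not) is absorbed into the claimed isomorphism by the convention that a join edge whose source thread is empty is vacuous, a convention the paper already uses explicitly in the \rulename{DT-Sync} case of \lemref{ready-threads} (``$b$ is empty, so we may ignore the edge $(b,u)$''). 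So your flagged obstacle is real but needs no change to the definitions; your fallback of characterizing $\gscomp{a}$ via sources is consistent with how the paper treats it.
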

\ifproofs
\begin{proof}
  \begin{enumerate}
  \item By induction on the derivation of $e \estep{\sig} e'$.
    \begin{itemize}
    \item \rulename{D-Let-Step}. Then~$e = \kwlet{x}{e_1}{e_2}$ and
      $e' = \kwlet{x}{e_1'}{e_2}$ and $e_1 \estep{\sig} e_1'$. By
      inversion on \rulename{C-Let},
      $\ceval[\dassn]{e_1'}{v_1}{\uthread_1}$ and
      $\ceval[\dassn]{e'}{v}{\uthread_1 \tscomp{} u \tscomp{} \uthread_2}$.
      By induction, $\ceval[\dassn]{e_1}{v_1}{u' \tscomp{} \uthread_1}$.
      Apply \rulename{C-Let}.
    \item \rulename{D-Let}. Then~$e = \kwlet{x}{v_1}{e_2}$ and~$e' = [v_1/x]e_2$.
      Apply \rulename{C-Let}.
    \item \rulename{D-Ifz-NZ}. Then $e = \kwifz{\kwnumeral{n+1}}{e_1}{x}{e_2}$
      and $e' = [\kwnumeral{n}/x]e_2$.
      Apply \rulename{D-Ifz-NZ}.
    \item \rulename{D-Ifz-Z}. Then $e = \kwifz{\kwnumeral{0}}{e_1}{x}{e_2}$
      and $e' = e_1$.
      Apply \rulename{D-Ifz-Z}.
    \item \rulename{D-App}. Then $e = \kwapply{(\kwfun{x}{e_1})}{v}$ and
      $e' = [v/x]e_1$. Apply \rulename{C-App}.
    \item \rulename{D-Pair}. Then $e = \kwepair{v_1}{v_2}$ and
      $e' = \kwpair{v_1}{v_2}$. By inversion on the cost rules, we have
      $\uthread = \ethread$. Apply rule \rulename{C-Pair}.
    \item \rulename{D-Fst}. Then $e = \kwfst{\kwpair{v_1}{v_2}}$ and $e' = v_1$.
      By inversion on the cost rules, we have
      $\uthread = \ethread$. Apply rule \rulename{C-Fst}.
    \item \rulename{D-Snd}. Then $e = \kwfst{\kwpair{v_1}{v_2}}$ and $e' = v_2$.
      By inversion on the cost rules, we have
      $\uthread = \ethread$. Apply rule \rulename{C-Snd}.
    \item \rulename{D-InL}. Then $e = \kweinl{v}$ and $e' = \kwinl{v}$.
      By inversion on the cost rules, we have
      $\uthread = \ethread$. Apply rule \rulename{C-InL}.
    \item \rulename{D-InR}. Then $e = \kweinr{v}$ and $e' = \kwinr{v}$.
      By inversion on the cost rules, we have
      $\uthread = \ethread$. Apply rule \rulename{C-InR}.
    \item \rulename{D-Case-L}. Then $e = \kwcase{\kwinl{v}}{x}{e_1}{y}{e_2}$
      and $e' = [v/x]e_1$. Apply \rulename{C-Case-L}.
    \item \rulename{D-Case-R}. Then $e = \kwcase{\kwinr{v}}{x}{e_1}{y}{e_2}$
      and $e' = [v/x]e_2$. Apply \rulename{C-Case-R}.
    \item \rulename{D-Output}. Then $e = \kwoutput{\kwn}$ and $e' = \kwtriv$.
      By inversion on the cost rules, we have
      $\uthread = \ethread$. Apply rule \rulename{C-Output}.
    \item \rulename{D-Input}. Then $e = \kwinput$ and $e' = \kwnumeral{n}$.
      By inversion on the cost rules, we have
      $\uthread = \ethread$. Apply rule \rulename{C-Input}.
    \item \rulename{D-PrApp}.
      Then $e = \kwwapp{(\kwwlam{\vprio}{\cons}{e_1})}{\prio'}$ and
      $e' = [\prio'/\vprio]e_1$. Apply rule \rulename{C-PrApp}.
    \item \rulename{D-Fix}.
      Then $e = \kwfix{x}{\tau}{e}$ and $e' = [\kwfix{x}{\tau}{e}/x] e$.
      Apply \rulename{C-Fix}.
    \end{itemize}
  \item By induction on the derivation of
    $\cmd \lstep{\sig}{\act} \rpconfig{\sig'}{\cmd'}{\mem'}$.
    \begin{itemize}
    \item[Case]
      \[
      \infer[D-Bind1]
{
  e \estep{\sig} e'
}
{
  \kwbind{e}{x}{\cmd}
  \lstep{\sig}{\asil}
  \rpconfig{\esig}{\kwbind{e'}{x}{\cmd}}{\emem}
}
            \]
      \[
      \begin{plines}
        \have{\ceval[\dassn]{e'}{\kwcmd{\prio'}{\cmd_1}}{\uthread_1}}
        \cjust{inversion on \rulename{C-Bind}}\\
        \have{\cceval{\tsig}{\sig}{a}{\prio}{\cmd_1}{v_1}{\graph_1}
          {\tsig_1}{\sig_1}}
        \cjust{inversion on \rulename{C-Bind}}\\
        \have{\cceval{\tsig_1}{\sig_1}{a}{\prio}{[v_1/x]\cmd}{v}
          {\graph_2}{\tsig''}{\sig''}}
        \cjust{inversion on \rulename{C-Bind}}\\
        \have{\mceval[\dassn]{\tsig}{\sig}
          {\mthread[\delay]{a}{\prio}{\kwbind{e'}{x}{\cmd}}}{\tsig''}
          {\sgraph{\uthread_1}
          \scomp{a} \graph_1 \scomp{a} \sgraph{u} \scomp{a} \graph_2}}
        \ljust{inversion on \rulename{CT-Concat}, \rulename{CT-Thread},
          \rulename{C-Bind}}\\
        \have{\ceval[\dassn]{e}{\kwcmd{\prio'}{\cmd_1}}
          {u' \tscomp{} \uthread_1}}
        \cjust{part 1}\\
        \have{\cceval{\tsig}{\sig}{a}{\prio}{\kwbind{e}{x}{\cmd}}{v}
          {\sgraph{u' \tscomp{\delay + 1} \uthread_1}
            \scomp{a} \graph_1 \scomp{a} \sgraph{u} \scomp{a} \graph_2}
          {\tsig''}{\sig''}}
        \cjust{\rulename{C-Bind}}\\
        \have{\mceval[\dassn]{\tsig}{\sig}
          {\mthread{a}{\prio}{\kwbind{e}{x}{\cmd}}}{\tsig''}
          {\sgraph{u'} \tscomp{} \graph}}
        \cjust{\rulename{CT-Concat}}
      \end{plines}
      \]
    \item[Case]
      \[\infer[D-Bind2]
{
  \cmd_1 \lstep{\sig}{\act} \rpconfig{\sig'}{\cmd_1'}{\mem'}
}
{
  \kwbind{\kwcmd{\prio}{\cmd_1}}{x}{\cmd_2}
  \lstep{\sig}{\act}
  \rpconfig{\sig'}{\kwbind{\kwcmd{\prio}{\cmd_1'}}{x}{\cmd_2}}{\mem'}
}
\]
      \[
      \begin{plines}
        \have{\cceval{\tsig}{\sig}{a}{\prio}{\cmd_1'}{v_1}{\graph_1}
          {\tsig_1}{\sig_1}}
        \cjust{inversion on \rulename{C-Bind}}\\
        \have{\cceval{\tsig_1}{\sig_1}{a}{\prio}{[v_1/x]\cmd}
          {v}{\graph_2}{\tsig''}{\sig''}}
        \cjust{inversion on \rulename{C-Bind}}\\
        \have{\mceval[\dassn]{\tsig,\tsigent{a}{v_1}{\sig_1}}{\sig}
          {\mthread{a}{\prio}{\kwbind{\cmd_1'}{x}{\cmd_2}}}{\tsig''}
          {(\graph_1 \scomp{a} \sgraph{u} \scomp{a} \graph_2)}}
        \ljust{inversion on \rulename{CT-Concat}, \rulename{CT-Thread},
          \rulename{C-Bind}}\\
        \have{\mceval[\dassn]{\tsig,\tsigent{a}{v_1}{\sig_1}}{\sig}{\mem'}
          {\tsig_3}{\graph_3}}
        \cjust{inversion on \rulename{CT-Concat}}\\
        \have{\mceval[\dassn]{\tsig}{\sig}{\mem'}{\tsig_3}{\graph_3}}
        \cjust{inversion on \rulename{DT-Thread}, Theorem~\ref{thm:pres},
          Lemma~\ref{lem:unused-tsig}}\\
        \have{\mceval[\dassn]{\tsig}{\mthread{a}{\prio}{\cmd_1'}
            \mcp \mem'}{\tsig_1,\tsig_3}{\graph_1 \uplus \graph_3}}
        \cjust{\rulename{CT-Concat}}\\
        \have{\cceval{\tsig}{\sig}{a}{\prio}{\cmd_1}
          {v_1}{u' \gscomp{a} \graph_1}{\tsig_1,\tsig_3}{\sig_1,\sig_3}}
        \cjust{induction}\\
        \have{\cceval{\tsig}{\sig}{a}{\prio}{\kwbind{\cmd_1}{x}{\cmd_2}}
          {v}{\sgraph{u'} \gscomp{a}
            (\graph_1 \scomp{a} \sgraph{u} \scomp{a} \graph_2)}
          {\tsig'',\tsig_3}{\sig'',\sig_3}}
        \ljust{\rulename{C-Val}, \rulename{C-Bind}}\\
        \have{\mceval[\dassn]{\tsig,\tsigent{a}{v_1}{\sig_1}}{\sig}
          {\mthread{a}{\prio}{\kwbind{\cmd_1}{x}{\cmd_2}}}
          {\tsig'',\tsig_3}
          {\sgraph{u'} \gscomp{a} \graph}}
        \cjust{\rulename{CT-Concat}}
      \end{plines}
      \]
    \item[Case]
      \[\infer[D-Bind3]
{
  e \val{\sig}
}
{
  \kwbind{\kwcmd{\prio}{\kwret{e}}}{x}{\cmd}
  \lstep{\sig}{\asil}
  \rpconfig{\esig}{[e/x]\cmd}{\emem}
}
      \]
      \[
      \begin{plines}
        \have{\cceval{\tsig}{\sig}{a}{\prio}{[e/x]\cmd}{v}{\graph}
          {\tsig'}{\sig'}}
        \just{inversion on~\rulename{CT-Thread}}\\
        \have{\cceval{\tsig}{\sig}{a}{\prio}{[e/x]\cmd}{v}
          {\sgraph{u} \gscomp{a} \graph}{\tsig'}{\sig'}}
        \just{\rulename{C-Val}, \rulename{C-Ret}, \rulename{C-Bind}}
      \end{plines}
      \]
    \item[Case]
      \[
      \infer[D-Spawn]
{
  b \fresh
}
{
  \kwbspawn{\prio}{\tau}{\cmd}
  \lstep{\sig}{\asil}
  \rpconfig{\sigtype{b}{\tau}{\prio}}{\kwret{\kwtid{b}}}{\dthread{b}{\prio}{\cmd}}
}
      \]
      \[
      \begin{plines}
        \have{\cceval{\tsig}{\sig}{b}{\prio}{\cmd}{v'}{\graph}{\tsig''}{\sig,\sig''}}
        \cjust{inversion on \rulename{CT-Concat}, \rulename{CT-Thread},
          \rulename{C-Ret}}\\
        \have{\mceval[\dassn]{\tsig}{\sig}
          {\mthread{a}{\prio'}{\kwbspawn{\prio}{\tau}{\cmd}}}
          {\tsig'',\tsigent{b}{v'}{\sig''}}
          {\sgraph{u} \gscomp{a} \graph}}
        \ljust{\rulename{C-Spawn}}
      \end{plines}
      \]
    \item[Case]
      \[\infer[D-Sync1]
{
  e \estep{\sig} e'
}
{
  \kwbsync{e}
  \lstep{\sig}{\asil}
  \rpconfig{\esig}{\kwbsync{e'}}{\emem}
}
      \]
      \[
      \begin{plines}
        \have{\graph = \dagq{\cthread{a}{\prio}{\uthread \tscomp{} u}}
          {\emptyset}{\{(b, u)\}}{\emptyset},
          \ceval[\dassn]{e'}{\kwtid{b}}{\uthread}}
        \just{inversion on \rulename{C-Sync}}\\
        \have{\ceval[\dassn]{e}{\kwtid{b}}{u' \tscomp{\delay + 1} \uthread}}
        \just{part 1}\\
        \have{\cceval{\tsig,\tsigent{b}{v}{\sig_b}}{\sig}{a}{\prio}
          {\kwbsync{e}\\ &}{v}
          {\dagq{\cthread{a}{\prio}{u' \tscomp{} \uthread \tscomp{} u}}
            {\emptyset}{\{(b, u)\}}{\emptyset}}{\tsig'',\tsigent{b}{v}{\sig_b}}{\sig''}}
        \just{\rulename{C-Sync}}
      \end{plines}
      \]

    \item[Case]
      \[\infer[D-Sync2]
{
  v \val{\sig}%\\
  %\ple{\prio}{\prio'}
}
{
  \kwbsync{(\kwtid{b})}
  \lstep[\prio]{\sig}{\async{b}{v}}
  \rpconfig{\esig}{\kwret{v}}{\emem}
}
      \]
      \[
      \begin{plines}
        \have{\graph = \egraph}
        \cjust{inversion on \rulename{C-Ret}}\\
        \have{\cceval{\tsig, \tsigent{b}{v}{\sig_b}}{\sig}{a}{\prio}
          {\kwbsync{(\kwtid{b})}}{v}
          {\dagq{\cthread{a}{\prio}{\sthread{u}}}{\emptyset}
            {\{(b, u)\}}{\emptyset}}
          {\tsig''}{\sig''}}
        \cjust{\rulename{C-Sync}}
      \end{plines}
      \]
    \item[Case]
      \[
      \infer[D-Ret]
{
  e \estep{\sig} e'
}
{
  \kwret{e}
  \lstep{\sig}{\asil}
  \rpconfig{\esig}{\kwret{e'}}{\emem}
}
       \]
       \[
       \begin{plines}
         \have{\graph =
           \dagq{\cthread{a}{\prio}{\uthread}}{\emptyset}{\emptyset}
                {\emptyset}, \ceval[\dassn]{e'}{v}{\uthread}}
         \cjust{inversion on the cost semantics}\\
         \have{\ceval[\dassn]{e}{v}{u \tscomp{\delay + 1} \uthread}}
         \cjust{part 1}\\
         \have{\cceval{\tsig}{\sig}{a}{\prio}{\kwret{e}}{v}
           {\dagq{\cthread{a}{\prio}{u \tscomp{}\uthread}}{\emptyset}
             {\emptyset}{\emptyset}}{\tsig''}{\sig''}}
         \cjust{\rulename{C-Ret}}
       \end{plines}
       \]
    \end{itemize}

  \item
\[
\begin{plines}
  \have{\mem' \meq \mbconfig{\sig}{\mem_0 \mcp \dthread{a_1}{\prio_1}{\cmd_1'}
      \mcp \mem_1' \mcp \dots \mcp \dthread{a_n}{\prio_n}{\cmd_n'} \mcp \mem_n'}}
  \ljust{inversion on \rulename{DT-Par}.}\\
  \have{\graph' = \graph_0 \uplus \graph_1' \uplus \dots \uplus \graph_n',
        \forall i. \exists \tsig_i, \sig_i. \mceval[\dassn]{\tsig_i}{\sig_i}
                {\mthread[\delay_i]{a_i}{\prio_i}{\cmd_i'} \mcp \mem_i'}
                {\tsig_i'}
                {\graph_i'}}
  \ljust{inversion on the cost semantics}\\
  \have{\mthread[0]{a_i}{\prio_i}{\cmd_i} \lstep{\sig}{\act_i}
        \rpconfig{\sig_i'}{\cmd_i'}{\mem_i'}}
  \cjust{inspection of transition rules}\\
  \have{\cceval{\tsig_i}{\sig_i}{a_i}{\prio_i}{\cmd_i}{v}{\graph_i}
        {\tsig_i'}{\sig_i'}, \graph_i~\text{is isomorphic to}~
        \sgraph{u_i} \gscomp{a} \graph_i'}
      \cjust{part 2}\\
      \have{\mceval[\dassn]{\tsig}{\sig}{\mem}{\tsig'}{\graph},
        \graph~\text{is isomorphic to}\\ &~
        \graph_0 \uplus (\sgraph{u_1} \gscomp{a_1} \graph_1) \uplus \dots \uplus
        (\sgraph{u_n} \gscomp{a_n} \graph_n)}
      \cjust{\rulename{C-Concat}}
    \end{plines}
    \]

  \end{enumerate}
\end{proof}
\else
See the technical report~\citep{partial-prio-tr} for the proof, which is by induction on the
transition derivation.
\fi

We can now repeatedly apply the above results to show a step-by-step
correspondence between arbitrary executions of {\calcname} programs and
schedules of the corresponding DAG.
To be more precise, we show that, for any execution of a program, there exists
a cost graph~$\graph$ corresponding to the program, and a schedule of~$\graph$
that corresponds to the execution.
If the threads at each parallel transition are chosen in a ``prompt''
manner by stepping as many threads as possible and prioritizing high-priority
threads, then the corresponding schedule is prompt.
Specifying how to pick the threads in a parallel transition is out of the
scope of this paper, though we briefly discuss an appropriate scheduling
algorithm at an implementation level in Section~\ref{sec:impl}.

\begin{lemma}\label{lem:schedule}
  Suppose $\mtyped{\worlds}{\esig}{\mem}$ and
  $\mem \pgstep{}^* \mem'$ where
  $\mceval[\dassn]{\esig}{\esig}{\mem'}{\esig}{\emptyset}$
  and thread~$a$ is active for~$T$
  transitions
  and at each transition, threads are chosen in a prompt
  manner. Then $\mceval[\dassn]{\esig}{\esig}{\mem}{\esig}{\graph}$ and
  there exists a prompt schedule of~$\graph$ in
  which~$\resptimeof{a} = T$.
\end{lemma}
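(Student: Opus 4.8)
The plan is to induct on the length of the transition sequence $\mem \pgstep{}^* \mem'$, reconstructing the cost graph and a schedule of it from back to front, one parallel transition at a time. In the base case the sequence is empty, $\mem = \mem'$ produces the empty graph, and the empty schedule works: thread~$a$ is active for $T = 0$ transitions and has response time~$0$.

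For the inductive step I would write the sequence as $\mem \pgstep{\acts} \mem'' \pgstep{}^* \mem'$ with $\acts$ the promptly chosen set of threads. By \thmref{pres} the pool $\mem''$ is well-typed, so the inductive hypothesis yields a graph $\graph''$ produced by $\mem''$ and a prompt schedule $\Sigma''$ of it in which the response time of $a$ equals the number of transitions of $\mem'' \pgstep{}^* \mem'$ during which $a$ is active. Then \lemref{step-cost}(3) decomposes $\graph''$ as $\graph_0 \uplus \graph_1 \uplus \cdots \uplus \graph_n$ and produces from $\mem$ a graph $\graph$ isomorphic to $\graph_0 \uplus (\sgraph{u_1} \gscomp{a_1} \graph_1) \uplus \cdots \uplus (\sgraph{u_n} \gscomp{a_n} \graph_n)$, where $\acts = \{a_1,\dots,a_n\}$ --- that is, $\graph$ is $\graph''$ with a single fresh source vertex $u_i$ prepended at the front of each thread $a_i$ that stepped (plus spawn edges to any thread that step created).

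Next I would let $\Sigma$ be the schedule of $\graph$ that executes exactly $\{u_1,\dots,u_n\}$ at step~$1$ and follows $\Sigma''$ afterward, and check two things. Validity: each $u_i$ is a source of $\graph$, hence ready at step~$1$, and removing $\{u_1,\dots,u_n\}$ leaves a graph isomorphic to $\graph''$, so $\Sigma''$ still applies. Promptness: by \lemref{ready-threads} and \lemref{global-ready-threads} the threads active in $\mem$ are exactly those whose first vertex is ready in $\graph$, so the promptly chosen $\acts$ --- a maximal set, up to the processor bound~$P$, of active threads leaving no unchosen active thread of strictly higher priority --- corresponds to a prompt execution of $\{u_1,\dots,u_n\}$ at step~$1$, and the remaining steps inherit promptness from $\Sigma''$ via the isomorphism.

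It then remains to track the count. Thread~$a$ is active on a contiguous block of transitions, from the transition just after it is spawned (or from the first transition if $a$ is the main thread) through the transition in which it takes its last silent step and reduces to $\kwret{v}$; in the corresponding schedule this block begins one step after $a$'s first vertex becomes ready and ends when $a$'s last vertex is executed, so the response time of $a$ in $\Sigma''$ is exactly the active-transition count for the shorter sequence, and prepending step~$1$ raises both $\resptimeof{a}$ and the active-transition count by one precisely when $a$ is active in $\mem$. Summing over the induction gives $\resptimeof{a} = T$. I expect the two subtle points to be (i) confirming that a prompt choice of threads in the operational semantics coincides with a prompt step of the schedule, which rests on the active-thread/ready-vertex correspondence above together with the ``one fresh source vertex per stepped thread'' shape of the transition from \lemref{step-cost}(3), and (ii) the boundary bookkeeping for the transitions in which $a$ is spawned (or is the main thread) and in which it returns. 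With such a schedule established, \thmref{gen-brent} then delivers the response-time bound itself.
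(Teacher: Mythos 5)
Your proposal is correct and follows essentially the same route as the paper's proof: induction on the transition sequence, using Lemma~\ref{lem:step-cost}(3) to recover $\graph$ from $\graph''$ by prepending one source vertex per stepped thread, Lemma~\ref{lem:global-ready-threads} for validity and promptness of the prepended step, and the same $T$ versus $T+1$ case split on whether $a$ is active in $\mem$. The extra boundary bookkeeping you flag is handled in the paper by the same observation you make, so no further argument is needed.
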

\begin{proof}
  By induction on the derivation of~$\mem \pgstep{}^* \mem'$.
  If~$\mem=\mem'$, then the result is clear.
  Suppose $\mem \pgstep{\{a_1, \dots, a_n\}} \mem''
  \pgstep{}^* \mem'$, and~$a$ is active for~$T$ transitions of the latter
  execution.
  By induction, $\mceval[\dassn]{\esig}{\esig}{\mem''}{\esig}{\graph''}$ and there exists a
  prompt schedule of~$\graph''$ where~$\resptimeof{a} = T$.
  By Lemma~\ref{lem:step-cost}, $\graph''$ is isomorphic to
  $\graph_0 \uplus \graph_1' \dots \graph_n'$ and
  $\mceval[\dassn]{\esig}{\esig}{\mem}{\esig}{\graph}$, where
  $\graph$
  is isomorphic to
  $\graph_0 \uplus (\sgraph{u_1} \gscomp{a_1} \graph_1') \uplus
  \dots \uplus (\sgraph{u_n} \gscomp{a_n} \graph_n')$.
  By Lemma~\ref{lem:global-ready-threads}, these threads are ready
  in~$\graph$, so the schedule that executes~$u_1, \dots, u_n$ in step 1 and
  then follows the schedule of~$\graph''$ is a valid schedule
  of~$\graph$. Because (also by Lemma~\ref{lem:global-ready-threads}), all
  threads that are ready in~$\graph$ are available to be executed and (by
  inspection of the cost semantics) thread priorities are preserved
  between~$\graph$ and~$\mem$, the
  schedule is also a prompt schedule of~$\graph$.
  If~$a \in \dom{\mem}$, then by
  Lemma~\ref{lem:global-ready-threads},
  $a$ is ready in~$\graph$ and the resulting schedule has
  $\resptimeof{a} = T+1$. Otherwise, the resulting schedule has
  $\resptimeof{a} = T$.
\end{proof}

Finally, we conclude by applying Theorem~\ref{thm:gen-brent}
to bound the response time of prompt schedules, and therefore of the
corresponding executions of the operational semantics.

\begin{thm}\label{thm:cost-bound}
  If $\cmdtyped{\esig}{\ectx}{\cmd}{\tau}{\prio}$ and
  $\dthread{a}{\prio}{\cmd} \pgstep{}^* \mem'$, where
  $\mceval[\dassn]{\esig}{\esig}{\mem'}{\esig}{\egraph}$
  and thread~$a$ is active for~$T$
  transitions and at each transition, threads are chosen in a prompt
  manner, then there exists a graph~$\graph$ such that
  $\cceval{\esig}{\esig}{a}{\prio}{\cmd}{v}{\graph}{\tsig}{\sig}$
  and  % for any~$\ple{\prio'}{\prio}$,
  \[
  E[T] \leq %\frac{1}{\fc(\psnle{\prio'})}
  \frac{\prioworkof{\compwork{}{a}}{\psnle{\prio}}}{P} +
      \longp{\compwork{}{a}}{a}
      \]
\end{thm}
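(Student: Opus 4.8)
The plan is to read the result off the infrastructure already assembled, so that only light bookkeeping remains. First I would note that $\dthread{a}{\prio}{\cmd}$ is a well-typed thread pool: rule \rulename{OneThread} applied to $\cmdtyped{\esig}{\ectx}{\cmd}{\tau}{\prio}$ gives $\mtyped{\worlds}{\esig}{\dthread{a}{\prio}{\cmd}}{\sigtype{a}{\tau}{\prio}}$. I can then apply \lemref{schedule} to the execution $\dthread{a}{\prio}{\cmd} \pgstep{}^* \mem'$: because $\mem'$ produces the empty graph and thread~$a$ is active for $T$ transitions, each chosen in a prompt manner, the lemma produces a cost graph $\graph$ together with a prompt schedule of $\graph$ on $P$ processors in which $\resptimeof{a} = T$. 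Inverting the thread-pool cost judgment through \rulename{CT-Thread} on this single thread — and dropping the redundant self-entry in the thread record by \lemref{unused-tsig} — yields the command-level judgment $\cceval{\esig}{\esig}{a}{\prio}{\cmd}{v}{\graph}{\tsig}{\sig}$ asserted in the statement, for that same $\graph$.

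Next I would verify the two hypotheses of \thmref{gen-brent}. Well-formedness of $\graph$ is exactly the corollary to \lemref{typed-wf} (a well-typed command evaluated under the empty signature and thread record produces a strongly well-formed, hence by \lemref{wf-alt} well-formed, cost graph), applied to the judgment $\cceval{\esig}{\esig}{a}{\prio}{\cmd}{v}{\graph}{\tsig}{\sig}$ just obtained. That $a$ is actually a thread of $\graph$, namely $\cthread{a}{\prio}{\uthread} \in \graph$, follows from conclusion~(1) of \lemref{typed-wf}, which describes $\graph$ as having the form $\dagq{\cthread{a}{\prio}{\uthread} \uplus \dthreads}{\spawns}{\syncs}{\polls}$.

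Finally I would invoke \thmref{gen-brent} on the well-formed DAG $\graph$, the thread~$a$, and the prompt schedule from \lemref{schedule}, obtaining $\resptimeof{a} \leq \prioworkof{\compwork{}{a}}{\psnle{\prio}}/P + \longp{\compwork{}{a}}{a}$. Since that schedule realizes $\resptimeof{a} = T$, this is exactly the desired bound on $T$. Moreover the right-hand side is fixed once $\graph$ is — it does not depend on how a prompt scheduler breaks ties between incomparable priorities — so the inequality holds for every prompt run, and hence also for $E[T]$ taken over those tie-breaking choices; each concrete resolution of the nondeterministic input operations merely fixes one particular $\graph$ to which the same reasoning applies. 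I do not anticipate a genuine obstacle here: the substantive work is already discharged by \lemref{schedule} (the execution/schedule correspondence) and by the well-formedness corollary of \lemref{typed-wf}; the only points needing a moment of care are identifying $T$ with $\resptimeof{a}$ and passing between the thread-pool- and command-level cost judgments for a single thread.
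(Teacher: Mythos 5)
Your proposal is correct and follows essentially the same route as the paper's own proof: invoke Lemma~\ref{lem:schedule} to obtain the cost graph and a prompt schedule realizing $\resptimeof{a}=T$, establish well-formedness via Lemma~\ref{lem:typed-wf} (with Lemma~\ref{lem:wf-alt}), and conclude by Theorem~\ref{thm:gen-brent}. The extra bookkeeping you supply — the \rulename{OneThread}/\rulename{CT-Thread} passage between thread-pool and command judgments, and the remark on why the deterministic bound also bounds $E[T]$ — only makes explicit steps the paper leaves implicit.
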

\begin{proof}
  By Lemma~\ref{lem:schedule}, there exists such a~$\graph$ and a
  prompt schedule of~$\graph$
  where~$\resptimeof{a} = T$. By Lemma~\ref{lem:typed-wf},~$\graph$ is
  well-formed. Thus, the result follows from Theorem~\ref{thm:gen-brent}.
\end{proof}

%%% Local Variables:
%%% mode: latex
%%% TeX-master: "main"
%%% End:

%\input{elab}
\section{Starvation and Fairness}
Throughout this paper, we assume that higher-priority threads should
always be given priority over lower-priority ones.
This is the desired semantics in many applications, but not all:
sometimes, it is important to be {\em fair} and devote a certain
fraction of cycles to lower-priority work.
Fairness raises a number of interesting theoretical and practical
questions the full treatment of which are beyond the scope of this
paper.
We note, however, that fairness is largely orthogonal to our results
and it is not difficult to extend our results (e.g., those in
Section~\ref{sec:dag}) to devote a
fraction~$L$ of processor cycles to lower-priority work.
This simply inflates the response time bounds by a factor
of~$\frac{1}{1-L}$ to account for time not devoted to being prompt.
%
%% For notational simplicity, we have chosen not to include fairness in our
%% cost bounds in the body of the paper.
%% %
A discussion of cost bounds accounting for fairness can be found in
\ifappendix
Appendix~\ref{app:fair}.
\else
an appendix in the attached supplementary material.
\fi

\section{Implementation}\label{sec:impl}

We have developed a prototype implementation of {\langname}. Our
implementation compiles {\langname}
to~\texttt{mlton-parmem}~\citep{RMAB-mm-2016}, a parallel extension of
Standard ML which is derived from the work of~\citet{spoonhower:phd}.
We have also developed a parallel scheduler for {\langname} programs,
which plugs into the~\texttt{mlton-parmem} runtime.  The
implementation allows programmers to use almost all of the features of
Standard ML, including datatype declarations, higher-order functions,
pattern matching, and so on. While {\langname} itself does not have
a module system and expects all {\langname} code to be in one file
(a limitation we inherit from the compiler on whose
elaborator we build), our implementation is designed so that code may
freely interface with the Standard ML basis library
and SML modules defined elsewhere.

We will describe the two components of the implementation (compilation to
parallel ML and the scheduler) separately.

\subsection{Compilation to Parallel ML}
Our compiler modifies the parser and elaborator of ML5/pgh~\citep{murphy:phd},
which also extends Standard ML with modal constructs, although
for a quite different purpose. Elaboration converts the {\langname}
abstract syntax tree to a typed intermediate language, and type checks
the code in the process. At the same time, the elaborator collects the
priority and ordering declarations into a set of worlds and a set of
ordering constraints (raising a type error if inconsistent ordering
declarations ever cause a cycle in the ordering relation).

For our purposes, the elaboration pass is used only for
type checking. We generate the final ML code from the original AST (which is
closer to the surface syntax of ML), so as not to produce highly obfuscated
code. Before generating the code, the compiler passes over the
AST, converting {\langname} features into SML with the parallel extensions of
\code{mlton-parmem}. Priority names and variables are converted into ordinary
SML variables. Priority-polymorphic functions become ordinary functions, with
extra arguments for the priorities, and their instantiations become function
applications. Commands and instructions become SML expressions, with a sequence
of bound instructions becoming a let binding. Encapsulated commands become
thunks (so as to preserve the semantics that they are delayed). We compile
threads using Spoonhower's original implementation of parallel futures:
spawn commands spawn a future, and sync commands force the future.

The AST generated by the above process is then prefaced by a series of
declarations which register all of the priorities
and ordering constraints with the runtime, and bind the priority names
to the generated priorities. The compiler finally generates Standard ML code
from the AST, and passes it to~\code{mlton-parmem} for compilation to an
executable.

\subsection{Runtime and Scheduler}
The runtime for {\langname} is written in Standard ML as a scheduler for
\code{mlton-parmem}. As described above, before executing the program code,
{\langname} programs call into the runtime to register the necessary priorities
and orderings. The runtime then uses Warshall's transitive closure
algorithm to build the full partial order and
stores the result, so that checking the ordering on two priorities at runtime
is a constant-time operation. It then performs a topological sort on the
priorities to convert the partial order into a total order which is compatible
with all of the ordering constraints. Once this is complete, the program runs.

In our scheduling algorithm, each processor has a private
deque~\citep{AcarChRa13} of tasks for each priority, ordered by the total
order computed above. Each processor works on its highest-priority task (in
the total order, which guarantees it has no higher-priority task in the
partial order). A busy processor~$q_1$ will periodically preempt its
work and pick another ``target'' processor~$q_2$ at random. Processor~$q_1$
will send work to~$q_2$ at an arbitrarily chosen priority, if~$q_2$ has no
work at that priority. It will then start the process over by finding its
highest-priority task (which may have changed if another processor has sent it
work) and working on it.

\subsection{Examples}
We have implemented five sizable programs in {\langname}. These include the
email client of Section~\ref{sec:overview} and a bank example inspired
by an example used to justify partially-ordered
priorities~\citep{bms-formal-1993}.
We have also adapted the Fibonacci server, streaming music and
web server benchmarks of our prior work~\citep{MullerAcHa17}.
These originally used only two priorities; we generalized them with a more
complex priority structure, and implemented them in {\langname}.

\paragraph{Email Client}
We have implemented the ``email client'', portions of which appear in
Section~\ref{sec:overview}. The program parses emails stored locally,
and is able to sort them by sender, date or subject, as
requested by the user in an event loop at priority~\code{loop\_p} (which
currently just takes the commands at the terminal; we don't yet have a graphical
interface). The user can also issue commands to send an email (stored as a
file) or quit the program.

\paragraph{Bank Simulator}
\citet{bms-formal-1993}
give the example of a banking system that can perform operations {\em query},
{\em credit} and {\em debit}. To avoid the risk of spurious overdrafts, the
system prioritizes credit actions over debit actions, but does not restrict the
priority of query actions. We implement such a system, in which a foreground
loop (at a fourth priority, higher than all of the others), takes
query, credit and debit commands and spawns threads to perform the corresponding
operations on an array of ``accounts'' (stored as integer balances).
%The threads are prioritized using the three priorities described above.

\paragraph{Fibonacci Server}
The Fibonacci server runs a foreground loop at the highest
priority~\code{fg} which takes a number~$n$ from the user,
spawns a new thread to compute the~$n^{th}$ Fibonacci
number in parallel, adds the spawned thread to a list, and
repeats. The computation is run at one of three
priorities (in order of decreasing priority):
\code{smallfib}, \code{medfib} and \code{largefib},
depending on the size of the computation, so smaller computations will be
prioritized. When the user
indicates that entry is complete, the loop terminates, prints a message at
priority \code{alert} (which is higher than \code{smallfib} but incomparable
with \code{fg}), and returns the list of threads to the main thread, which
syncs with all of the running threads, waiting for the Fibonacci computations
to complete (these syncs can be done safely since the main thread runs at
the lowest priority \code{bot}).

\paragraph{Streaming Music}
We simulate a hastily-monetized music streaming service, with a server thread
that listens (at priority~\code{server\_p}) for network
connections from clients, who each request a music file.
For each client, the server spawns a new thread which loads the requested file
and streams the data over the network to the client. The priority of this thread
corresponds to the user's subscription (the free Standard service or the paid
Premium and Deluxe subscriptions). Standard is
lower-priority than both Premium and Deluxe. Due to boardroom in-fighting,
it was never decided whether Premium or Deluxe subscribers get a higher level
of service, and so while both are higher than Standard, the Premium and Deluxe
priorities are incomparable. Both are lower than~\code{server\_p}. This benchmark
is designed to test how the system handles multiple threads performing
interaction; apart from the
asynchronous threads handling requests, no parallel computation is performed.

\paragraph{Web Server}
Like the server of the music service, the web server listens for connections
in a loop at priority~\code{accept\_p} and spawns a thread (always at
priority~\code{serve\_p}) for each client to respond to HTTP requests. A
background thread (priority~\code{stat\_p}) periodically traverses the
request logs and analyzes them (currently, the
analysis consists of calculating the number of views per page, together with
a large Fibonacci computation to simulate a larger job). Both~\code{accept\_p}
and~\code{serve\_p} are higher-priority than~\code{stat\_p}, but the ordering
between them is unspecified.

\subsubsection{Evaluation}
While a performance evaluation is outside the scope of this paper, we have
completed a preliminary performance evaluation of the scheduler described above.
We have evaluated the performance of the web server benchmark described above,
as well as a number of smaller benchmarks which allow for more controlled
experiments and comparisons to prior work. In all cases, we have observed
good performance and scaling.
The web server, for example, scales easily to~50 processors and~50 concurrent
requests while keeping response times to~6.5 milliseconds.
%More performance results are available in
%\ifappendix
%Appendix~\ref{app:eval}.
%\else
%an appendix uploaded with the supplementary material.
%\fi

%\input{poll}
\section{Related Work}\label{sec:related}

%Multithreading, priorities, and modal type systems are large fields in
%their own right.
%
In this section, we review some of the most closely related papers
from fields such as multithreading and modal type systems, and
discuss their relationship with our work.

\paragraph{Multithreading and Priorities.}
Multithreaded programming goes back to the early days of computer science,
such as the work on Mesa~\cite{lr-mesa-1980}, Xerox's
STAR~\citep{Xerox-STAR-1982}, and Cedar~\citep{Cedar-1986}.
These systems allow the programmer to create (``fork'') threads, and
synchronize (``join'') with running threads.
The programmer can assign priorities, generally chosen from a fixed
set  of natural numbers (e.g.,~$7$ in Cedar), to threads,
allowing those that execute latency-sensitive computations to have a
greater share of resources such as the CPU.

Our notion of priorities is significantly richer than those considered
in prior work, because we allow the
programmer to create as many priorities as needed, and impose an
arbitrary partial order on them.
Several authors have observed that partial orders are more expressive
and more desirable for programming with priorities than total
orders~\cite{bms-formal-1993,fidge-formal-1993}.
There is little prior work on programming language support for
partially ordered priorities. The only one we know of is the occam
language, whose expressive power is limited, leaving the potential for
ambiguities in priorities~\citep{fidge-formal-1993}.

Some languages, such as Concurrent ML~\citep{cml99}, don't expose
priorities to the programmer, but give higher priority at runtime
to threads that perform certain (e.g. interactive) operations.

\paragraph{Priority Inversion.}
Priority inversion is a classic problem in
multithreading systems.
\citet{lr-mesa-1980} appear to be the first to observe it in their work
on Mesa.
Their original description of the problem uses three threads with
three different priorities, but the general problem can be restated
using just two threads at different
priorities (e.g. \citep{bms-formal-1993}).

%
%Since priority inversions can lead to undesirable program behavior,
%there has been much work on techniques for preventing them.
%
%The priority inheritance technique allows the priority of low-priority
%threads to be increased at run-time in order prevent higher priority
%threads from being delayed~\cite{lr-mesa-1980,srl-priority-1990}.
%
Babao{\u{g}}lu, Marzullo, and Schneider
%deem the complexity of the
%analysis of priority inheritance techniques to be a
%disadvantage~\cite{bms-formal-1993}.  They then
provide a formalization
of priority inversions and describe protocols for
preventing them in some settings, e.g. transactional systems.
%

%% Beyond its complexity, priority inheritance
%% can promote a low-priority thread, which can then
%% take a significant time to complete.
%% %
%% As a result, reasoning about the cost of a program becomes
%% challenging.
%% %
%% If instead, programmers are alerted statically to priority inversions,
%% they could fix them and guarantee the desired properties.
%% %
%% In this paper, this is the approach that we take.
%% %
%% Instead of run-time heuristics, we present a type system that rejects
%% programs with priority inversions, guaranteeing thus that type-safe
%% programs are ``well behaved.''

\paragraph{Parallel Computing.}
Although earlier work on multithreading was driven primarily by the
need to develop interactive systems~\citep{hauserjathwewe93}, multithreading
has also become an important paradigm for parallel computing. 
In principle, a multithreading system such as pthreads can be used to
perform parallel computations by creating a number of threads and
distributing the work of the computation among them.
This approach, sometimes called ``flat parallelism,'' has numerous
disadvantages and has therefore given way to a higher-level
approach, sometimes called ``implicit threading'', in which
the programmer indicates the computations that
can be performed in parallel using constructs such as ``fork'' and
``join''. The language runtime system creates and manages the
threads as needed.
In addition to the focus on throughput rather than responsiveness,
cooperative systems differ from competitive systems in that they typically
handle many more, lighter-weight threads.
The ideas of implicit and cooperative threading go back to early
parallel programming languages such as Id~\citep{id-78} and
Multilisp~\citep{halstead85}, and many languages and systems have
since been developed.

\paragraph{Cost Semantics}
Cost semantics, broadly used to reason about resource usage
\citep{Rosendahl89,Sands90}, have been
deployed in many domains. We build in particular on cost models that
use DAGs to reason about parallel
programs~\citep{BlellochGr95,BlellochGr96,SpoonhowerBlHaGi08}.
These models summarize the parallel structure of a computation in the cost
metrics of {\em work} and {\em span}, which can then be used to
bound computation time. While finding an optimal schedule of a DAG
is NP-hard~\citep{Ullman75}, \citet{Brent74} showed that a ``level-by-level''
schedule is within a factor of two of optimal. \citet{EagerZaLa89} extended
this result to all greedy schedules.

While these models have historically been applied to cooperatively
threaded programs, in recent work we have extended them to handle
latency-incurring operations~\citep{ma-latency-2016}, and presented a
DAG model which enables reasoning about responsiveness in addition to
computation time~\citep{MullerAcHa17}.  This prior work introduced the
idea of a prompt schedule, but considers only two priorities.  Our
cost semantics in this paper applies to programs with a partially
ordered set of priorities.

\paragraph{Modal and Placed Type Systems.}
A number of type systems have been based on various modal logics, many
of them deriving from the judgmental formulation of
\citet{pfenning+:judgmental}. While we did not strictly base our type system
on a particular logic, many of our ideas and notations are inspired by
S4 modal logic and prior type systems based on modal logics.
\citet{moody03} used a type system
based on S4 modal logic to model distributed computation, allowing
programs to refer to results obtained elsewhere (corresponding in the
logical interpretation to allowing proofs to refer to ``remote
hypotheses''). It is not made clear, however, what role the asymmetry of S4
plays in the logic or the computational interpretation. Later type systems
for distributed computation~\citep{JiaWa04, Murphy07} used an explicit worlds
formulation of S5, in which the ``possible worlds'' of the modal logic are
made explicit in typing judgment. Worlds are interpreted
as nodes in the distributed system, and an expression that is well-typed at
a world is a computation that may be run on that node. Both type systems also
include a ``hybrid'' connective~$A~\kw{at}~w$, expressing the truth of a
proposition~$A$ at a world~$w$. They interpret proofs of such a proposition
as encapsulated computations that may be sent
to~$w$ to be run. Our type system uses a form of both of these features;
priorities are explicit, and the
types~$\kwcmdt{\tau}{\prio}$ and~$\kwat{\tau}{\prio}$ assign
priorities to computations.
Unlike prior work, we give an interpretation to the asymmetry of
the accessibility relations of S4 modal logic, as a partial order of thread
priorities.

A different but related line of work concerns
type systems for staged computation, based on linear temporal logic (LTL)
(e.g.~\citep{davies96,faaf-2016}). In these systems, the ``next'' modality of
LTL is interpreted as a type of computations that may occur at the next stage
of computation. In prior work~\citep{MullerAcHa17} we adapted these ideas to a
type system for prioritized computation with two priorities:
background and foreground. In principle, a priority type system based on LTL
could be generalized to more than two priorities, but (because of the
``linear'' of LTL), such systems would be limited to totally ordered priorities.

Place-based systems~(e.g. \citep{Yelick98, x10-2005, Chandra08}),
like the modal type systems
for distributed computation, also interpret computation as located at a
particular ``place'' and use a type system to enforce locality of resource
access. These systems tend to be designed more for practical concerns rather
than correspondence with a logic.

\section{Conclusion}
We present techniques for writing parallel interactive programs where
threads can be assigned
partially ordered priorities.
A type system ensures proper usage of priorities by precluding
priority inversions and a cost model enables predicting the
responsiveness and completion time properties for programs.
We implement these techniques by extending the Standard ML language
and show a number of example programs.
Our experiments provide preliminary evidence that the proposed
techniques can be effective in practice.

%% %
%% It is our hope that this work can form
%% the basis of a new paradigm for writing multithreaded interactive programs.

%\iftr\else\begin{anonsuppress}\fi
\section*{Acknowledgements}
The authors would like to thank Frank Pfenning and Tom Murphy VII for their
helpful correspondence on related work.

This work was partially supported by the National Science Foundation under
grant number CCF-1629444.
%\iftr\else\end{anonsuppress}\fi

%\clearpage
%\bibliographystyle{plainnat}  % author-year citations, longer bibliography
%\bibliographystyle{ACM-Reference-Format}
\bibliography{bib-main-2017,bib-new,../papers/read/annot.bib,new,thisbib}

\ifappendix
\appendix
\section{Scheduling with Fairness}\label{app:fair}

\subsection{The Fair and Prompt Scheduling Principle.}
To avoid starvation, we introduce a notion of fairness, in the form of
a parameter we call the {\em fairness criterion}.
A fairness criterion $\fc$
is a discrete probability distribution over priorities: a mapping from
priorities to real numbers in the range $[0, 1]$, summing to 1.
When threads of {\em all priorities} are present in the system, the scheduler
should devote, on average, the fraction~$\fc(\prio)$ of
processor cycles to threads at priority~$\prio$.
When a particular priority is unavailable (i.e. has
no threads available in the system), it ``donates'' its cycles to the highest
available priority.
This policy is flexible enough to encode many
application-specific scheduling policies.
For example, if $\top$ is the highest priority, we can encode a form of
prompt scheduling in which as many processors as possible are devoted to the
highest-priority work available, followed by the next-highest, and so on.
The fairness criterion for such a policy would be~$\fc(\top) = 1$ and
$\fc(\prio) = 0$ for all $\prio \neq \top$.

A {\em fair and prompt schedule}, parameterized by a fairness criterion~$\fc$,
is a schedule that adheres to the principle described intuitively above.
At each time step, processors are assigned to priorities probabilistically
according to the distribution~$\fc$
\footnote{In the continuous limit as~$P$
  approaches~$\infty$, this is equivalent to simply dividing the processors
  according to~$\fc$.}.
Processors attempt to execute a ready vertex at their assigned priority.
Processors that are unable to execute a vertex at their assigned priority
default to the ``prompt'' policy and execute the highest-priority ready vertex.

One may think of a schedule as a form of pebbling: if~$P$ processors are
available, at each time step, place up to~$P$ pebbles on ready vertices until
all vertices have been pebbled.
In the pebbling analogy, executing a program using a fair and prompt schedule
may be seen as pebbling a graph by drawing up to~$P$ pebbles at a time at
random from an infinite bag of pebbles.
Each pebble is colored, and each color is associated with a priority.
The colored pebbles in the bag are distributed according to the fairness
criterion~$\fc$.
When pebbles are drawn at a time step, we attempt to place each one on a
vertex of the appropriate priority.
Any pebbles that are left are placed on the highest-priority vertices, then
the next-highest, and so on.

\subsection{Bounding Response Time}
We are now ready to bound the response time of threads in fair and prompt
schedules using cost metrics which we will now define.

The {\em priority work}~$\prioworkof{\graph}{\prio}$ of a graph~$\graph$ at
a priority~$\prio$ is defined as the number of vertices in the graph at
priority~$\prio$:

\[\prioworkof{\graph}{\prio} =
|\{u \in \graph \mid \uprio{\graph}{u} = \prio \}|\]

Recall that the response time should depend only on work at a higher,
equal or unrelated priority to the priority of the thread being observed.
Because this set of priorities will come up in many places while analyzing
response time, we use the notation~$\psnle{\prio}$ to refer to the set of
priorities not less than~$\prio$.
We use two convenient shorthands related to summing quantities over the set
$\psnle{\prio}$.
\[
\begin{array}{r c l}
  \priowork{\psnle{\prio}}{\graph} & \defeq &
  \sum_{\nple{\prio'}{\prio}} \priowork{\prio'}{\graph}\\
  \fc(\psnle{\prio}) & \defeq &
  \sum_{\nple{\prio'}{\prio}} \fc(\prio')
\end{array}
\]

Theorem~\ref{thm:gen-brent-fair} bounds the expected response time (because fair and
prompt schedules are defined probabilistically, the respones time can
only be bounded in expectation) of a thread based on these quantities
which depend only on the fairness criterion and the work and span of
high-priority vertices.

\begin{thm}\label{thm:gen-brent-fair}
  Let~$\graph$ be a well-formed DAG.
  For any fair and prompt schedule on~$P$ processors and
  any~$\ple{\prio'}{\prio}$,
  \[
  E[\resptimeof{a}] \leq \frac{1}{\fc(\psnle{\prio'})}
  \left(\frac{\prioworkof{\compwork{}{a}}{\psnle{\prio'}}}{P} +
      \longp{\compwork{}{a}}{a}\right)
      \]
\end{thm}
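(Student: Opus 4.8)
The plan is to mimic the token-counting argument from the proof of \thmref{gen-brent}, upgrading it to account for the probabilistic processor assignment of a fair and prompt schedule. As there, let $s$ and $t$ be the first and last vertices of thread~$a$, and restrict attention to the $\resptimeof{a}$ steps running from just after $s$ becomes ready through the step in which $t$ executes. Every vertex executed during this window lies in $\compwork{}{a}$ (ancestors of $s$ have already run; descendants of $t$ cannot run until $t$ does). Call a step \emph{span-decreasing} if $\longp{\graph \setminus \exec{j+1}}{a} < \longp{\graph \setminus \exec{j}}{a}$ and \emph{span-preserving} otherwise; exactly as in \thmref{gen-brent}, there are at most $\longp{\compwork{}{a}}{a}$ span-decreasing steps.

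The key new lemma is: on a span-preserving step~$j$, \emph{every} processor that the schedule assigns to a priority in $\psnle{\prio'}$ executes a vertex of priority in $\psnle{\prio'}$. To see this, since step~$j$ is span-preserving there is a path of the current maximum length to $t$ all of whose vertices remain unexecuted at the end of step~$j$; its first vertex~$u^*$ is therefore ready at step~$j$, is not executed there, and --- being an ancestor of~$t$ but, within $\compwork{}{a}$, not an ancestor of~$s$ --- has priority $\ple{\prio}{\uprio{\graph}{u^*}}$ by well-formedness (\defref{wellformed}). Taking the interesting case $\plt{\prio'}{\prio}$, this forces $\nple{\uprio{\graph}{u^*}}{\prio'}$. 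Now a processor assigned some priority $\prio'' \in \psnle{\prio'}$ either executes a ready vertex of priority $\prio''$, which is trivially in $\psnle{\prio'}$, or finds none and defaults to the prompt rule, executing a maximal ready vertex~$m$; since $u^*$ is ready with $\nple{\uprio{\graph}{u^*}}{\prio'}$, no maximal ready vertex can have priority $\preceq \prio'$ (such an $m$ would be strictly below $u^*$, contradicting maximality), so $m$ too has priority in $\psnle{\prio'}$. The boundary case $\prio' = \prio$ needs a small separate adjustment, since then $u^*$ may sit exactly at priority $\prio$.

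With the lemma in hand the rest is bookkeeping. Let $P_j$ be the number of processors the fair schedule assigns to priorities in $\psnle{\prio'}$ at step~$j$; by construction these assignments are drawn independently across steps with $E[P_j] = \fc(\psnle{\prio'}) \cdot P$. Summing the lemma over span-preserving steps, the number of $\psnle{\prio'}$-priority vertices executed during the window is at least $\sum_{j \text{ span-preserving}} P_j$, and it is at most $\prioworkof{\compwork{}{a}}{\psnle{\prio'}}$; combined with the trivial bound $\sum_{j \text{ span-decreasing}} P_j \le P \cdot \longp{\compwork{}{a}}{a}$, we obtain the \emph{pathwise} inequality
\[
\sum_{j=1}^{\resptimeof{a}} P_j \;\le\; \prioworkof{\compwork{}{a}}{\psnle{\prio'}} + P \cdot \longp{\compwork{}{a}}{a}.
\]
Since the event $\{\resptimeof{a} \ge j\}$ --- whether step~$j$ occurs at all --- is determined by the execution through the end of step~$j-1$ and hence is independent of the fresh assignment~$P_j$, Wald's identity gives $E\!\left[\sum_{j=1}^{\resptimeof{a}} P_j\right] = E[\resptimeof{a}] \cdot \fc(\psnle{\prio'}) \cdot P$. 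Taking expectations of the pathwise inequality and dividing by $\fc(\psnle{\prio'}) \cdot P$ yields the claimed bound.

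I expect the main obstacle to be the key lemma of the second paragraph: getting the interaction between well-formedness, the partial order, and the ``default to prompt'' clause exactly right so that a span-preserving step really does force \emph{all} $\psnle{\prio'}$-assigned processors (not merely some of them) to do $\psnle{\prio'}$-work, and cleanly disposing of the $\prio' = \prio$ corner. Once that is settled, the Wald step is what converts the deterministic $W/P + S$ bound into the $1/\fc(\psnle{\prio'})$-inflated bound essentially for free.
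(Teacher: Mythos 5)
Your proposal is correct and follows essentially the same route as the paper's proof: the same window from the step $s$ becomes ready to the step $t$ executes, the same use of well-formedness plus promptness to show that a processor assigned a priority in $\psnle{\prio'}$ can fail to execute $\psnle{\prio'}$-work only on a step that shortens the critical path to $t$ (your key lemma is the contrapositive of the paper's claim that every step contributing a ``low'' token decreases $\longp{\compwork{}{a}}{a}$), and the same decomposition into a priority-work term over $P$ plus an $a$-span term. The one substantive difference is that you extract the expectation via Wald's identity over the stopping time $\resptimeof{a}$, whereas the paper simply asserts $B_l + B_h = \fc(\psnle{\prio'})\,(B_l+B_h+B_f)$ ``by fairness''; your version is the more careful treatment of the probabilistic processor assignment, and the $\prio'=\prio$ corner you flag (where the critical vertex may sit at priority exactly $\prio'$, which is excluded from $\psnle{\prio'}$) is likewise present but glossed over in the paper's own argument.
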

\begin{proof}
  Let~$s$ and~$t$ be the first and last vertices of~$a$, respectively.
  Consider the portion of the schedule from the step in which~$s$ is ready
  (exclusive) to the step in which~$t$ is executed (inclusive).
  At each step, let~$P_\fc$ be the number of processors attempting to work on
  vertices of priority in~$\psnle{\prio'}$.
  For each processor at each step, place a token in one of three buckets.
  If the processor is attempting to work at a priority not less than~$\prio'$,
  but is unable to, place a token in the ``low'' bucket~$B_l$.
  If it attempting to work at a priority not less than~$\prio'$ and succeeds,
  place a token in the ``high'' bucket~$B_h$.
  If it is attempting to work at a priority less than~$\prio'$, place a
  token in the ``fair'' bucket~$B_f$.
  Because~$P$ tokens are placed per step,
  we have $\resptimeof{a} = \frac{1}{P}(B_l + B_h + B_f)$,
  where~$B_l$, $B_h$
  and~$B_f$ are the number of tokens in the buckets after~$t$ is executed.

  Let~$\Sigma = \fc(\psnle{\prio'})$.
  By fairness, we have $B_l + B_h = \Sigma(B_l + B_h + B_f)$.
  Thus,
  \[\resptimeof{a} = \frac{1}{P\Sigma}(B_l + B_h)\]
  Each token in~$B_h$ corresponds to work done at priority not less
  than~$\prio'$, and thus~$B_h \leq \prioworkof{\graph}{\psnle{\prio'}}$, so
  \[\resptimeof{a} \leq \frac{1}{\Sigma}\left(
  \frac{\prioworkof{\graph}{\psnle{\prio'}}}{P}
  + \frac{B_l}{P}\right)\]
  We now need only bound~$B_l$ by~$P\cdot \longp{\compwork{}{a}}{a}$.

  Let step 0 be the step after~$s$ is ready, and let~$\exec{j}$ be the
  set of vertices that have been executed at the start of step~$j$.
  Consider a step~$j$ in which a token is added to~$B_l$.
  For any path ending at~$t$ consisting of vertices
  of~$\graph \setminus \exec{j}$, the path starts at a vertex that is ready
  at the beginning of step~$j$.
  By the definition of well-formedness, this
  vertex must have priority greater than~$\prio$ and is therefore executed
  in step~$j$ by the prompt principle.
  Thus, the length of the path decreases by 1 and so
  $\longp{\graph \setminus \exec{j+1}}{t} =
  \longp{\graph \setminus \exec{j}}{t} - 1$.
  The maximum number of such steps is
  thus~$\longp{\graph \setminus \exec{0}}{a}$, and
  so~$B_l \leq P\cdot \longp{\graph \setminus \exec{0}}{a}$.
  Because~$\noancs{s} \supset \graph \setminus \exec{0}$,
  any path excluding vertices
  in~$\exec{0}$ is contained in~$\noancs{a}$, and
  $\longp{\graph \setminus \exec{0}}{a} \leq
  \longp{\noancs{a}}{a}$, so
  $B_l \leq P\cdot \longp{\noancs{a}}{a} = P\cdot \longp{\compwork{}{a}}{a}$.
\end{proof}

One might wonder about the purpose of the univerally quantified
priority~$\prio'$.
We illustrate its use by considering two extremal cases.
First, let~$\prio' = \prio$.
Doing so yields the bound that might intuitively be expected: the response time
of a thread at priority~$\prio$ depends on the work and span at priorities
not less than~$\prio$, inflated somewhat by the fairness criterion because
only some of the cycles are devoted to work at priorites higher than~$\prio$.
This bound is correct and frequently useful, but will diverge in
the case that~$\fc(\psnle{\prio}) = 0$.
In such cases, it may be worthwhile to look at a bound which considers cycles
donated from priorities lower than~$\prio$.
This increases the factor~$\frac{1}{\fc(\psnle{\prio})}$ at the cost of
having to consider more vertices as competitors.

In the extreme case, we set~$\prio' = \bot$.
As above, the multiplicative factor for the fairness criterion goes to 1,
but~$\prioworkof{\compwork{}{a}}{\psnle{\prio'}}$ becomes all of the work,
at any priority, that may happen in parallel with thread~$a$.
Intuitively, this says that if we ignore priority and simply run a greedy
schedule, thread~$a$ will complete eventually but may, in the worst case,
have to wait for all of the other work in the system to complete.

\fi

\end{document}